\theoremstyle{definition}
\newtheorem{definition}{Definition}
\theoremstyle{plain}
\newtheorem{lemma}{Lemma}
\newtheorem{theorem}[lemma]{Theorem}
\newtheorem{criterion}{Criterion}
\theoremstyle{definition}
\newcommand{\eC}{\mathscr{C}}
\newcommand{\eL}{\mathscr{L}}
\newcommand{\eN}{\mathscr{N}}
\newcommand{\eQ}{\mathscr{Q}}
\newcommand{\eX}{\mathscr{X}}
\newcommand{\eJ}{\mathscr{J}}
\newcommand{\Ggr}{\mathcal{G}}
\newcommand{\Hgr}{\mathcal{H}}
\newcommand{\Kgr}{\mathcal{K}}
\newcommand{\Mgr}{\mathcal{M}}
\newcommand{\Ngr}{\mathscr{N}}
\newcommand{\g}{\mathfrak{g}}
\newcommand{\h}{\mathfrak{h}}
\renewcommand{\a}{\mathfrak{a}}
\renewcommand{\b}{\mathfrak{b}}
\renewcommand{\d}{\partial}
\newcommand{\gl}{\mathfrak{gl}}
\newcommand{\p}{\mathfrak{p}}
\renewcommand{\c}{\mathfrak{c}}
\newcommand{\e}{\mathfrak{e}}
\renewcommand{\r}{\mathfrak{r}}
\newcommand{\n}{\mathfrak{n}}
\newcommand{\so}{\mathfrak{so}}
\newcommand{\su}{\mathfrak{su}}
\renewcommand{\k}{\mathfrak{k}}
\newcommand{\s}{\mathfrak{s}}
\newcommand{\m}{\mathfrak{m}}
\renewcommand{\u}{\mathfrak{u}}
\renewcommand{\H}{H}
\renewcommand{\C}{\boldsymbol{C}}
\newcommand{\C}{\boldsymbol{C}}
\newcommand{\J}{\boldsymbol{J}}
\newcommand{\B}{\boldsymbol{B}}
\newcommand{\Pt}{\widetilde{\boldsymbol{P}}}
\newcommand{\Bt}{\widetilde{\boldsymbol{B}}}
\newcommand{\be}{\boldsymbol{e}}
\renewcommand{\P}{\boldsymbol{P}}
\newcommand{\V}{\boldsymbol{V}}
\renewcommand{\Re}{\operatorname{Re}}
\newcommand{\Bbar}{\overline{\B}}
\newcommand{\Pbar}{\overline{\P}}
\newcommand{\barH}{\overline{H}}
\newcommand{\barP}{\overline{P}}
\newcommand{\Xbar}{\overline{X}}
\newcommand{\Ybar}{\overline{Y}}
\newcommand{\Zbar}{\overline{Z}}
\renewcommand{\AA}{\mathbb{A}}
\newcommand{\EE}{\mathbb{E}}
\newcommand{\MM}{\mathbb{M}}
\newcommand{\PP}{\mathbb{P}}
\newcommand{\RR}{\mathbb{R}}
\newcommand{\ZZ}{\mathbb{Z}}
\renewcommand{\SS}{\mathbb{S}}
\newcommand{\HH}{\mathbb{H}}
\newcommand{\CC}{\mathbb{C}}
\newcommand{\Aut}{\operatorname{Aut}}
\newcommand{\GL}{\operatorname{GL}}
\newcommand{\Ad}{\operatorname{Ad}}
\newcommand{\SO}{\operatorname{SO}}
\newcommand{\SU}{\operatorname{SU}}
\newcommand{\Hom}{\operatorname{Hom}}
\newcommand{\RRnZ}{\RR^{\neq 0}}
\newcommand{\mink}{S1}
\newcommand{\ds}{S2}
\newcommand{\ads}{S3}
\newcommand{\euc}{S4}
\newcommand{\sph}{S5}
\newcommand{\hyp}{S6}
\newcommand{\gal}{S7}
\newcommand{\dsg}{S8}
\newcommand{\tdsg}{S9}
\newcommand{\adsg}{S10}
\newcommand{\tadsg}{S11}
\newcommand{\twodgal}{S12}
\newcommand{\car}{S13}
\newcommand{\dsc}{S14}
\newcommand{\adsc}{S15}
\newcommand{\flc}{S16}
\newcommand{\xone}{S17}
\newcommand{\xtwo}{S18}
\newcommand{\xthree}{S19}
\newcommand{\xfour}{S20}
\newcommand{\st}{A21}
\newcommand{\tst}{A22}
\newcommand{\athree}{A23}
\newcommand{\twoda}{A24}
\newcommand{\cm}{\checkmark}
\newcommand{\tm}{ }
\newcommand{\mn}{ }
\newcommand{\zro}{ }
\newcommand{\zLC}{\mathsf{LC}}
\newcommand{\zAdSC}{\mathsf{AdSC}}
\newcommand{\zdSC}{\mathsf{dSC}}
\newcommand{\zdS}{\mathsf{dS}}
\newcommand{\zAdS}{\mathsf{AdS}}
\newcommand{\zC}{\mathsf{C}}
\newcommand{\zG}{\mathsf{G}}
\newcommand{\zS}{\mathsf{S}}
\newcommand{\zTS}{\mathsf{TS}}
\newcommand{\zAdSG}{\mathsf{AdSG}}
\newcommand{\zdSG}{\mathsf{dSG}}
\newcommand{\ztAdSG}{\mathsf{AdSG}}
\newcommand{\ztdSG}{\mathsf{dSG}}
\definecolor{gris}{rgb}{0.5,0.5,0.5}
\newcommand{\zero}{{\color{gris}0}}
\newcommand{\choice}[2]{\substack{\mathrlap{#1}\\ \mathrlap{#2}}}
\begin{document}

\title{Spatially isotropic homogeneous spacetimes}
\author[Figueroa-O'Farrill]{José Figueroa-O'Farrill}
\author[Prohazka]{Stefan Prohazka}
\address[JMF]{Maxwell Institute and School of Mathematics, The University
  of Edinburgh, James Clerk Maxwell Building, Peter Guthrie Tait Road,
  Edinburgh EH9 3FD, Scotland, United Kingdom}
\email{\href{mailto:j.m.figueroa@ed.ac.uk}{j.m.figueroa@ed.ac.uk}}
\address[SP]{Université Libre de Bruxelles and International Solvay Institutes,
  Physique Mathématique des Interactions Fondamentales, Campus Plaine
  - CP~231, B-1050 Bruxelles, Belgium, Europe}
\email{\href{mailto:stefan.prohazka@ulb.ac.be}{stefan.prohazka@ulb.ac.be}}
\begin{abstract}
  We classify simply-connected homogeneous ($D+1$)-dimensional
  spacetimes for kinematical and aristotelian Lie groups with
  $D$-dimensional space isotropy for all $D\geq 0$.  Besides
  well-known spacetimes like Minkowski and (anti) de Sitter we find
  several new classes of geometries, some of which exist only for
  $D=1,2$.  These geometries share the same amount of symmetry
  (spatial rotations, boosts and spatio-temporal translations) as the
  maximally symmetric spacetimes, but unlike them they do not
  necessarily admit an invariant metric.  We determine the possible
  limits between the spacetimes and interpret them in terms of
  contractions of the corresponding transitive Lie algebras. We
  investigate geometrical properties of the spacetimes such as whether
  they are reductive or symmetric as well as the existence of
  invariant structures (riemannian, lorentzian, galilean, carrollian,
  aristotelian) and, when appropriate, discuss the torsion and
  curvature of the canonical invariant connection as a means of
  characterising the different spacetimes.
\end{abstract}
\dedicatory{In memoriam Andrew Ranicki}
\thanks{EMPG-18-01}
\maketitle
\tableofcontents

\section{Introduction}
\label{sec:introduction}

\subsection{Motivation and contextualisation}
\label{sec:motiv-cont}

The laws of physics are to a good approximation invariant under
spatial rotations, spatio-temporal translations and inertial
transformations (\emph{boosts}).  This leads to space and
time homogeneity and space isotropy, but does not determine their
precise realisation completely.  This freedom is for example evident
from the way boosts act on space and time (e.g., compare the galilean
and Poincaré boosts) and through the existence of curvature (e.g.,
compare Minkowski and de Sitter spacetimes).  This leads to the
fundamental question~\cite{MR0238545}:
\begin{center}
  \emph{What are the possible kinematical symmetries of space and
    time?}
\end{center}
A partial mathematical answer to this question is the classification
of kinematical Lie algebras up to isomorphism, which started with the
seminal work of Bacry and Lévy-Leblond \cite{MR0238545} and of Bacry
and Nuyts \cite{MR857383}, who classified kinematical algebras (with
space isotropy) in the classical case of $3+1$ dimensions, and
culminated recently with a classification in arbitrary dimension using
techniques in deformation theory \cite{Figueroa-OFarrill:2017ycu,
  Figueroa-OFarrill:2017tcy, Andrzejewski:2018gmz}. The reason this
classification is only a partial answer is that the isomorphism type
of the Lie algebra is too coarse an invariant: it does not determine
uniquely the geometric realisation of the Lie algebra. The ur-example
is the Lorentz Lie algebra $\so(D+1,1)$, which acts transitively and
isometrically both on de~Sitter spacetime and on hyperbolic space in
$D+1$ dimensions, and, in what is possibly a new twist on an old tale,
we will see that it also acts transitively on a carrollian spacetime
of the same dimension.

The first step towards a complete answer to the fundamental question
was taken already in the original paper \cite{MR0238545} of Bacry and
Lévy-Leblond.  Although restricted to $3+1$ dimensions and to
spacetimes admitting parity and time-reversal transformations, they
already distinguish between the abstract Lie algebras and their
geometric realisations on homogeneous spacetimes, arriving at a list
of eleven possible kinematics.  Our more refined analysis in this
paper reduces that list to ten, since the para-galilean and static
kinematical Lie algebras lead to isomorphic homogeneous aristotelian
spacetimes.  In addition, we drop the requirement of parity or
time-reversal symmetries and we work in arbitrary (positive) dimension
$D+1$.

More precisely, in this paper we give a more complete answer to the
fundamental question by classifying the geometric realisations of
kinematical Lie algebras on simply-connected homogeneous spacetimes.
The classification we present in this paper, while encompassing the
classical geometries like (anti) de Sitter, Minkowski, galilean and
carrollian spacetimes and providing a way to systematically understand
their relations, will also uncover new spacetimes and their
connections to the ones just mentioned.

By comparison to the seminal works~\cite{MR0238545,MR857383}, the
novelty of our approach is predicated on the following features:
\begin{description}
\item[Geometry] Although our study departs from the classification of
  kinematical (and aristotelian) Lie algebras, our work focuses on
  classifying homogeneous spaces.  This is an important distinction
  because, as we will see (see, e.g.,
  Table~\ref{tab:LAs-to-spacetimes}), the same Lie algebra may act
  transitively on different spacetimes, while different Lie algebras
  may act transitively on the same spacetime.  For example, it follows
  from a careful analysis that despite there being a richer set of
  isomorphism classes of kinematical Lie algebras for $D=3$ than for
  $D>3$, there are no uniquely four-dimensional kinematical
  homogeneous spacetimes.
\item[Parity and time reversal] We relax the ``by no means compelling''
  restriction of parity and time-reversal invariance of the
  homogeneous spacetimes and, in so doing, we uncover novel kinematical
  spacetimes.  The possibility of dropping this restriction was
  already noted in \cite{MR0238545}
  and was dropped at the Lie algebraic level in \cite{MR857383}, where
  it was observed that every kinematical Lie algebra (with $D=3$) acts
  transitively on some four-dimensional homogeneous spacetime, but they
  stopped short of investigating the precise relationship between the
  Lie algebras and the homogeneous spacetimes.  As we will see, the
  relation is rather intricate, as illustrated, for example, in
  Table~\ref{tab:LAs-to-spacetimes}.
\item[Dimension] We go beyond (both above and below) $3+1$
  dimensions.  Our analysis is valid for any (positive) spacetime
  dimension $D+1$.  Whereas the case of $3+1$ dimensions turns out
  (after some detailed analysis) to be already generic, in low
  dimensions ($D \leq 2$) the situation is more involved and the
  classification of two- and three-dimensional homogeneous spacetimes
  differs markedly from that in generic dimension.
\end{description}

Homogeneous kinematical spacetimes are known to play an important rôle
in physics.  For example Minkowski and (anti) de Sitter spacetimes
are crucial in high energy physics, general relativity and
cosmology and many other spacetimes arise from them via
limits.  These limits often induce contractions of their
symmetry algebras.  It is therefore not surprising that they
too arise in various areas of physics.  Lie groups and their
homogeneous spaces have a plethora of applications
(representation theory, coadjoint orbits,...) and are ubiquitous in
physics (classical mechanics, hydrodynamics, cosmology, ...) and one
might, therefore, hope that the same is true for our (novel)
spacetimes. These might be hard to foresee at present, after all in
\cite{MR0238545} it was deemed that ``the physical interest of [the
Carroll groups] is very much reduced'', whereas there is no lack of
interest in the Carroll group at present.  Nevertheless we wish to
highlight interesting applications and areas where kinematical
spacetimes and their Lie algebras do arise:
\begin{description}
\item[Gauge/gravity duality] Anti de~Sitter spacetime (AdS) has
  been the focus of substantial interest due to the conjectured duality to
  conformal field theory (CFT)~\cite{Maldacena:1997re}.  Since AdS is a
  kinematical spacetime it might be tempting to see if this relation
  generalises to possible non-AdS/non-CFT dualities, especially since
  many of the kinematical spacetimes naturally arise as limits of
  AdS.  Indeed, one of the main motivations for the study of
  kinematical spacetimes is to explore possible new holographies
  beyond AdS/CFT.  Kinematical spacetimes might arise either as bulk
  spacetimes, similarly to Schrödinger~\cite{Son:2008ye, Balasubramanian:2008dm}
  and Lifshitz~\cite{Kachru:2008yh} spacetimes, or as geometries to
  which bulk theories couple~\cite{Christensen:2013lma,
    Christensen:2013rfa}.  For reviews see,
  e.g.,~\cite{Taylor:2015glc, Hartnoll:2016apf}.

  Furthermore, homogeneous spaces have already shown their usefulness
  in holographic setups beyond AdS/CFT, see e.g.,
  \cite{SchaferNameki:2009xr, Jottar:2010vp, Bagchi:2010xw,
    Duval:2012qr, Grosvenor:2017dfs} and therefore one might
  anticipate further interesting results based on analyses of our
  novel spacetimes.

\item[Condensed matter] Besides holographic applications,
  non-relativistic spaces and especially Newton--Cartan geometry has
  been a useful tool in the construction of effective field theories
  for quantum Hall states~\cite{Son:2013rqa, Geracie:2014nka,
    Jensen:2014aia} where coset constructions provide a systematic tool
  to implement symmetries (see, e.g.,~\cite{Brauner:2014jaa}).  It is often the
  case that the underlying symmetries are given by the centrally
  extended Galilei algebra: that is, the Bargmann algebra and in a
  follow-up paper we will present a classification of homogeneous
  spacetimes of Lie algebras generalising the Bargmann algebra.
 
\item[Cosmology] The classification presented in this paper can be
  understood as an extension of the classification of maximally
  symmetric lorentzian spacetimes. When imposing the restriction of
  the existence of an invariant lorentzian metric on the spacetimes we
  indeed recover the well known result that they consist of the (anti)
  de~Sitter and Minkowski spacetimes.\footnote{In general, the
    Friedmann–Lemaître–Robertson–Walker cosmologies do not correspond
    to homogeneous spaces since they are only homogeneous in space and
    not in time.} Dropping the assumption of the existence of a
  lorentzian metric, but keeping the same amount of symmetry and
  especially space isotropy and space and time homogeneity, basically
  leads to this more general classification. Like the lorentzian
  geometries, the other spacetimes represent empty universes and might
  be relevant for approximations of the de~Sitter universe, see e.g.,
  \cite{Aldrovandi:1998im}.
\item[Ultra-relativistic structures] The ``absolute space'' limit of
  lorentzian spacetimes leads to carrollian structures, where the
  metric is degenerate and space is absolute. It is closely related to
  the strong coupling limit of general
  relativity~\cite{Henneaux:1979vn}, arises as a limit of duality
  invariant theories~\cite{Bunster:2012hm} and has recently been
  connected to asymptotically flat spacetimes~\cite{Duval:2014uva}.
  The non-flat carrollian spacetimes have attracted less attention but
  might lead to interesting generalisations.
\item[New theories] Lie algebras and their associated spacetimes are a
  natural starting point for the construction of novel theories.
  Gauging the symmetries of various kinematical algebras (and their
  central extensions) has been investigated in recent years, e.g., in
  \cite{Andringa:2010it,Hartong:2015xda,Bergshoeff:2017btm}.
  Furthermore, a thorough analysis of connections and dynamical
  trajectories of some kinematical structures has been
  undertaken~\cite{Bekaert:2014bwa,Bekaert:2015xua}.
 
  A distinguished class of theories are those which are governed by an
  action principle.  Here $2+1$ dimensions seem especially fruitful,
  where theories based on Chern--Simons actions and (generalisations of) kinematical algebras
  have been constructed~\cite{Papageorgiou:2009zc,
    Papageorgiou:2010ud, Bergshoeff:2016lwr, Hartong:2016yrf,
    Bergshoeff:2016soe, Bergshoeff:2017btm,Joung:2018frr}.  Following the seminal
  work of \cite{LeBellac1973}, recently
  galilean~\cite{Duval:2014uoa,Bagchi:2014ysa,Festuccia:2016caf} and
  carrollian~\cite{Duval:2014uoa,Basu:2018dub} electrodynamics and
  gravity~\cite{Bergshoeff:2017btm,Hansen:2019pkl} and their possible
  action principles have been investigated. Given the new results of
  this work there remains much room for further explorations.
\end{description}

Having motivated our interest on kinematical Lie algebras and their
spacetimes, we now give a somewhat detailed overview of the contents
of the paper.

\subsection{Overview of results}
\label{sec:overview-results}

One of the main results in this paper is the classification of
(simply-connected) spacetimes which extend the class of maximally
symmetric lorentzian manifolds familiar from general relativity.  In
this section we will review what is already known about this
classification and will summarise how the results obtained in this
paper complete that picture.  Although in the paper we also consider
spacetimes which are unique to two and three dimensions, the bulk of
the discussion in this overview section will focus on those spacetimes which
exist in all dimensions; although we will mention at the end how these
results are modified in low dimension.

Our starting points are the de~Sitter spacetimes.  These are lorentzian
spacetimes which are locally isometric to quadric hypersurfaces in
pseudo-euclidean spaces.  Concretely, \emph{de~Sitter} and
\emph{anti~de~Sitter spacetimes} in $D+1$ dimensions with radius of
curvature $R$ are locally isometric, respectively, to the quadrics
\begin{equation}
  x_1^2 + x_2^2 + \cdots + x_D^2 + x_{D+1}^2 - x_{D+2}^2 = R^2
  \qquad\text{in $\RR^{D+1,1}$,}
\end{equation}
 and
\begin{equation}
  x_1^2 + x_2^2 + \cdots + x_D^2 - x_{D+1}^2 - x_{D+2}^2 = - R^2
  \qquad\text{in $\RR^{D,2}$.}
\end{equation}
More precisely, the de~Sitter spacetimes are the simply-connected
universal covers of these quadrics.  Taking the limit $R \to \infty$
is equivalent to the zero curvature limit in which we recover
Minkowski spacetime: the real affine space $\AA^{D+1}$ with a metric
which, when expressed relative to affine coordinates, is given by
\begin{equation}
  dx_1^2 + dx_2^2 + \cdots + dx_D^2 - c^2 dx_{D+1}^2,
\end{equation}
where we have introduced the speed of light $c$. We may now take the
\emph{non-relativistic limit} (on the co-metric) in which
$c\to \infty$ or the \emph{ultra-relativistic limit} in which
$c \to 0$. In the former case we arrive at the \emph{galilean
  spacetime}, whereas in the latter, we arrive at the \emph{carrollian
  spacetime} \cite{MR0192900,Duval:2014uoa}. These spacetimes are no
longer lorentzian: the (co\nobreakdash-)metric becomes degenerate in
the limit, leading to a galilean and a carrollian structure,
respectively. This is not to say that on the underlying manifold of
such spacetimes one could not define a lorentzian metric, but simply
that any such metric would not be invariant under the kinematical
symmetries of the spacetime, in the way that the Minkowski metric is
Poincaré invariant.

Geometrically, the ultra- and non-relativistic limits can be
understood in terms of what they do to the light cone present in the
tangent space at any point in Minkowski spacetime, as depicted in
Figure~\ref{fig:lightcones}, where we see that the light cone
collapses to a timelike line or a spacelike hyperplane in the ultra-
and non-relativistic limits, respectively. Since the tangent spaces in
a lorentzian manifold are lorentzian vector spaces containing their
own light cones, we can consider these limits not just for Minkowski
spacetime, but for any lorentzian manifold. In particular, we can do
this with the de~Sitter spacetimes. The non-relativistic limits of the
de~Sitter spacetimes are the \emph{galilean (anti) de~Sitter}
spacetimes (also known as the \emph{Newton--Hooke} or
\emph{non-relativistic
  cosmological} spacetimes)~\cite{MR0238545, Derome1972,
  Carinena:1981nq,Gibbons:2003rv}, whereas the ultra-relativistic
limits are the \emph{carrollian (anti) de~Sitter} spacetimes (also
known as \emph{para-euclidean} and \emph{para-Minkowski} spacetimes)
\cite{MR0238545}. As in the case of Minkowski spacetimes, the limiting
spacetimes are no longer lorentzian, but have galilean and
carrollian structures, respectively.

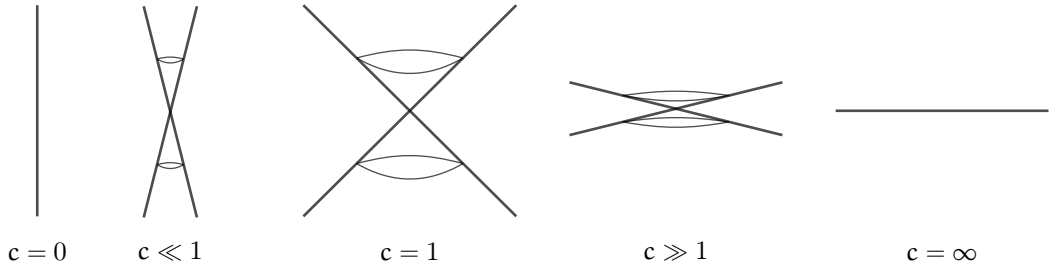
\begin{figure}[h!]
  \centering
  \begin{tikzpicture}[x=0.7cm,y=0.7cm]
    %
    %
    %
    %
    \coordinate (sw) at (6.,-1.);
    \coordinate (se) at (8.,-1.);
    \coordinate (ne) at (8.,1.);
    \coordinate (nw) at (6.,1.);
    \coordinate (usw) at (2.2503423410911236,-1.0203486648922608);
    \coordinate (use) at (2.7503423410911236,-1.0203486648922608); 
    \coordinate (une) at (2.7503423410911236,0.9796513351077392); 
    \coordinate (unw) at (2.2503423410911236,0.9796513351077392); 
    \coordinate (nsw) at (10.999892248241657,-0.21078008816633959);
    \coordinate (nse) at (12.999892248241657,-0.21078008816633959); 
    \coordinate (nne) at (12.999892248241657,0.28921991183366036);
    \coordinate (nnw) at (10.999892248241657,0.28921991183366036);
    %
    %
    \begin{scope}[line width=0.5pt,opacity=0.7,color=black]
      \draw (nw) to [out=330,in=210] (ne);
      \draw (sw) to [out=330,in=210] (se); 
      \draw (nw) to [out=15,in=165] (ne);      
      \draw (sw) to [out=15,in=165] (se);      
      \draw (unw) to [out=330,in=210] (une);
      \draw (usw) to [out=330,in=210] (use); 
      \draw (unw) to [out=15,in=165] (une);      
      \draw (usw) to [out=15,in=165] (use);      
      \draw (nnw) to [out=350,in=190] (nne);
      \draw (nsw) to [out=350,in=190] (nse);
      \draw (nnw) to [out=8,in=172] (nne);      
      \draw (nsw) to [out=8,in=172] (nse);      
    \end{scope}
    \begin{scope}[line width=1pt,opacity=0.7,color=black]
      \draw (5.,2.)-- (9.,-2.);
      \draw (9.,2.)-- (5.,-2.);
      \draw (9.999892248241657,0.5392199118336602)-- (13.999892248241657,-0.46078008816633964);
      \draw (13.999892248241657,0.5392199118336602) -- (9.999892248241657,-0.46078008816633964);
      \draw (3.0003423410911236,1.979651335107739)-- (2.0003423410911236,-2.020348664892261);
      \draw (2.0003423410911236,1.979651335107739)-- (3.0003423410911236,-2.020348664892261);
      \draw (15.,0.)-- (19.,0.);
      \draw (0.,2.)-- (0.,-2.);
    \end{scope}
    %
    %
    \node at (7,-3) [above] {$c=1$};
    \node at (2.5,-3) [above] {$c\ll 1$};
    \node at (0,-3) [above] {$c=0$};
    \node at (12,-3) [above] {$c\gg 1$};
    \node at (17,-3) [above] {$c=\infty$};
  \end{tikzpicture}
  \caption{Effect on light cone of ultra- (left) and non-relativistic (right) limits.}
  \label{fig:lightcones}
\end{figure}

Just like Minkowski spacetime is the zero-curvature limit of the
de~Sitter spacetimes, the galilean (resp.~carrollian) spacetime can be
obtained as a zero-curvature limit of the galilean (resp.~carrollian)
(anti) de~Sitter spacetimes.  These spacetimes are not lorentzian and
thus, in contrast to the Minkowski case, the curvature being taken to
zero is not the Riemann curvature of a (non-existing) Levi-Civita
connection.  Indeed, as we will see, these spacetimes are homogeneous
spaces of kinematical Lie groups and as homogeneous spaces they are
reductive and symmetric and hence in possession of a canonical
torsion-free invariant connection.  It is that connection whose
curvature is being sent to zero.

The galilean and carrollian symmetric spaces can also arise as limits
of the riemannian symmetric spaces.  The physical interpretation of
the riemannian analogues of the non- and ultra-relativistic limits of
euclidean space to galilean and carrollian spacetimes, respectively,
is not so clear.  There is no longer a light cone and hence no longer
a privileged timelike direction.  Nevertheless we may choose any
direction (all are equivalent, since the riemannian symmetric spaces
are isotropic) and rescale the metric along that direction or along
the perpendicular plane, and in this way arrive at the galilean and
carrollian spacetimes.  Neither the clock one-form in the galilean
spacetime nor the invariant vector field in the carrollian spacetime
are actually induced in the limit, so they have neither a minkowskian
nor a euclidean preferred interpretation.

The resulting picture (incomplete at this stage) is summarised in
Figure~\ref{fig:state-of-prior-art}, where
$\hyperlink{S1l}{\mathbb{M}}$, $\hyperlink{S13l}{\mathsf{C}}$ and
$\hyperlink{S7l}{\mathsf{G}}$ stand for Minkowski, carrollian and
galilean spacetimes, respectively; $\hyperlink{S4l}{\mathbb{E}}$,
$\hyperlink{S6l}{\mathbb{H}}$ and $\hyperlink{S5l}{\mathbb{S}}$ for
euclidean space, hyperbolic space and the round sphere, respectively;
\hyperlink{S3l}{$\mathsf{(A)dS}$} for (anti) de~Sitter spacetimes; and
$\hyperlink{S10l}{\mathsf{(A)dSG}}$ and \hyperlink{S15l}{$\mathsf{(A)dSC}$} for the galilean and
carrollian (anti) de~Sitter spacetimes, respectively. Diagonal arrows
are flat limits, whereas horizontal and vertical arrows are,
respectively, ultra- and non-relativistic limits.  Notice that whereas
the analogue of the ultra-relativistic limit of hyperbolic space is
carrollian AdS, the analogue of the non-relativistic limit is galilean
dS, and vice versa for the round sphere.

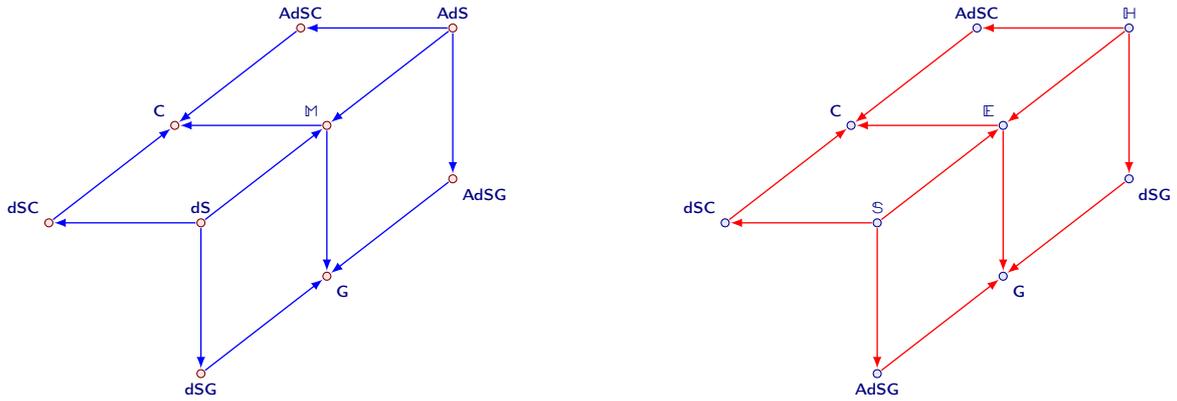
\begin{figure}[h!]
  \centering
  \begin{tikzpicture}[>=latex, shorten >=2pt, shorten <=2pt, x=1.0cm,y=1.0cm]
    %
    %
    %
    %
    \coordinate [label=above:{\tiny $\hyperlink{S2l}{\mathsf{dS}}$}] (ds) at (5.688048519056286,-0.5838170592960186);
    \coordinate [label=below:{\tiny $\hyperlink{S8l}{\mathsf{dSG}}$}] (dsg) at (5.688048519056286, -2.5838170592960186);
    \coordinate [label=above left:{\tiny $\hyperlink{S14l}{\mathsf{dSC}}$}] (dsc) at  (3.688048519056286,-0.5838170592960186);
    \coordinate [label=above left:{\tiny $\hyperlink{S13l}{\mathsf{C}}$}] (c) at (5.344024259528143,0.7080914703519907);
    \coordinate [label=above left:{\tiny $\hyperlink{S1l}{\mathbb{M}}$}] (m) at (7.344024259528143,0.7080914703519907);
    \coordinate [label=below right:{\tiny $\hyperlink{S7l}{\mathsf{G}}$}] (g) at  (7.344024259528143, -1.2919085296480093);
    \coordinate [label=above:{\tiny $\hyperlink{S3l}{\mathsf{AdS}}$}] (ads) at (9,2);
    \coordinate [label=below right:{\tiny $\hyperlink{S10l}{\mathsf{AdSG}}$}] (adsg) at (9,0);
    \coordinate [label=above:{\tiny $\hyperlink{S15l}{\mathsf{AdSC}}$}]  (adsc) at (7,2);
    %
    %
    \draw [->,line width=0.5pt,color=blue] (adsc) -- (c);
    \draw [->,line width=0.5pt,color=blue] (dsc) -- (c);
    \draw [->,line width=0.5pt,color=blue] (ads) -- (m);
    \draw [->,line width=0.5pt,color=blue] (adsg) -- (g);
    \draw [->,line width=0.5pt,color=blue] (dsg) -- (g);
    \draw [->,line width=0.5pt,color=blue] (ds) -- (m);
    \draw [->,line width=0.5pt,color=blue] (ds) -- (dsc);
    \draw [->,line width=0.5pt,color=blue] (ds) -- (dsg);
    \draw [->,line width=0.5pt,color=blue] (m) -- (c);
    \draw [->,line width=0.5pt,color=blue] (m) -- (g);
    \draw [->,line width=0.5pt,color=blue] (ads) -- (adsc);
    \draw [->,line width=0.5pt,color=blue] (ads) -- (adsg);
    %
    %
    \foreach \point in {ads,ds,adsc,dsc,adsg,dsg,m,g,c}
    \filldraw [color=red!50!black,fill=red!10!white] (\point) circle (1.5pt);
  \end{tikzpicture}\hspace{2cm}
  \begin{tikzpicture}[>=latex, shorten >=2pt, shorten <=2pt, x=1.0cm,y=1.0cm]
    %
    %
    %
    %
    \coordinate [label=above:{\tiny $\hyperlink{S5l}{\mathbb{S}}$}] (sph) at (5.688048519056286,-0.5838170592960186);
    \coordinate [label=below:{\tiny $\hyperlink{S10l}{\mathsf{AdSG}}$}] (adsg) at (5.688048519056286, -2.5838170592960186);
    \coordinate [label=above left:{\tiny $\hyperlink{S14l}{\mathsf{dSC}}$}] (dsc) at  (3.688048519056286,-0.5838170592960186);
    \coordinate [label=above left:{\tiny $\hyperlink{S13l}{\mathsf{C}}$}] (c) at (5.344024259528143,0.7080914703519907);
    \coordinate [label=above left:{\tiny $\hyperlink{S4l}{\mathbb{E}}$}] (e) at (7.344024259528143,0.7080914703519907);
    \coordinate [label=below right:{\tiny $\hyperlink{S7l}{\mathsf{G}}$}] (g) at  (7.344024259528143, -1.2919085296480093);
    \coordinate [label=above:{\tiny $\hyperlink{S6l}{\mathbb{H}}$}] (h) at (9,2);
    \coordinate [label=below right:{\tiny $\hyperlink{S8l}{\mathsf{dSG}}$}] (dsg) at (9,0);
    \coordinate [label=above:{\tiny $\hyperlink{S15l}{\mathsf{AdSC}}$}]  (adsc) at (7,2);
    %
    %
    \draw [->,line width=0.5pt,color=red] (adsc) -- (c);
    \draw [->,line width=0.5pt,color=red] (dsc) -- (c);
    \draw [->,line width=0.5pt,color=red] (h) -- (e);
    \draw [->,line width=0.5pt,color=red] (adsg) -- (g);
    \draw [->,line width=0.5pt,color=red] (dsg) -- (g);
    \draw [->,line width=0.5pt,color=red] (sph) -- (e);
    \draw [->,line width=0.5pt,color=red] (sph) -- (dsc);
    \draw [->,line width=0.5pt,color=red] (sph) -- (adsg);
    \draw [->,line width=0.5pt,color=red] (e) -- (c);
    \draw [->,line width=0.5pt,color=red] (e) -- (g);
    \draw [->,line width=0.5pt,color=red] (h) -- (adsc);
    \draw [->,line width=0.5pt,color=red] (h) -- (dsg);
    %
    %
    \foreach \point in {h,sph,adsc,dsc,adsg,dsg,e,g,c}
    \filldraw [color=blue!50!black,fill=blue!10!white] (\point) circle (1.5pt);
    %
    %
  \end{tikzpicture}
  \caption{Maximally symmetric spaces and their limits:
    non-relativistic (vertical), ultra-relativistic (horizontal) and
    flat (diagonal).}
  \label{fig:state-of-prior-art}
\end{figure}

One of the main results in this paper is to complete this picture to
the one illustrated by Figure~\ref{fig:generic-d-graph} (and also
Figures~\ref{fig:d=3-graph} and \ref{fig:d=2-graph} for lower
dimension) which includes the (simply-connected) homogeneous
spacetimes of all the kinematical Lie groups, with the exception of
the riemannian maximally symmetric spaces, whose inclusion might
obscure more than enlighten.  Much of this paper is devoted to
explaining that picture and describing how to arrive at it, but for
now let us describe briefly its salient features:
\begin{itemize}
\item The galilean de~Sitter spacetime \hyperlink{S8l}{$\zdSG$} is actually the
  unique symmetric point in a one-parameter family
  \hyperlink{S9l}{$\zdSG_\gamma$}, with $\gamma \in [-1,1]$, of reductive
  homogeneous spaces, distinguished by the torsion of the canonical
  connection, which vanishes at the symmetric point $\gamma = -1$.
\item Similarly, the galilean anti de~Sitter spacetime \hyperlink{S10l}{$\zAdSG$} is the
  unique symmetric point in a one-parameter family
  \hyperlink{S11l}{$\zAdSG_\chi$}, with $\chi \geq 0$, of reductive
  homogeneous spaces, distinguished by the torsion of the canonical
  connection, which vanishes at the symmetric point $\chi = 0$.
  Moreover, $\hyperlink{S11l}{\zAdSG_\infty} := \lim_{\chi\to \infty}
  \hyperlink{S11l}{\zAdSG_\chi} = \hyperlink{S9l}{\zdSG_1}$.
\item There is a non-reductive homogeneous spacetime $\hyperlink{S17l}{\zLC}$ of
  $\SO(D+1,1)$ with an invariant carrollian structure admitting a
  limit to the carrollian spacetime.  This will be shown to be
  isomorphic to the (future) light cone in Minkowski spacetime in one
  dimension higher, hence the notation.
\item There are several aristotelian homogeneous spacetimes, which are
  spacetimes without boosts:
  \begin{itemize}
  \item the static affine spacetime $\hyperlink{A21}{\mathsf{S}}$ to which all other
    spacetimes have limits,
  \item a torsional aristotelian spacetime $\hyperlink{A22}{\mathsf{TS}}$ corresponding to the
    group manifold of a non-abelian solvable Lie group, and
  \item the Einstein static universe
    \hyperlink{A23m}{$\RR \times \SS^D$} and its hyperbolic version
    \hyperlink{A23p}{$\RR \times \HH^D$}, which do not arise from
    kinematical groups for $D\neq 3$.
  \end{itemize}
\end{itemize}
In addition, although not depicted in
Figure~\ref{fig:generic-d-graph}, there are also the riemannian
maximally symmetric spaces (sphere \hyperlink{S5l}{$\SS^{D+1}$},
hyperbolic \hyperlink{S6l}{$\HH^{D+1}$} and euclidean
\hyperlink{S4l}{$\EE^{D+1}$}), whose rôle as spacetimes, due to
their compact ``boosts'', is questionable.

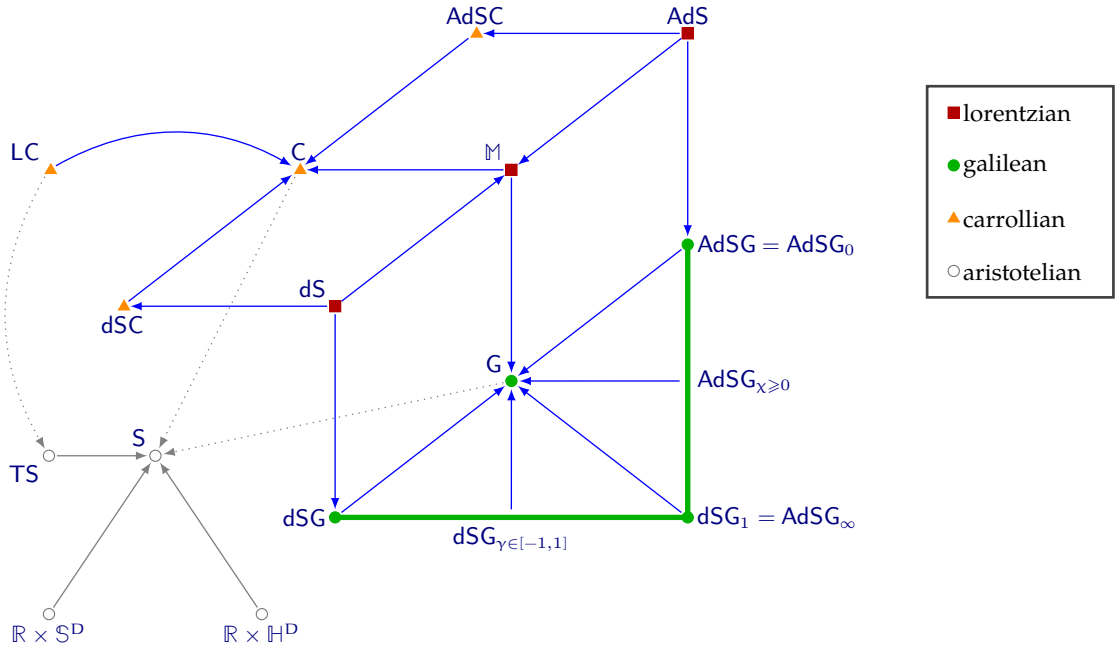
\begin{figure}[h!]
  \centering
  \begin{tikzpicture}[scale=1.4,>=latex, shorten >=3pt, shorten <=3pt, x=1.0cm,y=1.0cm]
    %
    %
    %
    %
    \coordinate [label=above left:{\hyperlink{S2l}{\small $\zdS$}}] (ds) at (5.688048519056286,-0.5838170592960186);
    \coordinate [label=left:{\hyperlink{S8l}{\small $\zdSG$}}] (dsg) at (5.688048519056286, -2.5838170592960186);
    \coordinate [label=right:{\hyperlink{S9l}{\small $\ztdSG_1 = \ztAdSG_\infty$}}] (dsgone) at (9, -2.5838170592960186);
    \coordinate [label=below:{\hyperlink{S14l}{\small $\zdSC$}}] (dsc) at  (3.688048519056286,-0.5838170592960186);
    \coordinate [label=above:{\hyperlink{S13l}{\small $\zC$}}] (c) at (5.344024259528143, 0.7080914703519907);
    \coordinate [label=above left:{\hyperlink{S1l}{\small $\MM$}}] (m) at (7.344024259528143,0.7080914703519907);
    \coordinate [label=above left:{\hyperlink{S7l}{\small $\zG$}}] (g) at  (7.344024259528143, -1.2919085296480093);
    \coordinate [label=above:{\hyperlink{S3l}{\small $\zAdS$}}] (ads) at (9,2);
    \coordinate [label=right:{\hyperlink{S10l}{\small $\zAdSG=\ztAdSG_0$}}] (adsg) at (9,0);
    \coordinate [label=above:{\hyperlink{S15l}{\small $\zAdSC$}}]  (adsc) at (7,2);
    \coordinate [label=above left:{\hyperlink{A21}{\small $\zS$}}] (s) at (4, -2);
    \coordinate [label=above left:{\hyperlink{S16l}{\small $\zLC$}}]  (flc) at (3, 0.7080914703519907);
    \coordinate [label=below left:{\hyperlink{A22}{\small $\zTS$}}]  (ts) at (3, -2); 
    \coordinate [label=below:{\hyperlink{A23p}{\small $\RR\times\SS^D$}}]  (esu) at (3, -3.5); 
    \coordinate [label=below:{\hyperlink{A23m}{\small $\RR\times\HH^D$}}]  (hesu) at (5, -3.5); 
    %
    %
    \coordinate [label=below:{\hyperlink{S9l}{\small $\ztdSG_{\gamma\in[-1,1]}$}}] (tdsg) at (7.344024259528143, -2.5838170592960186);
    \coordinate [label=right:{\hyperlink{S11l}{\small $\ztAdSG_{\chi\geq0}$}}] (tadsg) at (9, -1.2919085296480093);
    %
    %
    \draw [->,line width=0.5pt,dotted,color=gray] (c) -- (s);
    \draw [->,line width=0.5pt,dotted,color=gray] (g) -- (s);
    \draw [->,line width=0.5pt,color=blue] (dsgone) -- (g);
    \draw [->,line width=0.5pt,color=blue] (7.344024259528143,-2.5838170592960186) -- (g);
    \draw [->,line width=0.5pt,color=blue] (9, -1.2919085296480093) -- (g);
    %
    %
    \draw [->,line width=0.5pt,color=blue] (adsc) -- (c);
    \draw [->,line width=0.5pt,color=blue] (dsc) -- (c);
    \draw [->,line width=0.5pt,color=blue] (ads) -- (m);
    \draw [->,line width=0.5pt,color=blue] (adsg) -- (g);
    \draw [->,line width=0.5pt,color=blue] (dsg) -- (g);
    \draw [->,line width=0.5pt,color=blue] (ds) -- (m);
    \draw [->,line width=0.5pt,color=blue] (ds) -- (dsc);
    \draw [->,line width=0.5pt,color=blue] (ds) -- (dsg);
    \draw [->,line width=0.5pt,color=blue] (m) -- (c);
    \draw [->,line width=0.5pt,color=blue] (m) -- (g);
    \draw [->,line width=0.5pt,color=blue] (ads) -- (adsc);
    \draw [->,line width=0.5pt,color=blue] (ads) -- (adsg);
    %
    %
    \draw [->,line width=0.5pt,color=blue] (flc) to [out=30,in=150] (c);
    \draw [->,line width=0.5pt,dotted,color=gray] (flc) to [out=240,in=120] (ts);
    \draw [->,line width=0.5pt,color=gray] (ts) to (s); 
    \draw [->,line width=0.5pt,color=gray] (esu) to (s); 
    \draw [->,line width=0.5pt,color=gray] (hesu) to (s); 
    %
    %
    \begin{scope}[>=latex, shorten >=0pt, shorten <=0pt, line width=2pt, color=green!70!black]
      \draw (adsg) --(dsgone);
      \draw (dsg) -- (dsgone);
    \end{scope}
    \foreach \point in {g,adsg,dsg,dsgone}
    \filldraw [color=green!70!black,fill=green!70!black] (\point) circle (1.5pt);
    \foreach \point in {ads,ds,m}
    \filldraw [color=red!70!black,fill=red!70!black] (\point) ++(-1.5pt,-1.5pt) rectangle ++(3pt,3pt);
    \foreach \point in {adsc,dsc,flc,c}
    \filldraw [color=DarkOrange,fill=DarkOrange] (\point) ++(-1pt,-1pt) -- ++(3pt,0pt) -- ++(-1.5pt,2.6pt) -- cycle;
    \foreach \point in {s,ts,esu,hesu}
    \draw [color=gray!90!black] (\point) circle (1.5pt);
    %
    %
    \begin{scope}[xshift=0.5cm]
    \draw [line width=1pt,color=gray!50!black] (10.75,-0.5) rectangle (12.5,1.5);
    \filldraw [color=red!70!black,fill=red!70!black] (11,1.25) ++(-1.5pt,-1.5pt) rectangle ++(3pt,3pt) ; 
    \draw (11,1.25) node[color=black,anchor=west] {\small lorentzian}; 
    \filldraw [color=green!70!black,fill=green!70!black] (11,0.75) circle (1.5pt) node[color=black,anchor=west] {\small galilean};
    \filldraw [color=DarkOrange,fill=DarkOrange] (11,0.25) ++(-1.5pt,-1pt) -- ++(3pt,0pt) -- ++(-1.5pt,2.6pt) -- cycle;
    \draw (11,0.25) node[color=black,anchor=west] {\small carrollian};
    \draw [color=gray!90!black] (11,-0.25) circle (1.5pt) node[color=black,anchor=west] {\small aristotelian};
    \end{scope}
  \end{tikzpicture}
  \caption{Homogeneous spacetimes in dimension $D+1 \geq 4$ and their limits.}
  \label{fig:generic-d-graph}
\end{figure}

When $D=2$, Figure~\ref{fig:generic-d-graph} is slightly modified to
Figure~\ref{fig:d=3-graph}. From that figure we see that all that
happens now is that there is a new aristotelian spacetime (\hyperlink{A24}{\twoda}) and
a new two-parameter family (\hyperlink{S12l}{\twodgal$_{\gamma, \chi}$}) of galilean
spacetimes interpolating between the torsional galilean (anti)
de~Sitter spacetimes. There are limits from every spacetime in that
family to the galilean spacetime. This figure also omits the
riemannian maximally symmetric spaces.

\begin{figure}[h!]
  \centering
  \begin{tikzpicture}[scale=1.4,>=latex, shorten >=3pt, shorten <=3pt,
    x=1.0cm,y=1.0cm,line join=bevel]
    %
    %
    %
    %
    \coordinate [label=above left:{\hyperlink{S2l}{\small $\zdS$}}] (ds) at (5.688048519056286,-0.5838170592960186);
    \coordinate [label=left:{\hyperlink{S8l}{\small $\zdSG$}}] (dsg) at (5.688048519056286, -2.5838170592960186);
    \coordinate [label=right:{\hyperlink{S9l}{\small $\ztdSG_1 = \ztAdSG_\infty$}}] (dsgone) at (9, -2.5838170592960186);
    \coordinate [label=below:{\hyperlink{S14l}{\small $\zdSC$}}] (dsc) at  (3.688048519056286,-0.5838170592960186);
    \coordinate [label=above:{\hyperlink{S13l}{\small $\zC$}}] (c) at (5.344024259528143, 0.7080914703519907);
    \coordinate [label=above left:{\hyperlink{S1l}{\small $\MM$}}] (m) at (7.344024259528143,0.7080914703519907);
    \coordinate [label=above left:{\hyperlink{S7l}{\small $\zG$}}] (g) at  (7.344024259528143, -1.2919085296480093);
    \coordinate [label=above:{\hyperlink{S3l}{\small $\zAdS$}}] (ads) at (9,2);
    \coordinate [label=right:{\hyperlink{S10l}{\small $\zAdSG =  \ztAdSG_0$}}] (adsg) at (9,0);
    \coordinate [label=above:{\hyperlink{S15l}{\small $\zAdSC$}}]  (adsc) at (7,2);
    \coordinate [label=above left:{\hyperlink{A21}{\small $\zS$}}] (s) at (4, -2);
    \coordinate [label=above left:{\hyperlink{S16l}{\small $\zLC$}}]  (flc) at (3, 0.7080914703519907);
    \coordinate [label=below left:{\hyperlink{A22}{\small $\zTS$}}]  (ts) at (3, -2);
    \coordinate [label=below:{\hyperlink{A23p}{\small $\RR\times\SS^2$}}]  (esu) at (3, -3.5); 
    \coordinate [label=below:{\hyperlink{A23m}{\small $\RR\times\HH^2$}}]  (hesu) at (5, -3.5); 
    \coordinate [label=below:{\hyperlink{A24}{\small \twoda}}]  (twoda) at (4, -3.5); 
    %
    %
    \coordinate [label=below:{\hyperlink{S9l}{\small $\ztdSG_{\gamma\in[-1,1]}$}}] (tdsg) at (7.344024259528143, -2.5838170592960186);
    \coordinate [label=right: {\hyperlink{S11l}{\small $\ztAdSG_{\frac{2}{\chi}}$}}] (tadsg) at (9, -1.2919085296480093);
    %
    %
    \draw [->,line width=0.5pt,dotted,color=gray] (c) -- (s);
    \draw [->,line width=0.5pt,dotted,color=gray] (g) -- (s);
    %
    %
    \draw [->,line width=0.5pt,color=blue] (adsc) -- (c);
    \draw [->,line width=0.5pt,color=blue] (dsc) -- (c);
    \draw [->,line width=0.5pt,color=blue] (ads) -- (m);
    \draw [->,line width=0.5pt,color=blue] (adsg) -- (g);
    \draw [->,line width=0.5pt,color=blue] (dsg) -- (g);
    \draw [->,line width=0.5pt,color=blue] (ds) -- (m);
    \draw [->,line width=0.5pt,color=blue] (ds) -- (dsc);
    \draw [->,line width=0.5pt,color=blue] (ds) -- (dsg);
    \draw [->,line width=0.5pt,color=blue] (m) -- (c);
    \draw [->,line width=0.5pt,color=blue] (m) -- (g);
    \draw [->,line width=0.5pt,color=blue] (ads) -- (adsc);
    \draw [->,line width=0.5pt,color=blue] (ads) -- (adsg);
    %
    %
    \draw [->,line width=0.5pt,color=blue] (flc) to [out=30,in=150] (c);
    \draw [->,line width=0.5pt,dotted,color=gray] (flc) to [out=240,in=120] (ts);
    \draw [->,line width=0.5pt,color=gray] (ts) to (s);
    \draw [->,line width=0.5pt,color=gray] (esu) to (s); 
    \draw [->,line width=0.5pt,color=gray] (hesu) to (s); 
    \draw [->,line width=0.5pt,color=gray] (twoda) to (s); 
    %
    %
    \begin{scope}[line width=2pt, color=green!70!black]
      \filldraw [color=green!70!black, fill=green!15!white] (dsgone)
      -- (dsg) to [out=20,in=250] (adsg) -- (dsgone);
    \end{scope}
    \coordinate [label=right:{\hyperlink{S12}{\small \twodgal$_{\gamma, \chi}$}}] (tdg) at (7.5,-2.25); 
    \draw [->, line width=0.5pt,color=blue] (dsgone) -- (g);
    \draw [->, line width=0.5pt,color=blue] (tdg) -- (g);
    \draw [->, line width=0.5pt,color=blue] (8.61128, -1.4101) -- (g);
    \foreach \point in {g,adsg,dsg,dsgone}
    \filldraw [color=green!70!black,fill=green!70!black] (\point) circle (1.5pt);
    \foreach \point in {ads,ds,m}
    \filldraw [color=red!70!black,fill=red!70!black] (\point) ++(-1.5pt,-1.5pt) rectangle ++(3pt,3pt);
    \foreach \point in {adsc,dsc,flc,c}
    \filldraw [color=DarkOrange,fill=DarkOrange] (\point) ++(-1pt,-1pt) -- ++(3pt,0pt) -- ++(-1.5pt,2.6pt) -- cycle;
    \foreach \point in {s,ts,esu,hesu,twoda}
    \filldraw [color=gray!90!black,fill=gray!10!white] (\point) circle (1.5pt);
    %
    %
    \begin{scope}[xshift=0.5cm]
    \draw [line width=1pt,color=gray!50!black] (10.75,-0.5) rectangle (12.5,1.5);
    \filldraw [color=red!70!black,fill=red!70!black] (11,1.25) ++(-1.5pt,-1.5pt) rectangle ++(3pt,3pt) ; 
    \draw (11,1.25) node[color=black,anchor=west] {\small lorentzian}; 
    \filldraw [color=green!70!black,fill=green!70!black] (11,0.75) circle (1.5pt) node[color=black,anchor=west] {\small galilean};
    \filldraw [color=DarkOrange,fill=DarkOrange] (11,0.25) ++(-1.5pt,-1pt) -- ++(3pt,0pt) -- ++(-1.5pt,2.6pt) -- cycle;
    \draw (11,0.25) node[color=black,anchor=west] {\small carrollian};
    \draw [color=gray!90!black] (11,-0.25) circle (1.5pt) node[color=black,anchor=west] {\small aristotelian};
    \end{scope}
  \end{tikzpicture}
  \caption{Three-dimensional homogeneous spacetimes and their limits.}
  \label{fig:d=3-graph}
\end{figure}
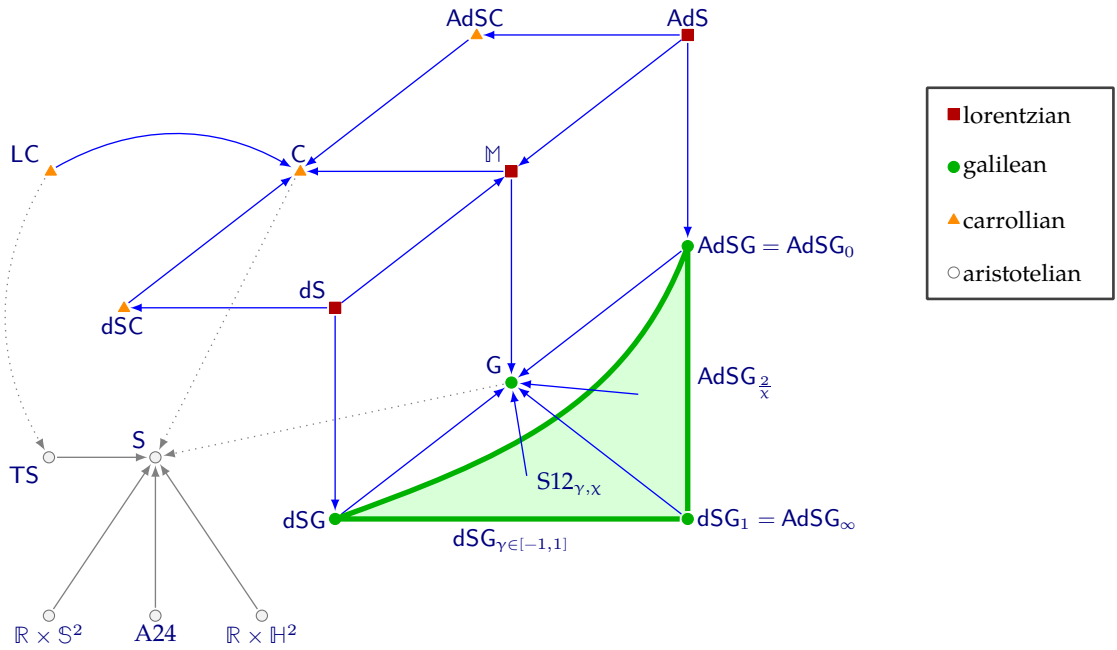

Finally, when $D=1$, Figure~\ref{fig:generic-d-graph} is also
modified.  Now there are accidental pairwise isomorphisms between some
of the symmetric spacetimes due to the possibility of redefining what
we mean by space and time.  In addition there are new two-dimensional
spacetimes with no discernible structure: two spacetimes (\hyperlink{S17l}{\xone} and
\hyperlink{S18l}{\xtwo}) and two continua
(\hyperlink{S19l}{\xthree$_\chi$} and
\hyperlink{S20l}{\xfour$_\chi$}).  The resulting picture is depicted
in Figure~\ref{fig:d=2-graph}.

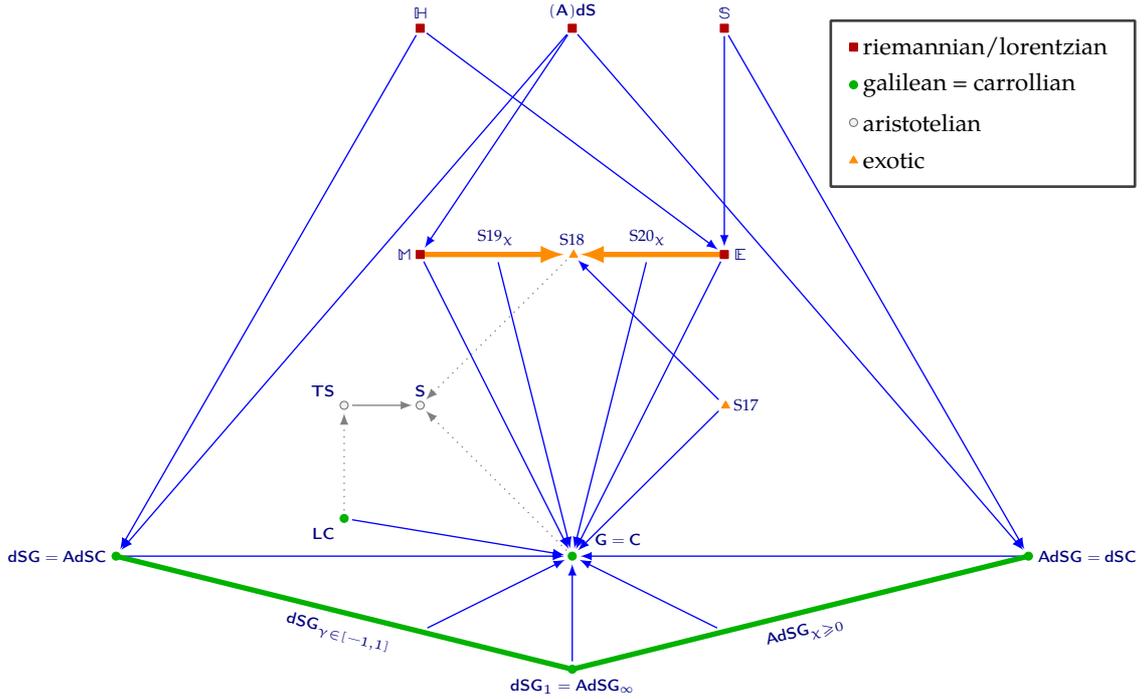
\begin{figure}[h!]
  \centering
  \begin{tikzpicture}[scale=1,>=latex, shorten >=3pt, shorten <=3pt,
    x=1.0cm,y=1.0cm,line join=bevel]
    %
    %
    %
    %
    \coordinate [label=below:{\hyperlink{S9l}{\tiny $\ztdSG_1 = \ztAdSG_\infty$}}] (dsgone) at (0,1.5);
    \coordinate [label=left:{\hyperlink{S8l}{\tiny $\zdSG = \zAdSC$}}]  (dsg) at (-6,3); 
    \coordinate [label=right:{\hyperlink{S10l}{\tiny $\zAdSG = \zdSC$}}] (adsg) at (6,3); 
    \coordinate [label={[shift={(0.6,0.02)}]{\hyperlink{S7l}{\tiny $\zG=\zC$}}}] (g) at  (0,3); 
    \coordinate [label=above:{\hyperlink{S18l}{\tiny \xtwo}}] (xtwo) at (0,7);    
    \coordinate [label=above:{\hyperlink{S3l}{\tiny $\mathsf{(A)dS}$}}] (ads) at (0,10); 
    \coordinate [label=above:{\hyperlink{A21}{\tiny $\zS$}}] (s) at (-2, 5); 
    \coordinate [label=below left:{\hyperlink{S16l}{\tiny $\zLC$}}]  (flc) at (-3,3.5); 
    \coordinate [label=above left:{\hyperlink{A22}{\tiny $\zTS$}}]  (ts) at (-3, 5); 
    \coordinate [label=right:{\hyperlink{S17l}{\tiny \xone}}] (xone) at (2,5); 
    \coordinate [label=left:{\hyperlink{S1l}{\tiny $\MM$}}] (m) at (-2,7); 
    \coordinate [label=right:{\hyperlink{S4l}{\tiny $\EE$}}] (e) at (2,7);     
    \coordinate [label=above:{\hyperlink{S5l}{\tiny $\SS$}}] (sph) at (2,10);
    \coordinate [label=above:{\hyperlink{S6l}{\tiny $\HH$}}] (hyp) at (-2,10);
    \coordinate [label=above:{\hyperlink{S19l}{\tiny  \xthree$_\chi$}}] (xthree) at (-1,7); 
    \coordinate [label=above:{\hyperlink{S20l}{\tiny  \xfour$_\chi$}}] (xfour) at (1,7); 
    \draw [->,line width=0.5pt,dotted,color=gray] (g) -- (s);
    \draw [->,line width=0.5pt,color=blue] (xone) -- (g); 
    \draw [->,line width=0.5pt,color=blue] (xone) -- (xtwo); 
    \draw [->,line width=0.5pt,color=blue] (m) -- (g);
    \draw [->,line width=0.5pt,color=blue] (e) -- (g);
    \draw [->,line width=0.5pt,color=blue] (adsg) -- (g);
    \draw [->,line width=0.5pt,color=blue] (dsg) -- (g);
    \draw [->,line width=0.5pt,color=blue] (ads) -- (m);
    \draw [->,line width=0.5pt,color=blue] (ads) -- (dsg);
    \draw [->,line width=0.5pt,color=blue] (ads) -- (adsg); 
    \draw [->,line width=0.5pt,color=blue] (sph) -- (e); 
    \draw [->,line width=0.5pt,color=blue] (hyp) -- (e); 
    \draw [->,line width=0.5pt,color=blue] (sph) -- (adsg); 
    \draw [->,line width=0.5pt,color=blue] (hyp) -- (dsg); 
    \draw [->,line width=0.5pt,color=blue] (flc) to (g);
    \draw [->, line width=0.5pt,dotted,color=gray] (flc) to (ts);
    \draw [->,line width=0.5pt,color=gray] (ts) to (s);
    \draw [->, line width=0.5pt,dotted,color=gray] (xtwo) to (s);
    \begin{scope}[->,>=latex, shorten >=3pt, shorten <=0pt, line width=2pt, color=DarkOrange]
      \draw (e) --(xtwo); 
      \draw (m) --(xtwo); 
    \end{scope}
    \draw [->,line width=0.5pt,color=blue] (-1,7) -- (g);
    \draw [->,line width=0.5pt,color=blue] (1,7) -- (g);
    \foreach \point in {xone,xtwo}
    \filldraw [color=DarkOrange,fill=DarkOrange] (\point) ++(-1pt,-1pt) -- ++(3pt,0pt) -- ++(-1.5pt,2.6pt) -- cycle;
    \begin{scope}[shorten >=0pt, shorten <=0pt, line width=2pt, color=green!70!black]
      \draw (dsg) -- (dsgone) node [midway, below, sloped] (tdsg) {\tiny \hyperlink{S9l}{$\ztdSG_{\gamma\in[-1,1]}$}};
      \draw (dsgone) -- (adsg) node [midway, below, sloped] (tadsg) {\tiny \hyperlink{S11l}{$\ztAdSG_{\chi\geq0}$}};
    \end{scope}
    \draw [->,line width=0.5pt,color=blue] (dsgone) -- (g); 
    \draw [->,line width=0.5pt,color=blue] (-2,2) -- (g); 
    \draw [->,line width=0.5pt,color=blue] (2,2) -- (g); 
    \foreach \point in {g,adsg,dsg,flc,dsgone}
    \filldraw [color=green!70!black,fill=green!70!black] (\point) circle (1.5pt);
    \foreach \point in {e,m,sph,hyp,ads}
    \filldraw [color=red!70!black,fill=red!70!black] (\point) ++(-1.5pt,-1.5pt) rectangle ++(3pt,3pt);
    \foreach \point in {s,ts}
    \filldraw [color=gray!90!black,fill=gray!10!white] (\point) circle (1.5pt);
    \begin{scope}[shift={(0.2cm,6.5cm)}]
      \draw [line width=1pt,color=gray!50!black] (3.2,1.4) rectangle  (7.2,3.6);
      \filldraw [color=red!70!black,fill=red!70!black] (3.5,3.25) ++(-1.5pt,-1.5pt) rectangle ++(3pt,3pt);
      \draw (3.5,3.25) node[color=black,anchor=west] {\small riemannian/lorentzian};
      \filldraw [color=green!70!black,fill=green!70!black] (3.5,2.75) circle (1.5pt) node[color=black,anchor=west] {\small galilean = carrollian};
      \filldraw [color=gray!90!black,fill=gray!10!white] (3.5,2.25) circle (1.5pt) node[color=black,anchor=west] {\small aristotelian};
      \filldraw [color=DarkOrange,fill=DarkOrange] (3.5,1.75) ++(-1.5pt,-1pt) -- ++(3pt,0pt) -- ++(-1.5pt,2.6pt) -- cycle;
      \draw (3.5,1.75) node[color=black,anchor=west] {\small exotic};
    \end{scope}
  \end{tikzpicture}
  \caption{Two-dimensional homogeneous spacetimes and their limits}
  \label{fig:d=2-graph}
\end{figure}

\subsection{Organisation of the paper}
\label{sec:organisation-paper}

The rest of this paper is organised as follows.

Section~\ref{sec:kinem-spac} contains the basic definitions and a
summary of the main results in the paper.  Readers who are pressed for
time should perhaps read that section and then skip to
Section~\ref{sec:some-geom-prop}.  In
Section~\ref{sec:basic-definitions} we define the main objects of
interest: kinematical Lie algebras, their homogeneous spacetimes and
their infinitesimal description in terms of Lie pairs, relegating to
Appendix~\ref{sec:infin-descr-homog} a more careful treatment
including the proof that (geometrically realisable, effective) Lie
pairs are in one-to-one correspondence with simply-connected
homogeneous spaces.  In Section~\ref{sec:summary-main-results} we
summarise the main results and list the resulting isomorphism classes
of simply-connected kinematical and aristotelian homogeneous
spacetimes in Tables~\ref{tab:spacetimes} and \ref{tab:aristotelian}.

Sections~\ref{sec:class-kinem-lie} and \ref{sec:class-kinem-spac}
contains the details leading up to Table~\ref{tab:spacetimes}.
(Table~\ref{tab:aristotelian} is the result of the classification of
aristotelian Lie algebras in Appendix~\ref{sec:class-arist-lie}.)  In
Section~\ref{sec:class-kinem-lie} we classify the isomorphism classes
of kinematical Lie pairs.  This is achieved by going one by one
through the isomorphism classes of kinematical Lie algebras and
determining for each one the possible Lie pairs up to isomorphism.  We
do this in turn for generic kinematical Lie algebras in $D\geq 3$ in
Section~\ref{sec:generic-dimension}, kinematical Lie algebras unique
to $D=3$ in Section~\ref{sec:three-dimensions}, kinematical Lie
algebras in $D=2$ in \ref{sec:two-dimensions} and finally those in
$D=1$ in Section~\ref{sec:one-dimension}.  The classification of
kinematical Lie algebras are summarised in Tables~\ref{tab:generic},
\ref{tab:3d}, \ref{tab:2d} and \ref{tab:bianchi} and that of their
corresponding Lie pairs are summarised in
Tables~\ref{tab:summarydge3}, \ref{tab:summarydeq3},
\ref{tab:summary2d-real} and \ref{tab:summary1d}.  Some Lie pairs in
$D\leq 2$ can be seen to be the low-dimensional avatars of some Lie
pairs which exist for all $D$.  Tables~\ref{tab:deq2correspondence}
and \ref{tab:deq1correspondence} describe this correspondence.  In
Section~\ref{sec:class-kinem-spac} we select from the Lie pairs
obtained in Section~\ref{sec:class-kinem-lie} those which are
effective and then show, using a variety of methods, that all
effective Lie pairs are geometrically realisable.  The end result of
this section and one of the  main results of this work
is the list of simply-connected homogeneous kinematical spacetimes in
Table~\ref{tab:spacetimes}.

The classification of homogeneous spacetimes provides the ``objects''
in Figure~\ref{fig:generic-d-graph}. The ``arrows'' between these
objects is provided by limits. In Section~\ref{sec:limits} we flesh
out Figures~\ref{fig:generic-d-graph}, \ref{fig:d=3-graph} and
\ref{fig:d=2-graph} by exhibiting the limits relating different
spacetimes in Tables~\ref{tab:spacetimes} and \ref{tab:aristotelian}.

Section~\ref{sec:some-geom-prop} starts the geometric study of the
homogeneous spacetimes in Tables~\ref{tab:spacetimes} and
\ref{tab:aristotelian}. In Section~\ref{sec:basic-notions} we review
some basic notions like reductivity, symmetry, the linear isotropy
representation and the invariant structures of interest: lorentzian,
riemannian, galilean, carrollian and aristotelian.  In
Section~\ref{sec:invar-conn} we briefly review the notion of an
invariant (affine) connection on a homogeneous space and explore how
properties (or absence of) canonical invariant connections help us
characterise the spacetimes.  In Section~\ref{sec:invar-homog-spac} we
list the resulting spacetimes dividing them into classes depending on
the whether they are non-reductive, flat symmetric, non-flat symmetric
and reductive torsional.  Table~\ref{tab:spacetimes-props} summarises
these results for the homogeneous spacetimes in
Tables~\ref{tab:spacetimes} and \ref{tab:aristotelian}.

Finally, Section~\ref{sec:conclusions} we offer some conclusions and
point to further work.  In addition, there are two appendices in the
paper.  In Appendix~\ref{sec:class-arist-lie} we classify aristotelian
Lie algebras and in Appendix~\ref{sec:infin-descr-homog} we prove that
our infinitesimal approach actually classifies simply-connected
homogeneous spacetimes.

\subsection{Reader's guide}
\label{sec:readers-guide}

It is our intention that this paper should be useful to the community,
but realise that there is a risk that the results are hard to extract
from the details of how we arrived at them.  We have therefore tried
to write the paper in a way that a reader who is happy to believe the
classification can reach it quickly without having to wade through the
details of how we arrived at it. The main information content of this
paper is contained in Sections~\ref{sec:kinem-spac} and
\ref{sec:some-geom-prop} and a reader who is pressed for time should
perhaps concentrate on those two sections at a first reading.  In
particular, Tables~\ref{tab:spacetimes} and \ref{tab:aristotelian}
contain the list of simply-connected homogeneous kinematical and
aristotelian spacetimes and Table~\ref{tab:spacetimes-props} lists
their basic geometrical properties: whether they are
reductive/symmetric/affine, whether they admit parity and/or time
reversal symmetry, the type of invariant structure that they
possess (if any): lorentzian, riemannian, galilean, carrollian or
aristotelian, and, for the reductive examples, whether the canonical
invariant connection is flat and/or torsion-free.  In addition,
Figures~\ref{fig:generic-d-graph}, \ref{fig:d=3-graph} and
\ref{fig:d=2-graph} illustrate limits between the spacetimes and it too
contains useful information.  Busy readers should be able to read the
introduction and Section~\ref{sec:kinem-spac} and then skip to
Section~\ref{sec:some-geom-prop} where we discuss some geometrical
properties of the homogeneous spacetimes.

Many of the tables and figures contain hyperlinks to ease navigation.
Let us explain here how to use them. The spacetimes in
Figures~\ref{fig:generic-d-graph}, \ref{fig:d=3-graph} and
\ref{fig:d=2-graph} as well as our summary table of the properties,
Table~\ref{tab:spacetimes-props}, are hyperlinked to
Tables~\ref{tab:spacetimes} and \ref{tab:aristotelian} which provide
the explicit kinematical and aristotelian spacetimes. Every row in
these tables starts with a label corresponding to one of the
kinematical spacetimes. These labels are hyperlinked to the
corresponding label in Table~\ref{tab:effective}. That table lists for
each spacetime the corresponding effective Lie pairs. These Lie pairs
are themselves hyperlinked to the relevant tables of Lie pairs:
Table~\ref{tab:summarydge3} for $D\geq 3$,
Table~\ref{tab:summary2d-real} for $D=2$ and Table~\ref{tab:summary1d}
for $D=1$. Those tables also contain the information of the
isomorphism class of kinematical Lie algebra associated to that Lie
pair and the label of the Lie algebra is hyperlinked to the relevant
tables of kinematical Lie algebras: Table~\ref{tab:generic} for
$D\geq 3$, Table~\ref{tab:2d} for $D=2$ and Table~\ref{tab:bianchi}
for $D=1$.  In addition, Table~\ref{tab:LAs-to-spacetimes} shows which
homogeneous kinematical spacetimes are associated with which
kinematical Lie algebra.

For example, if we click on $\hyperlink{S16l}{\zLC}$ in
Figure~\ref{fig:generic-d-graph} we are taken to
Table~\ref{tab:spacetimes}. Clicking on \flc\ in
Table~\ref{tab:spacetimes}, we are taken to Table~\ref{tab:effective}
where we see that it can be described by several Lie pairs, depending
on dimension: LP17 ($D\geq 3$), LP45 ($D=2$) or LP79 ($D=1$).
Clicking, say, on LP17, we are taken to Table~\ref{tab:summarydge3},
where we see that it comes from Lie algebra LA11 and clicking on LA11
we are taken to Table~\ref{tab:generic}, where we see that it
corresponds to $\so(D+1,1)$ with the Lie brackets given there.

\section{Kinematical spacetimes}
\label{sec:kinem-spac}

The well-known spacetimes in Figure~\ref{fig:state-of-prior-art} are
all symmetric homogeneous spaces of kinematical Lie groups.  In this
section we will review this description.

\subsection{Basic definitions}
\label{sec:basic-definitions}

Before we properly define the main objects of interest, let us
motivate our definitions and summarise the philosophy of the
construction. By definition, a kinematical Lie algebra has the same
dimension as the Lie algebra of isometries of a maximally symmetric
riemannian/lorentzian spacetime, and it consists of spatial rotations
and additional generators we call boosts and (space and time)
translations. However before we specify how the kinematical Lie
algebra acts on the spacetime, we cannot make a precise identification
of which generators are boosts and/or translations. This is the reason
to refine the discussion from the kinematical Lie algebras to the
homogeneous spacetimes. The spacetimes are constructed in such a way
that the stabiliser of any point contains the rotations about that
point. This spatial isotropy implies that all invariant tensors are
isotropic (i.e., $\so(D)$ rotationally invariant). The vectorial
generators in the stabiliser are interpreted as boosts, whereas the
additional generators are interpreted as translations. The resulting
spacetime is by construction homogeneous, which roughly means that
every point of the manifold looks like any other point. We now provide
the precise details.

\begin{definition}
  A \textbf{kinematical Lie algebra} (with $D$-dimensional space
  isotropy) is a real Lie algebra $\k$ satisfying the following two
  properties:
  \begin{enumerate}
  \item $\k$ contains a Lie subalgebra $\r \cong \so(D)$, the
    Lie algebra of rotations of $D$-dimensional euclidean space; and
  \item $\k$ decomposes as $\k = \r \oplus 2 V \oplus S$ as a
    representation of $\r$, where $2 V$ are two copies of the
    $D$-dimensional (vector) irreducible representation of $\so(D)$
    and $S$ is the one-dimensional (scalar) trivial representation of
    $\so(D)$.
  \end{enumerate}
  By a \textbf{kinematical Lie group} we mean any Lie group whose Lie
  algebra is a kinematical Lie algebra.
\end{definition}

It follows from this definition that we can describe a kinematical Lie
algebra explicitly in terms of a basis $J_{ab} = - J_{ba}$ for the
rotational subalgebra $\r$, $V_a^{(i)}$ with $i=1,2$ for the two
copies of $V$ and $H$ for $S$.  The definition implies that the Lie
brackets of $\k$ in this basis include the following:
\begin{equation}
  \label{eq:defkinbr}
  \begin{split}
    [J_{ab}, J_{cd}] &= \delta_{bc} J_{ad} - \delta_{ac} J_{bd} - \delta_{bd} J_{ac} + \delta_{ad} J_{bc}\\
    [J_{ab}, V_c^{(i)}] &= \delta_{bc} V_a^{(i)} - \delta_{ac} V_b^{(i)}\\
    [J_{ab}, H ] &= 0,
  \end{split}
\end{equation}
and any other Lie brackets are subject only to the Jacobi identity,
which implies, in particular, equivariance under $\r$.  It is
convenient to relabel $V^{(1)}_a$ as $B_a$ and $V^{(2)}_a$ as $P_a$, a
notation reminiscent of the boosts and translations in kinematics;
although it must be stressed that there is no \emph{a priori} geometrical
interpretation of these generators: they only acquire such an
interpretation when we realise them in a spacetime on which the
kinematical Lie group acts.

Kinematical Lie algebras have been classified up to Lie algebra
isomorphism \cite{MR0238545,MR857383,Figueroa-OFarrill:2017ycu,
  Figueroa-OFarrill:2017tcy, Andrzejewski:2018gmz}.  These
classifications are the starting point to the classification of
homogeneous spacetimes and hence they will be briefly recalled at the
start of Sections~\ref{sec:generic-dimension},
\ref{sec:three-dimensions}, \ref{sec:two-dimensions} and
\ref{sec:one-dimension} and contained in Tables~\ref{tab:generic},
\ref{tab:3d}, \ref{tab:2d} and \ref{tab:bianchi}.

In this paper we obtain a classification of simply-connected
homogeneous \emph{spacetimes} of kinematical Lie groups (up to
isomorphism).  Not every homogeneous space of a kinematical Lie group
is a spacetime, so this requires a definition.  Notice that, in
particular, a homogeneous spacetime of a kinematical Lie group with
$D$-dimensional space isotropy is ($D+1$)-dimensional.

\begin{definition}
  By a \textbf{homogeneous kinematical spacetime} we mean a
  homogeneous space $M$ of a kinematical Lie group $\Kgr$, satisfying
  the following properties:
  \begin{enumerate}
  \item $M$ is a connected smooth manifold, and
  \item $\Kgr$ acts transitively and locally effectively\footnote{See
      Appendix~\ref{sec:trans-acti-lie} for the basic definitions, if
      needed.} on $M$ with stabiliser $\Hgr$, where
  \item $\Hgr$ is a closed subgroup of $\Kgr$ whose Lie algebra $\h$
    contains a rotational subalgebra $\r \cong \so(D)$ and decomposes as
    $\h = \r \oplus V$ as an adjoint $\r$-module, where $V$ is an
    irreducible $D$-dimensional vector representation of $\so(D)$.
  \end{enumerate}
  It follows that $M$ is $\Kgr$-equivariantly diffeomorphic to
  $\Kgr/\Hgr$.  Within the confines of this paper, and because we will
  need to refer to them often, we will say that $\Hgr$ and $\h$ are
  \textbf{admissible}.
\end{definition}

It may be convenient to keep an example in mind, so let us consider
the Poincaré Lie group, whose Lie algebra is defined relative to the
basis $(J_{ab}, B_a, P_a, H)$ by the nonzero Lie brackets
\begin{equation}
  \begin{aligned}\relax
    [J_{ab}, J_{cd}] &= \delta_{bc} J_{ad} - \delta_{ac} J_{bd} - \delta_{bd} J_{ac} + \delta_{ad} J_{bc}\\
    [J_{ab}, B_c] &= \delta_{bc} B_a - \delta_{ac} B_b\\
    [J_{ab}, P_c] &= \delta_{bc} P_a - \delta_{ac} P_b
  \end{aligned}
  \qquad\qquad
  \begin{aligned}\relax
    [B_a, B_b] &= J_{ab}\\
    [B_a, P_b] &= \delta_{ab} H\\
    [B_a,H] &= P_a.
  \end{aligned}
\end{equation}
This is obtained from the more familiar expression
\begin{equation}
  \begin{split}
    [J_{\mu\nu}, J_{\rho\sigma}] &= \eta_{\nu\rho} J_{\mu\sigma} - \eta_{\mu\rho} J_{\nu\sigma} -  \eta_{\nu\sigma} J_{\mu\rho} + \eta_{\mu\sigma} J_{\nu\rho} \\
    [J_{\mu\nu}, P_\rho] &= \eta_{\nu\rho} P_\mu - \eta_{\mu\rho} P_\nu, 
  \end{split}
\end{equation}
by decomposing $J_{\mu\nu} = (J_{ab}, B_a := J_{0a})$ and $P_\mu =
(P_a, H:=P_0)$, where $\eta_{00} = -1$ and $\eta_{ab} = \delta_{ab}$.
Let us take $\Kgr$ to be the Poincaré Lie group and $\Hgr$ the Lorentz
subgroup; that is, the (admissible) subgroup generated by the Lie
subalgebra spanned by $(J_{ab}, B_a)$. Then $\Kgr/\Hgr$ is
diffeomorphic to Minkowski spacetime, as is well known.

Perhaps not so well known is the fact that the Poincaré group admits a
second homogeneous spacetime.  If we now let $\Hgr'$ denote the
(again, admissible) subgroup generated by the Lie subalgebra $\h'$
spanned by $(J_{ab},P_a)$, then $\Kgr/\Hgr'$ is diffeomorphic to the
\emph{carrollian anti de~Sitter spacetime} ($\hyperlink{S15l}{\mathsf{AdSC}}$) (also known as
\emph{para-Minkowski spacetime}), as we will see below.  To more
easily distinguish between these two homogeneous spacetimes of the
Poincaré group, it is convenient to change basis in the Poincaré Lie
algebra in such a way that the admissible Lie subalgebra $\h'$ is also
spanned by $(J_{ab},B_a)$.  Doing so we arrive at what is often termed
the \emph{para-Poincaré Lie algebra}, with nonzero Lie brackets
\begin{equation}
  \begin{aligned}\relax
    [J_{ab}, J_{cd}] &= \delta_{bc} J_{ad} - \delta_{ac} J_{bd} - \delta_{bd} J_{ac} + \delta_{ad} J_{bc}\\
    [J_{ab}, B_c] &= \delta_{bc} B_a - \delta_{ac} B_b\\
    [J_{ab}, P_c] &= \delta_{bc} P_a - \delta_{ac} P_b
  \end{aligned}
  \qquad\qquad
  \begin{aligned}\relax
    [B_a, P_b] &= \delta_{ab} H\\
    [H,P_a] &= B_a\\
    [P_a, P_b] &= J_{ab}.
  \end{aligned}
\end{equation}
Of course, this Lie algebra is isomorphic to the Poincaré Lie algebra,
but not in a way which fixes the admissible subalgebra.  We see from
the above Lie brackets that translations no longer commute, signalling
that this spacetime is not flat; although, as we will see, the
non-flat connection, which is the canonical Poincaré-invariant
connection on this symmetric homogeneous space, is not a metric
connection since $\hyperlink{S15l}{\mathsf{AdSC}}$ does not admit a Poincaré-invariant metric.
If it did, by dimension, it would have be maximally symmetric and
hence isometric to Minkowski spacetime, where translations do commute.

As we explain in Appendix~\ref{sec:infin-descr-homog}, the
classification of isomorphism classes of simply-connected homogeneous
kinematical spacetimes can be arrived at infinitesimally, by
classifying isomorphism classes of (geometrically realisable,
effective) kinematical Lie pairs.  This too requires a definition.

\begin{definition}
  A \textbf{(kinematical) Lie pair} is a pair $(\k,\h)$ consisting of
  a kinematical Lie algebra $\k$ and an admissible subalgebra $\h$.
  Two Lie pairs $(\k_1,\h_1)$ and $(\k_2,\h_2)$ are
  \textbf{isomorphic} if there is a Lie algebra isomorphism
  $\varphi: \k_1 \to \k_2$ with $\varphi(\h_1) = \h_2$.  A Lie pair
  $(\k,\h)$ is \textbf{effective} if $\h$ does not contain any nonzero
  ideals of $\k$.  It is said to be \textbf{(geometrically)
    realisable} if there exists a connected Lie group $\Kgr'$ with Lie
  algebra $\k'$ and a \emph{closed} Lie subgroup $\Hgr'$ with Lie
  algebra $\h'$ with $(\k',\h')$ isomorphic to $(\k,\h)$.  The
  homogeneous space $\Kgr'/\Hgr'$ is said to be a \textbf{geometric
    realisation} of $(\k,\h)$.
\end{definition}

The relationship between Lie pairs and homogeneous spaces extends the
relationship between Lie algebras and Lie groups.  Recall that
associated with every finite-dimensional Lie algebra $\k$ there exists
a unique (up to isomorphism) simply-connected Lie group whose Lie
algebra is isomorphic to $\k$.  A similar correspondence exists
between homogeneous spaces and Lie pairs, except that we need to
restrict to Lie pairs which are geometrically realisable (existence)
and effective (uniqueness).  This is explained in detail in
Appendix~\ref{sec:infin-descr-homog}.

In this paper we will classify isomorphism classes of geometrically
realisable, effective Lie pairs $(\k,\h)$, where $\k$ is a kinematical
Lie algebra and $\h$ an admissible Lie subalgebra.  As just explained,
this is equivalent to classifying isomorphism classes of
simply-connected homogeneous kinematical spacetimes.  We will actually
exploit the classification of kinematical Lie algebras up to
isomorphism and in this way fix a kinematical Lie algebra $\k$ in each
isomorphism class and classify isomorphism classes of (geometrically
realisable, effective) Lie pairs $(\k,\h)$ with $\h$ an admissible
subalgebra, with two such Lie pairs $(\k,\h_1)$ and $(\k,\h_2)$
declared to be isomorphic if there is an \emph{automorphism} of $\k$
sending $\h_1$ to $\h_2$.

\subsection{Summary of main results}
\label{sec:summary-main-results}

The classification is described in Sections~\ref{sec:class-kinem-lie}
and \ref{sec:class-kinem-spac}, but we think it might be helpful to
the reader to collect here already the results.  In the tables below
we use an abbreviated notation in which we do not write the $\so(D)$
indices explicitly.  We write $\J$, $\B$, $\P$ and $H$ for the
generators of the kinematical Lie algebra $\k$ and write the
kinematical Lie brackets of \eqref{eq:defkinbr} as
\begin{equation}
  \label{eq:kin}
  [\J,\J] = \J \qquad [\J,\B] = \B \qquad [\J, \P] = \P \qquad\text{and}\qquad [\J, H] = 0.
\end{equation}
For $D\neq 2$, any other brackets can be reconstructed unambiguously
from the abbreviated expression since there is only one way to
reintroduce indices in an $\so(D)$-equivariant fashion; that is, using
only the $\so(D)$-invariant tensors $\delta_{ab}$ and $\epsilon_{a_{1}
  \cdots a_{D}}$ on the right hand side of the brackets. For example,
\begin{equation}
  [H, \B] = \P \quad\text{stands for}\quad [H, B_a] = P_a \qquad\text{and}\qquad
  [\B,\P] = H + \J \quad\text{for}\quad [B_a, P_b] = \delta_{ab} H + J_{ab}.
\end{equation}
In $D=3$ we may also have brackets of the form
\begin{equation}
 [\P,\P] = \P \qquad\text{which we take to mean}\qquad [P_a,P_b] =
 \epsilon_{abc} P_c,
\end{equation}
where we employ Einstein's summation convention.
Similarly, for $D=2$, $\epsilon_{ab}$ is rotationally invariant and can
appear in Lie brackets.  So we will write, e.g.,
\begin{equation}
  [H, \B] = \B + \Pt  \qquad\text{for}\qquad  [H, B_a] = B_a +
  \epsilon_{ab} P_b,
\end{equation}
et cetera.

Table~\ref{tab:spacetimes} summarises all the hard work in
Sections~\ref{sec:class-kinem-lie} and \ref{sec:class-kinem-spac} and
lists all the inequivalent geometrically realisable effective Lie
pairs $(\k,\h)$ with $\k$ a kinematical Lie algebra and $\h$ an
admissible Lie subalgebra. A basis has been chosen in such a way that
$\h$ is spanned by $(J_{ab}, B_a)$. In this way, Lie pairs are
uniquely characterised by specifying the Lie brackets of $\k$ in this
basis. The kinematical Lie brackets \eqref{eq:kin} are common to all
kinematical Lie algebras, so that we need only specify those Lie
brackets which do not involve the rotations. The table is divided into
five sections, separated by horizontal rules. From top to bottom, the
first four correspond to the lorentzian, riemannian, galilean and
carrollian spacetimes. The final section corresponds to ``exotic''
two-dimensional spacetimes admitting no such structures. Two remarks
are in order about this table and both concern the case of
$D=1$. Since there are no rotations in $D=1$, in any row where $\J$
appears, we are tacitly assuming that we set $\J=0$ if $D=1$. Also,
some of the $D\geq 1$ spacetimes become accidentally pairwise
isomorphic in $D=1$: namely, carrollian and galilean, de~Sitter and
anti~de~Sitter, carrollian dS and galilean AdS, and carrollian AdS and
galilean dS. This explains why we write $D\geq2$ for carrollian,
de~Sitter, and carrollian (anti) de~Sitter. In this way no two rows
are isomorphic and hence every row in the table specifies a unique
isomorphism class of simply-connected homogeneous kinematical
spacetime.  Perhaps it bears repeating that, as mentioned already in
the introduction, the galilean (A)dS spacetimes are often called the
Newton--Hooke spacetimes.

\begin{table}[h!]
  \centering
  \caption{Simply-connected homogeneous ($D+1$)-dimensional kinematical spacetimes}
  \label{tab:spacetimes}
  \rowcolors{2}{blue!10}{white}
  \resizebox{\textwidth}{!}{
    \begin{tabular}{l|>{$}c<{$}|*{5}{>{$}l<{$}}|l}\toprule
      \multicolumn{1}{c|}{Label} & D & \multicolumn{5}{c|}{Nonzero Lie brackets in addition to $[\J,\J] = \J$, $[\J, \B] = \B$, $[\J,\P] = \P$} & \multicolumn{1}{c}{Comments}\\\midrule
      \hypertarget{S1l}{}\hyperlink{S1}{\mink} & \geq 1 & [H,\B] = -\P & & [\B,\B] = \J & [\B,\P] = H & & Minkowski ($\MM$)\\
      \hypertarget{S2l}{}\hyperlink{S2}{\ds} & \geq 2 & [H,\B] = -\P & [H,\P] = -\B & [\B,\B]= \J & [\B,\P] = H & [\P,\P]= - \J & de~Sitter ($\zdS$)\\
      \hypertarget{S3l}{}\hyperlink{S3}{\ads} & \geq 1 & [H,\B] = -\P & [H,\P] = \B & [\B,\B]= \J & [\B,\P] = H & [\P,\P] = \J & anti~de~Sitter ($\zAdS$)\\\midrule
      \hypertarget{S4l}{}\hyperlink{S4}{\euc} & \geq 1 & [H,\B] = \P & & [\B,\B] = -\J & [\B,\P] = H & & euclidean ($\EE$)\\
      \hypertarget{S5l}{}\hyperlink{S5}{\sph} & \geq 1 & [H,\B] = \P & [H,\P] = -\B & [\B,\B]= -\J & [\B,\P] = H & [\P,\P]= - \J & sphere ($\SS$)\\
      \hypertarget{S6l}{}\hyperlink{S6}{\hyp} & \geq 1 & [H,\B] = \P & [H,\P] = \B & [\B,\B]= -\J & [\B,\P] = H & [\P,\P] = \J & hyperbolic ($\HH$)\\\midrule
      \hypertarget{S7l}{}\hyperlink{S7}{\gal} & \geq 1 & [H,\B] = -\P & & & & & galilean ($\zG$)\\
      \hypertarget{S8l}{}\hyperlink{S8}{\dsg} & \geq 1 & [H,\B] = -\P & [H,\P] = -\B & & & & galilean de~Sitter ($\zdSG= \ztdSG_{\gamma=-1}$)\\
      \hypertarget{S9l}{}\hyperlink{S9}{\tdsg$_\gamma$} & \geq 1 & [H,\B] = -\P & [H,\P] = \gamma\B + (1+\gamma)\P & & & & torsional galilean de~Sitter $\ztdSG_{\gamma\in (-1,1]}$ \\
      \hypertarget{S10l}{}\hyperlink{S10}{\adsg} & \geq 1 & [H,\B] =  -\P & [H,\P] = \B & & & & galilean anti~de~Sitter  ($\zAdSG = \ztAdSG_{\chi=0}$)\\
      \hypertarget{S11l}{}\hyperlink{S11}{\tadsg$_\chi$} & \geq 1 & [H,\B] = -\P & [H,\P] = (1+\chi^2) \B + 2\chi \P & & & & torsional galilean anti~de~Sitter $\ztAdSG_{\chi>0}$ \\
      \hypertarget{S12l}{}\hyperlink{S12}{\twodgal$_{\gamma,\chi}$} & 2 & [H,\B] = -\P & [H,\P] = (1+\gamma) \P - \chi\Pt + \gamma \B - \chi \Bt & & & &  $\gamma\in [-1,1), \chi >0$\\\midrule
      \hypertarget{S13l}{}\hyperlink{S13}{\car} & \geq 2 & & & & [\B,\P] = H & & carrollian ($\zC$)\\
      \hypertarget{S14l}{}\hyperlink{S14}{\dsc} & \geq 2 & & [H,\P] = -\B & & [\B,\P] = H & [\P,\P] = -\J & carrollian de~Sitter ($\zdSC$) \\
      \hypertarget{S15l}{}\hyperlink{S15}{\adsc} & \geq 2 & & [H,\P] = \B & & [\B,\P] = H & [\P,\P] = \J & carrollian anti~de~Sitter ($\zAdSC$) \\
      \hypertarget{S16l}{}\hyperlink{S16}{\flc} & \geq 1 & [H,\B] = \B & [H,\P] = -\P & & [\B,\P] = H + \J & & carrollian light cone ($\zLC$)\\\midrule
      \hypertarget{S17l}{}\hyperlink{S17}{\xone} & 1 & [H,B] = -P & & & [B,P] = - H - 2 P & & \\
      \hypertarget{S18l}{}\hyperlink{S18}{\xtwo} & 1 & [H,B] = H & & & [B,P] = -P & & \\
      \hypertarget{S19l}{}\hyperlink{S19}{\xthree$_\chi$} & 1 & [H,B] = (1+\chi) H & & & [B,P]= (1-\chi)P & & $\chi>0$\\
      \hypertarget{S20l}{}\hyperlink{S20}{\xfour$_\chi$} & 1 & [H,B] = -P & & & [B,P] = -(1+\chi^2) H - 2\chi P & & $\chi>0$ \\ \bottomrule
    \end{tabular}
  }
\\[10pt]
    \caption*{Geometrical properties of the spacetimes are provided in Table \ref{tab:spacetimes-props}.}
\end{table}

Table~\ref{tab:aristotelian} lists the Lie pairs corresponding to the
aristotelian spacetimes.  Many of them arise from non-effective Lie
pairs $(\k,\h)$ with $\k$ a kinematical Lie algebra.  If such a pair
is not effective, it means that $\h$ contains a nonzero ideal of $\k$.
For $\k$ a kinematical Lie algebra, this cannot be other than the span
$\b$ of the $(B_a)$, assuming they do form an ideal.  When this is the
case, we may quotient both $\k$ and $\h$ by the ideal $\b$ and arrive
at an effective (by construction) Lie pair $(\k/\b, \h/\b) = (\a,\r)$,
where $\a = \k/\b$ is an aristotelian Lie algebra (see Appendix
\ref{sec:class-arist-lie}) and $\r = \h/\b$ is a rotational subalgebra
of $\a$.

To fix ideas, let us consider the example of the galilean algebra
$\g$.  This is the kinematical Lie algebra with nonzero Lie brackets
$[H,\B] = -\P$ in addition to those in \eqref{eq:kin}.  As we will
see, there are two isomorphism classes of Lie pairs associated to
$\g$.  If we choose bases for $\g$ so that in both cases the
admissible subalgebra is the span of $(J_{ab},B_a)$, then the two Lie
pairs are described by the following Lie brackets for $\g$ in addition
to those in \eqref{eq:kin}:
\begin{itemize}
\item $[H,\B] = -\P$, which is the standard galilean spacetime, and
\item $[H,\P] = \B$.
\end{itemize}
The latter Lie pair is not effective because $\h$ contains the ideal
$\b$ spanned by the $(B_a)$.  If we quotient this Lie pair by $\b$,
which boils down to discarding any $B_a$ in the Lie brackets, we see
that now $[H,\P] = 0$ and we arrive at the static aristotelian Lie
algebra.

Aristotelian Lie pairs are always geometrically realisable because the
rotational subalgebra $\r$ generates a compact subgroup and compact
subgroups are always closed.  Furthermore, since $\r$ is fixed, every
aristotelian Lie algebra gives rise to a unique aristotelian Lie pair,
so that the classification in Appendix~\ref{sec:class-arist-lie} is
also a classification of simply-connected aristotelian spacetimes up
to isomorphism.

\begin{table}[h!]
  \centering
  \caption{Simply-connected homogeneous ($D+1$)-dimensional aristotelian spacetimes}
  \label{tab:aristotelian}
  \rowcolors{2}{blue!10}{white}
  \begin{tabular}{l|>{$}c<{$}|*{2}{>{$}l<{$}}|l} \toprule
    \multicolumn{1}{c|}{Label} & D & \multicolumn{2}{c|}{Nonzero Lie brackets in addition to $[\J,\J] = \J $ and $[\J,\P] = \P$}& \multicolumn{1}{c}{Comments}\\\midrule
    \hypertarget{A21}{\st} & \geq 0 & & & static ($\zS$)\\
    \hypertarget{A22}{\tst} & \geq 1 & [H,\P] = \P & & torsional static ($\zTS$)\\
    \hypertarget{A23p}{\athree$_{+1}$} & \geq 2 & & [\P,\P] = \J & $\RR\times\HH^D$\\
    \hypertarget{A23m}{\athree$_{-1}$} & \geq 2 & & [\P,\P] = - \J & $\RR\times\SS^D$\\
    \hypertarget{A24}{\twoda} & 2 & & [\P,\P] = H & \\\bottomrule
  \end{tabular}
\end{table}

The next two sections contain the classifications leading up to
Table~\ref{tab:spacetimes}, whereas Section~\ref{sec:limits} contains
the details leading up to Figures~\ref{fig:generic-d-graph},
\ref{fig:d=3-graph} and \ref{fig:d=2-graph}. The busy reader may wish
to skip them at a first reading and go directly to
Section~\ref{sec:some-geom-prop} where we explore some of the
geometrical properties of these spacetimes, culminating in
Table~\ref{tab:spacetimes-props}.

\section{Classification of kinematical Lie pairs}
\label{sec:class-kinem-lie}

This section, which can be skipped at a first reading, contains the
details of the classification of kinematical Lie pairs $(\k,\h)$ up to
isomorphism, where $\k$ is a kinematical Lie algebra and $\h$ is an
admissible Lie algebra.  At this stage we will not worry about whether
the resulting Lie pairs are effective and/or geometrically realisable.
The results of this section are collected in
Tables~\ref{tab:summarydge3}, \ref{tab:summarydeq3},
\ref{tab:summary2d-real} and \ref{tab:summary1d}, which will be the
starting point for the analysis in Section~\ref{sec:class-kinem-spac},
where we extract from those tables the effective Lie pairs which admit
a geometric realisation to arrive at Table~\ref{tab:spacetimes}.

We will simplify the analysis by making use of the classification of
kinematical Lie algebras, which we briefly review.  Kinematical Lie
algebras (with $D$-dimensional space isotropy) have been classified up
to Lie algebra isomorphism.  In $D=0$ there is a unique
one-dimensional Lie algebra, whereas in $D=1$ there are no rotations,
so any three-dimensional Lie algebra is kinematical.  These were
classified by Bianchi \cite{Bianchi} (see \cite{MR1900159} for an
English translation) in the context of his classification of
three-dimensional homogeneous spaces; although here they will play the
rôle of symmetries of two-dimensional homogeneous spaces.  The other
classic case is $D=3$, where the kinematical Lie algebras were
classified by Bacry and Nuyts \cite{MR857383} refining earlier work of
Bacry and Lévy-Leblond \cite{MR0238545}.  The remaining cases $D>3$
and $D=2$ are recent classifications.  Following earlier work on the
galilean algebra \cite{JMFGalilean}, the $D=3$ classification was
recovered recently in \cite{Figueroa-OFarrill:2017ycu} using methods
of deformation theory.  These methods were then extended to arrive at
the classifications for $D>3$ \cite{Figueroa-OFarrill:2017tcy} and
$D=2$ \cite{Andrzejewski:2018gmz}.  These classifications are recalled
in this paper and are contained in Tables~\ref{tab:generic},
\ref{tab:3d}, \ref{tab:2d} and \ref{tab:bianchi}.

The classification of kinematical Lie algebras allows us to fix
a kinematical Lie algebra $\k$ and then consider Lie pairs $(\k,\h)$,
with two such pairs $(\k,\h_1)$ and $(\k,\h_2)$ being isomorphic if
there is an automorphism of $\k$ taking $\h_1$ to $\h_2$.  This
suggests the following methodology, which is how we will proceed in
this paper.

For each kinematical Lie algebra $\k$ in the classification, we will
determine the admissible subalgebras $\h$ and, if necessary, the
action of the automorphism group $A = \Aut(\k)$ on them.  We will then
pick one representative from each $A$-orbit.  Finally, we will change
basis (if needed) for $\k$ so that $\h$ is always spanned by $J_{ab}$
and $B_a$.  The resulting Lie pair is then described uniquely by
specifying the Lie brackets of $\k$ in this basis.  Furthermore,
whenever possible, we choose a basis in which the Lie pair is
manifestly reductive.

It follows from the classification of kinematical Lie algebras that
there are kinematical Lie algebras which exist for all $D\geq 1$, but
there are additional Lie algebras for $D\leq 3$, due to accidents in
small dimension: namely, the existence of a rotationally invariant
vector product in $D=3$, a rotationally invariant symplectic structure
in $D=2$, and the absence of rotations in $D=1$.  As explained in
\cite{Andrzejewski:2018gmz}, the case $D=2$ is special and it is
convenient to work with the complexified Lie algebra. The analysis in
that case does not embed easily into the general discussion of $D\geq
3$ and therefore we will have to do it separately. Similarly the case
$D=1$ is special in that any three-dimensional Lie algebra is
kinematical and any one-dimensional subalgebra is admissible. We also
treat this case separately. These considerations suggest first
treating the generic case (restricted to $D\geq 3$), which we do in
Section~\ref{sec:generic-dimension}. Then in
Section~\ref{sec:three-dimensions} we consider the additional
kinematical Lie algebras which are unique to $D=3$. In
Section~\ref{sec:two-dimensions} we determine the Lie pairs associated
to the kinematical Lie algebras in $D=2$. As we will see, there are
two kinematical Lie algebras unique to $D=2$ which do not admit any
Lie pairs.  Then in Section~\ref{sec:one-dimension} we consider the
Lie pairs associated to the three-dimensional Lie algebras.

\subsection{Lie pairs for $D\geq 3$}
\label{sec:generic-dimension}

Let us consider those kinematical Lie algebras which have analogs for
all $D\geq 3$. (They also have analogs for $D<3$, but we will discuss
them again separately in that context.) These are listed in
Table~\ref{tab:generic}, which is borrowed from
\cite[Table~17]{Figueroa-OFarrill:2017tcy} with some small
modification. We only list those nonzero Lie brackets in addition to
\eqref{eq:kin}, which are understood as given. The numbering in this
and other tables is somewhat arbitrary, but might help in referring to
those Lie algebras which are otherwise nameless.  Although we are
aware that referring to some of these kinematical Lie algebras by name
presupposes a geometrical interpretation of the generators which may
or may not be the right interpretation depending on the actual
homogeneous space under consideration, we do feel that it helps to
orient the reader if we point out to which named Lie algebras (when
the name exists) the Lie algebras in Table~\ref{tab:generic} are
isomorphic. In this spirit, let us also mention in this and other
tables the isomorphism type of the Lie algebra, when known.  To this
end let us introduce in Table~\ref{tab:notation} some notation for the
named Lie algebras appearing in this section.

\begin{table}[h!]
  \centering
  \caption{Notation for Lie algebras}
  \label{tab:notation}
  \begin{tabular}{>{$}c<{$}|l} \toprule
    \multicolumn{1}{c|}{Notation} & \multicolumn{1}{c}{Name}\\\midrule
    \s & static\\
    \n_+ & (elliptic) Newton\\
    \n_- & (hyperbolic) Newton\\
    \e & euclidean\\\bottomrule
  \end{tabular}
  \hspace{3cm}
  \begin{tabular}{>{$}c<{$}|l} \toprule
    \multicolumn{1}{c|}{Notation} & \multicolumn{1}{c}{Name}\\\midrule
    \p & Poincaré\\
    \so & orthogonal\\
    \g & galilean\\
    \c & Carroll\\ \bottomrule
  \end{tabular}
\end{table}

\begin{table}[h!]
  \centering
  \caption{Generic kinematical Lie algebras for $D\geq 3$}
  \label{tab:generic}
  \rowcolors{2}{blue!10}{white}
  \resizebox{\textwidth}{!}{
  \begin{tabular}{l|>{$}l<{$}|*{5}{>{$}l<{$}}|l} \toprule
    \multicolumn{1}{c|}{LA\#} & \multicolumn{1}{c|}{$\cong$} &  \multicolumn{5}{c|}{Nonzero Lie brackets in addition to $[\J,\J] = \J$, $[\J, \B] = \B$ and $[\J,\P] = \P$} & \multicolumn{1}{c}{Comments} \\\midrule
    \hypertarget{LA1}{1} & \s & & & & & & \\
    \hypertarget{LA2}{2} & \g & [H,\B] = \P & & & & & \\
    \hypertarget{LA3}{3$_\gamma$} & & [H,\B] = \gamma \B & [H,\P] = \P & & & & $\gamma \in (-1,1)$ \\
    \hypertarget{LA4}{4} & & [H,\B] = \B & [H,\P] = \P & & & & \\
    \hypertarget{LA5}{5} & \n_- & [H,\B] = - \B & [H,\P] = \P & & & & \\
    \hypertarget{LA6}{6} & & [H,\B] = \B + \P & [H, \P] = \P & & & & \\
    \hypertarget{LA7}{7$_\chi$} & & [H,\B] = \chi \B + \P & [H,\P] = \chi \P - \B &  & & & $\chi > 0$ \\
    \hypertarget{LA8}{8} & \n_+ & [H,\B] = \P & [H,\P] = - \B & & & & \\
    \hypertarget{LA9}{9} & \c & & & & [\B,\P] = H & & \\
    \hypertarget{LA10}{10$_\varepsilon$} & \choice{\p}{\e} & [H,\B] = -\varepsilon \P & &  [\B,\B]= \varepsilon \J & [\B,\P] = H & & $\varepsilon = \pm 1$ \\
    \hypertarget{LA11}{11} & \so(D+1,1) & [H,\B] = \B & [H,\P] = -\P & &  [\B,\P] = H + \J & & \\
    \hypertarget{LA12}{12$_\varepsilon$} & \choice{\so(D,2)}{\so(D+2)} & [H,\B] = -\varepsilon \P & [H,\P] = \varepsilon \B &  [\B,\B]= \varepsilon \J & [\B,\P] = H &  [\P,\P] = \varepsilon \J & $\varepsilon = \pm 1$ \\ \bottomrule
  \end{tabular}
  }
\end{table}

In discussing the automorphisms of kinematical Lie algebras we will
fix the rotational subalgebra $\r$ and concentrate on automorphisms
which are the identity on $\r$ and in this way focus only on their
action on the non-rotational generators.  Since automorphisms must
commute with the action of the rotations and the vector representation
is irreducible (also upon complexification) they are necessarily of
the form\footnote{%
  In $D=3$ one might conceive of adding multiples of the rotation
  generators $\J$ to the transformation properties of $\B$ and $\P$,
  but one can show that in automorphisms of $\k$ no such terms
  appear.}
\begin{equation}
  (\B,\P,H ) \mapsto (\B, \P, H )
  \begin{pmatrix} 
    a & b & \zero \\ c & d & \zero \\ \zero & \zero & \Delta
  \end{pmatrix} = (a \B + c \P, b \B + d \P, \Delta H ),
\end{equation}
for some $a,b,c,d,\Delta \in \RR$ and $\Delta (ad - bc) \neq 0$. In
other words, the automorphism group of $\k$ will be a subgroup of
$\GL(2,\RR) \times \RRnZ$.

\subsubsection{Lie pairs associated to Lie algebra \#1}
\label{sec:spac-assoc-stat}

Kinematical Lie algebra \#1 in Table~\ref{tab:generic} is isomorphic
to the static Lie algebra, with all brackets zero except for those in
equation~\eqref{eq:kin}.  The automorphism group is the full
$\GL(2,\RR) \times \RRnZ$.  An admissible Lie subalgebra is
spanned by the rotations and the $D$ vectors $\alpha B_a + \beta P_a$,
for some $\alpha,\beta \in \RR$ not both zero.  We will abbreviate the
span of these vectors as $\alpha \B + \beta \P$, but let us not forget
that we mean a $D$-dimensional subspace.  Under an automorphism,
\begin{equation}
  \alpha \B + \beta \P \mapsto (a \alpha + b \beta) \B + (c \alpha + d
  \beta) \P,
\end{equation}
so that all that happens is that the vector $(\alpha,\beta) \in \RR^2$
is transformed under $\GL(2,\RR)$ according to the defining
representation.  Given any nonzero vector $(\alpha,\beta) \in \RR^2$,
there is a change of basis which sends it to the elementary vector
$(1,0)$, so that up to the action of $\GL(2,\RR)$, we can take the Lie
subalgebra $\h$ to be spanned by the rotations and $\B$.

\subsubsection{Lie pairs associated to Lie algebra \#2}
\label{sec:spac-assoc-galil}

Kinematical Lie algebra \#2 in Table~\ref{tab:generic} is isomorphic
to the galilean Lie algebra.  The automorphism group $A$ is easily
determined to be
\begin{equation}
  A = \left\{
    \begin{pmatrix}  a & \zero & \zero \\ c & a \Delta & \zero \\
      \zero & \zero & \Delta \end{pmatrix} \quad \middle | \quad
    a,c,\Delta \in \RR,~a \Delta \neq 0\right\}.
\end{equation}
Any vectorial subspace $W$ of the form $\alpha \B + \beta \P$ defines
a subalgebra, since $\B$ and $\P$ commute.  Under an automorphism,
\begin{equation}
  \begin{pmatrix}
    \alpha \\ \beta
  \end{pmatrix} \mapsto
  \begin{pmatrix}
    a \alpha \\ a \Delta \beta + c \alpha
  \end{pmatrix}.
\end{equation}
If $\alpha \neq 0$, we can choose $a = \alpha^{-1}$ and $c =- a^2
\Delta \beta$ and hence bring $(\alpha, \beta) \mapsto (1,0)$, so that
$W$ is the span of $\B$.

If $\alpha = 0$, then $\beta \neq 0$ and we can choose $a\Delta =
\beta^{-1}$ so that $(0,\beta) \mapsto (0,1)$.  This means that now
$W$ is the span of $\P$.  We change basis in the Lie algebra so that
$W$ is the span of $\B$.  In the new basis, the galilean algebra
has the additional bracket
\begin{equation}
  [H,\P] = \B.
\end{equation}
This is often known as the \emph{para-galilean} algebra, but it
is of course isomorphic to the galilean algebra.

\subsubsection{Lie pairs associated to Lie algebra \#3$_\gamma$}
\label{sec:spac-assoc-kinem-3}

The automorphism group of Lie algebra \#3$_\gamma$ in
Table~\ref{tab:generic} is determined to be
\begin{equation}
  A = \left\{
    \begin{pmatrix}  a & \zero & \zero \\ \zero & d & \zero \\
      \zero & \zero & 1 \end{pmatrix} \quad \middle | \quad
    a,d \in \RR,~a d \neq 0\right\}.
\end{equation}
Any vectorial subspace $W$ of the form $\alpha \B + \beta \P$ defines
a subalgebra, since $\B$ and $\P$ commute.  Under an automorphism,
\begin{equation}
  \begin{pmatrix}
    \alpha \\ \beta
  \end{pmatrix} \mapsto
  \begin{pmatrix}
    a \alpha \\ d \beta
  \end{pmatrix}.
\end{equation}
We have three cases to consider, depending on whether $\alpha = 0$,
$\beta = 0$ or neither are zero.

If $\beta = 0$, then we can choose $a$ so that $a\alpha =1$, bringing
$(\alpha, 0) \mapsto (1,0)$.  Here $W$ is the span of $\B$.

If $\alpha = 0$, then similarly $W$ is the span of $\P$.  We change
basis so that $W$ is again the span of $\B$, which means that the
additional Lie brackets are now
\begin{equation}
  [H,\B] = \B \qquad\text{and}\qquad [H,\P] = \gamma \P.
\end{equation}

Finally, if $\alpha\beta \neq 0$, then we can choose $a,d$ so that
$(\alpha,\beta) \mapsto (1,1)$ and $W$ is spanned by $\B + \P$.  We
change basis to that $W$ is spanned by $\B$, bringing the additional
Lie brackets to the form
\begin{equation}
  [H, \B] = \P \qquad\text{and}\qquad [H,\P] = -\gamma \B +
  (1+\gamma)\P.
\end{equation}

\subsubsection{Lie pairs associated to Lie algebra \#4}
\label{sec:spac-assoc-kinem-4}

The automorphism group of Lie algebra \#4 in
Table~\ref{tab:generic} is determined to be
\begin{equation}
  A = \left\{
    \begin{pmatrix}  a & b & \zero \\ c & d & \zero \\
      \zero & \zero & 1 \end{pmatrix} \quad \middle | \quad
    a,b,c,d \in \RR,~a d - bc \neq 0\right\} \cong \GL(2,\RR).
\end{equation}
Any vectorial subspace $\alpha \B + \beta \P$ is a subalgebra, but
under the automorphisms we can always bring it to $\B$.

\subsubsection{Lie pairs associated to Lie algebra \#5}
\label{sec:spac-assoc-kinem-5}

The automorphism group of Lie algebra \#5 in
Table~\ref{tab:generic} is determined to be
\begin{equation}
  A = \left\{
    \begin{pmatrix}  a & \zero & \zero \\ \zero & d & \zero \\
      \zero & \zero & 1 \end{pmatrix} \quad \middle | \quad
    a,d \in \RR,~a d \neq 0\right\}
  \bigcup
  \left\{
    \begin{pmatrix}  \zero & b & \zero \\ c & \zero & \zero \\
      \zero & \zero & -1 \end{pmatrix} \quad \middle | \quad
    b,c \in \RR,~b c \neq 0\right\}.
\end{equation}
Any vectorial subspace $\alpha \B + \beta \P$ is a subalgebra.  The
analysis of the action of automorphisms is very similar to that of
algebra \#3, except that using automorphisms not in the identity
component, we can relate the subspace spanned by $\B$ to that spanned
by $\P$.  Therefore we have two inequivalent Lie pairs, corresponding
to the Lie algebra in the original basis:
\begin{equation}
  [H,\B] = -\B \qquad\text{and}\qquad [H,\P] = \P,
\end{equation}
and the one corresponding to
\begin{equation}
  [H, \B] = \P \qquad\text{and}\qquad [H,\P] = \B.
\end{equation}

\subsubsection{Lie pairs associated to Lie algebra \#6}
\label{sec:spac-assoc-kinem-6}

The automorphism group of Lie algebra \#6 in
Table~\ref{tab:generic} is determined to be
\begin{equation}
  A = \left\{
    \begin{pmatrix}  a & \zero & \zero \\ c & a & \zero \\
      \zero & \zero & 1 \end{pmatrix} \quad \middle | \quad
    a,c \in \RR,~a \neq 0\right\}.
\end{equation}
Any vectorial subspace $\alpha \B + \beta \P$ is a subalgebra.  Using
the automorphisms, we may send $(\alpha,\beta) \mapsto (a \alpha,
c\alpha + a\beta)$.  We can distinguish between two cases.  If $\alpha
\neq 0$, then take $a = \alpha^{-1}$ and $c = -\alpha^{-2}\beta$ so
arrive at $(1,0)$, so that $W$ is spanned by $\B$.  If $\alpha = 0$,
then $W$ is spanned by $\P$.  In the former case, we have the original Lie
brackets
\begin{equation}
  [H,\B] = \B + \P \qquad\text{and}\qquad [H,\P] = \P;
\end{equation}
although we prefer to change basis so that $\P \mapsto -(\P + \B)$.  In
that basis, the brackets are now
\begin{equation}
  [H,\B] = -\P \qquad\text{and}\qquad [H,\P] = B + 2 \P.
\end{equation}
In the latter case, we change basis so that $W$ is again
spanned by $\B$, but then the Lie brackets are now
\begin{equation}
  [H, \B] = \B \qquad\text{and}\qquad [H,\P] = \B + \P.
\end{equation}

\subsubsection{Lie pairs associated to Lie algebra \#7$_\chi$}
\label{sec:spac-assoc-kinem-7}

The automorphism group of Lie algebra \#7$_\chi$ in
Table~\ref{tab:generic} is determined to be
\begin{equation}
  A = \left\{
    \begin{pmatrix}  a & b & \zero \\ -b & a & \zero \\
      \zero & \zero & 1 \end{pmatrix} \quad \middle | \quad
    a,b \in \RR,~a^2 + b^2 \neq 0\right\}.
\end{equation}
Any vectorial subspace $\alpha \B + \beta \P$ is a subalgebra.  Using
the automorphisms we can bring any $(\alpha,\beta)$ to $(1,0)$, so
that $W$ is spanned by $\B$.  We prefer to change basis in the Lie
algebra so that $\P \mapsto -(\P + \chi \B)$.  Doing so, the Lie
brackets become
\begin{equation}
  [H,\B] = -\P \qquad\text{and}\qquad [H,\P] = (1+\chi^2)\B + 2 \chi \P.
\end{equation}

\subsubsection{Lie pairs associated to Lie algebra \#8}
\label{sec:spac-assoc-kinem-8}

The automorphism group of Lie algebra \#8 in
Table~\ref{tab:generic} is determined to be
\begin{equation}
  A = \left\{
    \begin{pmatrix}  a & b & \zero \\ -b & a & \zero \\
      \zero & \zero & 1 \end{pmatrix} \quad \middle | \quad
    a,b \in \RR,~a^2 + b^2 \neq 0\right\}
  \bigcup
   \left\{
    \begin{pmatrix}  a & b & \zero \\ b & -a & \zero \\
      \zero & \zero & -1 \end{pmatrix} \quad \middle | \quad
    a,b \in \RR,~a^2 + b^2 \neq 0\right\}.
\end{equation}
Any vectorial subspace $\alpha \B + \beta \P$ is a subalgebra. As in
the previous case, using only the automorphisms in the identity
component we can already bring any $(\alpha,\beta)$ to $(1,0)$, so
that $W$ is spanned by $\B$.  There is thus a unique Lie pair
associated to this Lie algebra.

\subsubsection{Lie pairs associated to Lie algebra \#9}
\label{sec:spac-assoc-carroll}

Lie algebra \#9 in Table~\ref{tab:generic} is isomorphic to the
Carroll algebra.  Its automorphism group is determined to be
\begin{equation}
  A = \left\{
    \begin{pmatrix}  a & b & \zero \\ c & d & \zero \\
      \zero & \zero & a d - b c \end{pmatrix} \quad \middle | \quad
    a,b,c,d \in \RR,~ad - bc \neq 0\right\} \cong \GL(2,\RR).
\end{equation}
Any vectorial subspace $\alpha \B + \beta \P$ is a subalgebra since
\begin{equation}
  [\alpha B_a + \beta P_a, \alpha B_b + \beta P_b] = \alpha\beta (
  [B_a, P_b] + [P_a,B_b]) = \alpha \beta (\delta_{ab} H - \delta_{ba}
  H) = 0.
\end{equation}
However up to the automorphisms we can always bring $(\alpha,\beta)$
to $(1,0)$ and hence $W$ is spanned by $\B$.  There is a unique
homogeneous Lie pair associated to the Carroll algebra.

\subsubsection{Lie pairs associated to Lie algebra \#10$_\varepsilon$}
\label{sec:spac-assoc-euc-poin}

Lie algebra \#10$_{-1}$ is isomorphic to the euclidean Lie algebra,
whereas \#10$_{+1}$ is isomorphic to the Poincaré Lie algebra.  The
automorphism group is determined to be
\begin{equation}
    A = \left\{
    \begin{pmatrix}  1 & \zero & \zero \\ c & d & \zero \\
      \zero & \zero & d \end{pmatrix} \quad \middle | \quad
    c,d \in \RR,~d \neq 0\right\}
  \bigcup
   \left\{
    \begin{pmatrix}  -1 & \zero & \zero \\ c & d & \zero \\
      \zero & \zero & -d \end{pmatrix} \quad \middle | \quad
    c,d \in \RR,~d \neq 0\right\}.
\end{equation}
Any vectorial subspace $\alpha \B + \beta \P$ is admissible, since
\begin{equation}
  [\alpha\B + \beta\P, \alpha\B + \beta\P] = \varepsilon\alpha^2 \J~.
\end{equation}
Under the automorphisms, we
may bring $(\alpha, \beta)$ to $(\pm\alpha, c \alpha + d \beta)$.  We
must distinguish between two cases.  If $\alpha \neq 0$, we can bring
$(\alpha,\beta)$ to $(\alpha,0)$ with $\alpha >0$, which says that
$W$ is the span of $\B$.  If $\alpha = 0$, then $W$ is spanned by
$\P$.  In the former case, we have the original Lie brackets
\begin{equation}
  [H,\B] = -\varepsilon \P \qquad [\B,\P] = H \qquad\text{and}\qquad
  [\B,\B] = \varepsilon \J,
\end{equation}
whereas in the latter case we change basis so that $W$ is again
spanned by $\B$, but this changes the Lie brackets to
\begin{equation}
  [H,\P] = \varepsilon \B \qquad [\B,\P] = H \qquad\text{and}\qquad
  [\P,\P] = \varepsilon \J,
\end{equation}
where we have also changed the sign of $H$ in order to keep the
$[\B,\P] = H$ bracket uniform.  These Lie algebras are often known as
the \emph{para-Poincaré} and \emph{para-euclidean} algebras, depending
on the sign of $\varepsilon$.

\subsubsection{Lie pairs associated to Lie algebra \#11}
\label{sec:spac-assoc-sod}

Lie algebra \#11 in Table~\ref{tab:generic} is isomorphic to
$\so(D+1,1)$.  The automorphism group consists of two connected
components:
\begin{equation}
    A = \left\{
    \begin{pmatrix}  a & \zero & \zero \\ \zero & a^{-1} & \zero \\
      \zero & \zero & 1 \end{pmatrix} \quad \middle | \quad
    a \in \RRnZ\right\}
  \bigcup
   \left\{
    \begin{pmatrix} \zero & b & \zero \\ b^{-1} & \zero & \zero \\
      \zero & \zero & -1 \end{pmatrix} \quad \middle | \quad
    b \in \RRnZ \right\}.
\end{equation}
Every vectorial subspace $\alpha \B + \beta \P$ is admissible, since
\begin{equation}
  [\alpha\B + \beta\P, \alpha\B + \beta\P] = 2\alpha\beta \J~.
\end{equation}
If either $\alpha = 0$ or $\beta = 0$, then we can use the
automorphisms to bring $(\alpha,\beta)$ to $(1,0)$, which says that
$W$ is spanned by $\B$.  If $\alpha\beta \neq 0$, then we can use the
automorphisms to bring $(\alpha,\beta)$ to
$(\sqrt{|\alpha\beta|},\pm \sqrt{|\alpha\beta|})$, depending on whether
$\alpha\beta$ is positive or negative.  This says that $W$ is spanned
by $\B\pm \P$.  Changing basis so that $W$ is again spanned by $\B$,
and redefining $H$, we arrive at the following brackets
\begin{equation}
  [H,\B] = \varepsilon\P \qquad [H,\P] = \varepsilon\B \qquad [\B,\B] = - \varepsilon \J
  \qquad [\P,\P] = \varepsilon \J \qquad\text{and}\qquad [\B,\P] = H,
\end{equation}
where $\varepsilon = \pm 1$ according to whether $\pm \alpha\beta >0$.

\subsubsection{Lie pairs associated to Lie algebra \#12$_\varepsilon$}
\label{sec:spac-assoc-sod-1}

Lie algebra \#12$_\varepsilon$ in Table~\ref{tab:generic} is
isomorphic to $\so(D,2)$ for $\varepsilon=1$ and to $\so(D+2)$ for
$\varepsilon=-1$.  The automorphism group is isomorphic to $O(2)$ and
hence has two connected components:
\begin{equation}
  A = \left\{
    \begin{pmatrix}  a & b & \zero \\ -b & a & \zero \\
      \zero & \zero & 1 \end{pmatrix} \quad \middle | \quad
    a,b \in \RR,~a^2 + b^2 = 1\right\}
  \bigcup
   \left\{
    \begin{pmatrix}  a & b & \zero \\ b & -a & \zero \\
      \zero & \zero & -1 \end{pmatrix} \quad \middle | \quad
    a,b \in \RR,~a^2 + b^2 = 1\right\}.
\end{equation}
Any vectorial subspace $\alpha \B + \beta \P$ is admissible, since
\begin{equation}
  [\alpha \B + \beta \P, \alpha \B + \beta \P] = \varepsilon
  (\alpha^2+\beta^2) \J.
\end{equation}
However using only the automorphisms connected to the identity, we can
rotate $(\alpha,\beta)$ to $(\sqrt{\alpha^2+\beta^2},0)$ and hence $W$
is spanned by $\B$.  Therefore each of these kinematical Lie algebras
has a unique Lie pair.

\subsubsection{Summary}
\label{sec:summary-dgt3}

We summarise the above results in Table~\ref{tab:summarydge3}, which
lists the equivalence classes of kinematical Lie pairs which exist for
all $D\geq 3$.  In this and other similar tables throughout the paper,
each row consists of an incremental label ``LP\#'' for the Lie pair
for easy reference in the rest of the paper and also a label ``LA\#''
of the isomorphism type of kinematical Lie algebra $\k$ to which the
Lie pair is associated.  The rest of the row contains the Lie brackets
of $\k$ in a basis where the $\h$ is spanned by $J_{ab}$ and $B_a$ and
perhaps some relevant comments.  In some cases we have made changes of
basis (leaving alone the subalgebra $\h$) in order to arrive at a more
uniform description.

\begin{table}[h!]
  \centering
  \caption{Lie pairs for kinematical Lie algebras ($D\geq 3$)}
  \label{tab:summarydge3}
  \rowcolors{2}{blue!10}{white}
  \resizebox{\textwidth}{!}{
  \begin{tabular}{l|l|*{5}{>{$}l<{$}}|l} \toprule
   \multicolumn{1}{c|}{LP\#} & \multicolumn{1}{c|}{LA\#} & \multicolumn{5}{c|}{Nonzero Lie brackets in addition to $[\J,\J] = \J $, $[\J, \B] = \B$, $[\J,\P] = \P$}& \multicolumn{1}{c}{Comments}\\\midrule
    \hypertarget{LP1}{1} & \hyperlink{LA1}{1} & & & & & & static \\
    \hypertarget{LP2}{2} & \hyperlink{LA2}{2} & [H,\B] = -\P & & & & & galilean            \\
    \hypertarget{LP3}{3} & \hyperlink{LA2}{2} & & [H,\P] = \B & & & &   \\
    \hypertarget{LP4}{4$_\gamma$} & \hyperlink{LA3}{3$_\gamma$} & [H,\B] = \gamma \B & [H,\P] = \P & & & & $\gamma \in (-1,1)$ \\
    \hypertarget{LP5}{5$_\gamma$} & \hyperlink{LA3}{3$_\gamma$} & [H,\B] = \B & [H,\P] = \gamma \P & & & & $\gamma \in (-1,1)$ \\
    \hypertarget{LP6}{6$_\gamma$} & \hyperlink{LA3}{3$_\gamma$} & [H,\B] = -\P & [H,\P] = \gamma\B + (1+\gamma)\P & & & & $\gamma \in (-1,1)$ \\
    \hypertarget{LP7}{7} & \hyperlink{LA4}{4} & [H,\B] = \B & [H,\P] = \P & & & & \\
    \hypertarget{LP8}{8} & \hyperlink{LA5}{5} & [H,\B] = -\B & [H,\P] = \P & & & & \\
    \hypertarget{LP9}{9} & \hyperlink{LA5}{5} & [H,\B] = -\P & [H,\P] = -\B & & & & galilean dS  \\
    \hypertarget{LP10}{10} & \hyperlink{LA6}{6} & [H,\B] = -\P & [H, \P] = \B + 2 \P & & & &                     \\
    \hypertarget{LP11}{11} & \hyperlink{LA6}{6} & [H,\B] = \B & [H, \P] = \B + \P & & & &                     \\
    \hypertarget{LP12}{12$_\chi$} & \hyperlink{LA7}{7$_\chi$} & [H,\B] = - \P & [H,\P] = (1+\chi^2) \B + 2 \chi \P & & & & $\chi > 0$        \\
    \hypertarget{LP13}{13} & \hyperlink{LA8}{8} & [H,\B] = -\P & [H,\P] = \B & & & & galilean  AdS \\
    \hypertarget{LP14}{14} & \hyperlink{LA9}{9} & & & & [\B,\P] = H & & carrollian    \\
    \hypertarget{LP15p}{15$_{+1}$} & \hyperlink{LA10}{10$_{+1}$} & [H,\B] = -\P & & [\B,\B] = \J & [\B,\P] = H & & Minkowski    \\
    \hypertarget{LP15m}{15$_{-1}$} & \hyperlink{LA10}{10$_{-1}$} & [H,\B] = \P & & [\B,\B] = -\J & [\B,\P] = H & & euclidean            \\
    \hypertarget{LP16p}{16$_{+1}$} & \hyperlink{LA10}{10$_{+1}$} & & [H,\P] = \B & & [\B,\P] = H & [\P,\P] =  \J & carrollian AdS      \\
    \hypertarget{LP16m}{16$_{-1}$} & \hyperlink{LA10}{10$_{-1}$} & & [H,\P] = -\B & & [\B,\P] = H & [\P,\P] = -\J & carrollian dS  \\
    \hypertarget{LP17}{17} & \hyperlink{LA11}{11} & [H,\B] = \B & [H,\P] = -\P & & [\B,\P] = H + \J & &                     \\
    \hypertarget{LP18p}{18$_{+1}$} & \hyperlink{LA11}{11} & [H,\B] = \P & [H,\P] = \B & [\B,\B]= - \J & [\B,\P] =  H & [\P,\P]= \J & hyperbolic          \\
    \hypertarget{LP18m}{18$_{-1}$} & \hyperlink{LA11}{11} & [H,\B] = -\P & [H,\P] = -\B & [\B,\B]= \J & [\B,\P] = H & [\P,\P]= - \J & de~Sitter           \\
    \hypertarget{LP19p}{19$_{+1}$} & \hyperlink{LA12}{12$_{+1}$} & [H,\B] = -\P & [H,\P] = \B & [\B,\B]= \J & [\B,\P] = H & [\P,\P] = \J & anti~de~Sitter    \\
    \hypertarget{LP19m}{19$_{-1}$} & \hyperlink{LA12}{12$_{-1}$} & [H,\B] = \P & [H,\P] = -\B & [\B,\B]= - \J & [\B,\P] =  H & [\P,\P] = - \J & sphere     \\ \bottomrule
  \end{tabular}
   }
\end{table}

\subsection{Lie pairs unique to $D=3$}
\label{sec:three-dimensions}

Table~\ref{tab:3d} lists those kinematical Lie algebras which are
unique to $D=3$.  It is a sub-table of
\cite[Table~1]{Figueroa-OFarrill:2017ycu}, with small
modifications.\footnote{In \cite[Table~1]{Figueroa-OFarrill:2017ycu}
  the rotational generator $\boldsymbol{R}$ is a vector and is related
  to $\J$ in Table~\ref{tab:3d} by $J_{ab} = - \epsilon_{abc} R_c$.}
As usual we only list any nonzero Lie brackets in addition to
\eqref{eq:kin}.

\begin{table}[h!]
  \centering
  \caption{Kinematical Lie algebras unique to $D=3$}
  \label{tab:3d}
  \rowcolors{2}{blue!10}{white}
  \setlength{\extrarowheight}{2pt}  
  \begin{tabular}{l|*{4}{>{$}l<{$}}|l} \toprule
    \multicolumn{1}{c|}{LA\#} & \multicolumn{4}{c|}{Nonzero brackets in addition to $[\J,\J] = \J$, $[\J, \B] = \B$, $[\J,\P] = \P$} & Comments \\\midrule
    \hypertarget{LA13}{13$_\varepsilon$} & & & [\B,\B]= \B &  [\P,\P] = \varepsilon (\B-\J) & $\varepsilon=\pm1$ \\
    \hypertarget{LA14}{14} & & & [\B,\B] = \B & &  \\
    \hypertarget{LA15}{15} & & & [\B, \B] = \P & &  \\
    \hypertarget{LA16}{16} & & [H ,\P] = \P & [\B,\B] = \B & & \\
    \hypertarget{LA17}{17} & [H ,\B] = -\P & & [\B,\B] = \P & & \\
    \hypertarget{LA18}{18} & [H ,\B] = \B & [H ,\P] = 2\P & [\B,\B] = \P & & \\ \bottomrule
  \end{tabular}
\end{table}

For the kinematical Lie algebras in this table, the condition on the
vectorial subspace $W$ to be admissible is very restrictive and there
is no need to worry about the action of the automorphisms.  Therefore
we have no need to determine the automorphism groups.

\subsubsection{Lie pairs associated to Lie algebra \#13$_\varepsilon$}
\label{sec:four-dimens-spac-13}

The only linear combinations $\alpha \B + \beta \P$ which are
admissible are those with $\beta = 0$, so that $W$ is already spanned
by $\B$. Therefore there is a unique spacetime for each of these
kinematical Lie algebras.

\subsubsection{Lie pairs associated to Lie algebra \#14}
\label{sec:four-dimens-spac-14}

There are two admissible subspaces: the span of $\B$ and the span of
$\P$.  In the first case, the Lie brackets are as shown in the table,
whereas in the second case, changing basis so that $W$ is spanned by
$\B$, we find
\begin{equation}
  [\P, \P] = \P.
\end{equation}

\subsubsection{Lie pairs associated to Lie algebra \#15}
\label{sec:four-dimens-spac-15}

The only admissible subspace is the one spanned by $\P$.  Changing
basis so that it spanned by $\B$, we arrive at the Lie bracket
\begin{equation}
  [\P,\P] = \B.
\end{equation}

\subsubsection{Lie pairs associated to Lie algebra \#16}
\label{sec:four-dimens-spac-16}

There are two admissible subspaces $W$: the one spanned by $\B$ and
the one spanned by $\P$.  In the former case, the Lie brackets are as
shown in the table, whereas in the latter case, changing basis so that
$W$ is spanned by $\B$ again, we arrive at
\begin{equation}
  [H,\B]=\B \qquad\text{and}\qquad [\P,\P] = \P.
\end{equation}

\subsubsection{Lie pairs associated to Lie algebra \#17}
\label{sec:four-dimens-spac-17}

There is a unique admissible subspace: the one spanned by $\P$.
Changing basis so that it is spanned by $\B$, we arrive at the Lie
bracket
\begin{equation}
  [H,\P] =-\B \qquad\text{and}\qquad [\P,\P] = \B.
\end{equation}

\subsubsection{Lie pairs associated to Lie algebra \#18}
\label{sec:four-dimens-spac-18}

The span of $\P$ is the only admissible subspace.  Changing basis so
that it is spanned by $\B$, we arrive at
\begin{equation}
  [H,\B] = 2 \B \qquad [H,\P] = \P \qquad\text{and}\qquad [\P,\P] = \B.
\end{equation}

\subsubsection{Summary}
\label{sec:summary-deq3}

Table~\ref{tab:summarydeq3} lists the Lie pairs associated to the
kinematical Lie algebras which are unique to $D=3$.

\begin{table}[h!]\small
  \centering
  \caption{Lie pairs for kinematical Lie algebras unique to $D=3$}
  \label{tab:summarydeq3}
  \rowcolors{2}{blue!10}{white}
  \begin{tabular}{l|l|*{4}{>{$}l<{$}}|l} \toprule
   \multicolumn{1}{c|}{LP\#} & \multicolumn{1}{c|}{LA\#} &\multicolumn{4}{c|}{Nonzero Lie brackets in addition to $[\J,\J] = \J$, $[\J, \B] = \B$, $[\J,\P] = \P$} & \multicolumn{1}{c}{Comments} \\\midrule
    \hypertarget{LP20}{20$_\varepsilon$} & \hyperlink{LA13}{13$_\varepsilon$} & & & [\B,\B]= \B & [\P,\P]= \varepsilon(\B- \J) & $\varepsilon=\pm 1$\\
    \hypertarget{LP21}{21} & \hyperlink{LA14}{14} & & &  [\B,\B]= \B & & \\
    \hypertarget{LP22}{22} & \hyperlink{LA14}{14} & & & & [\P,\P]= \P & \\
    \hypertarget{LP23}{23} & \hyperlink{LA15}{15} & & & & [\P,\P]= \B & \\
    \hypertarget{LP24}{24} & \hyperlink{LA16}{16} & & [H,\P] = \P &  [\B,\B]= \B & &\\
    \hypertarget{LP25}{25} & \hyperlink{LA16}{16} &  [H,\B] = \B & & & [\P,\P] = \P & \\
    \hypertarget{LP26}{26} & \hyperlink{LA17}{17} & & [H,\P] = -\B & & [\P,\P]= \B & \\    
    \hypertarget{LP27}{27} & \hyperlink{LA18}{18} &  [H,\B] = 2\B & [H,\P] = \P & &  [\P,\P] = \B & \\ \bottomrule
  \end{tabular}
\end{table}

\subsection{Lie pairs for $D=2$}
\label{sec:two-dimensions}

Table~\ref{tab:2d} lists the kinematical Lie algebras with $D=2$ in
complex form.  It is borrowed partially from
\cite[Table~1]{Andrzejewski:2018gmz}, where we explain the rationale
for working with the complexified Lie algebras.  If $\k$ is a
kinematical Lie algebra in $D=2$, we let its complexification $\k_\CC$
be the complex span of $H,J,\B,\P,\Bbar,\Pbar$, where
$\B = B_1 + i B_2$, $\P = P_1 + i P_2$, $\Bbar = B_1 - i B_2$ and
$\Pbar = P_1 - i P_2$.  The standard rotational generator $J_{ab}$ is
related to $J$ by $J_{ab} = - \epsilon_{ab} J$, from where we see that
on a vectorial generator $[J,V_a] = \epsilon_{ab} V_b$, or
equivalently $[J,\V] = -i \V$ in complex form.  The Lie
bracket of $\k_\CC$ is obtained from that of $\k$ by extending it
complex linearly.  Given $\k_\CC$, we recover $\k$ as the real Lie
algebra which is fixed under the conjugation (i.e., complex antilinear
involutive automorphism) $\star$ defined on generators by
$H^\star = H$, $J^\star = J$, $\B^\star = \Bbar$ and
$\P^\star = \Pbar$.  To see how to translate between the complex and
real notations, the bracket $[\B,\Bbar] = i H$ is equivalent to
\begin{equation}
  [B_a,B_b] = -\tfrac12 \epsilon_{ab} H,
\end{equation}
whereas $[H,\B] = i \B$ is equivalent to
\begin{equation}
  [H,B_a] = - \epsilon_{ab} B_b,
\end{equation}
et cetera.  As usual, the Lie brackets \eqref{eq:kin}, characterising
kinematical Lie algebras, are implicit in every case.

\begin{table}[h!]
  \centering
  \rowcolors{2}{blue!10}{white}
  \setlength{\extrarowheight}{2pt}  
  \caption{Kinematical Lie algebras for $D=2$ (complex form)}
  \label{tab:2d}
    \resizebox{\textwidth}{!}{
  \begin{tabular}{l|>{$}l<{$}|*{5}{>{$}l<{$}}|l} \toprule
    \multicolumn{1}{c|}{LA\#} & \multicolumn{1}{c|}{$\cong$} & \multicolumn{5}{c|}{Nonzero Lie brackets in addition to $[J,\B]=-i\B$, $[J,\P] = - i \P$ and their c.c.} & \multicolumn{1}{c}{Comments} \\\midrule
    \hypertarget{LA19}{19} & \s & & & & & & \\
    \hypertarget{LA20}{20} & \g & [H,\B] = -\P & & & & & \\
    \hypertarget{LA21}{21} & & [H,\B] = \B & [H,\P] = \B + \P & & & & \\
    \hypertarget{LA22}{22$_{\gamma+ i\chi}$} & & [H,\B] = \B & [H,\P] = (\gamma + i \chi) \P & & & & $\gamma \in [-1,1]$, $\chi \geq 0$, $\gamma + i \chi \neq -1$ \\
    \hypertarget{LA23}{23} & \n_- & [H,\B] = \B & [H,\P] = -\P & & & & \\
    \hypertarget{LA24}{24} & \n_+ & [H,\B] = i \B & & & & & \\
    \hypertarget{LA25}{25} & \c & & & & [\B,\Pbar] = H & & \\
    \hypertarget{LA26}{26$_\varepsilon$} & \choice{\p}{\e} & & [H,\P] = \varepsilon\B & & [\B,\Pbar] = 2 H & [\P,\Pbar] = \varepsilon 2i J & $\varepsilon=\pm1$ \\
    \hypertarget{LA27}{27} & \so(3,1)  & [H,\B] = \B & [H,\P] = -\P & & [\B,\Pbar] = 2(J - i H) & & \\
    \hypertarget{LA28}{28$_\varepsilon$} & \choice{\so(4)}{\so(2,2)} & [H,\B] = \varepsilon \P & [H,\P] =  -\varepsilon \B & [\B,\Bbar] =  -\varepsilon 2 i J & [\B,\Pbar]= 2 H & [\P,\Pbar] =  -\varepsilon2i J & $\varepsilon=\pm 1$ \\
    \hypertarget{LA29}{29} & & & & [\B, \Bbar]=i H & & [\P, \Pbar] = i H & \\
    \hypertarget{LA30}{30} & & [H,\B] = i \B & & [\B,\Bbar] = i H & & [\P,\Pbar] = i (H+J) & \\
    \hypertarget{LA31}{31} & & & & [\B,\Bbar] = i H & & & \\
    \hypertarget{LA32}{32} & & [H,\B] = \P & & [\B,\Bbar] = i H & & & \\
    \hypertarget{LA33}{33$_\varepsilon$} & & [H,\B] = i \varepsilon \B & & [\B,\Bbar] = i H & & & $\varepsilon = \pm 1$ \\ \bottomrule
  \end{tabular}
  } 
\end{table}

Automorphisms are also different than for $D>2$ due to the fact that
$\r$ is one-dimensional.  If $\k$ is a kinematical Lie algebra in
Table~\ref{tab:2d}, then the group of automorphisms is a subgroup of
$\GL(2,\CC) \times \GL(2,\RR)$.  It consists of pairs
\begin{equation}
  \begin{pmatrix}
    a & b \\ c & d 
  \end{pmatrix} \in \GL(2,\CC) \qquad\text{and}\qquad
  \begin{pmatrix}
    r & t \\ s & 1
  \end{pmatrix} \in \GL(2,\RR)
\end{equation}
acting on the basis by
\begin{equation}
  (\B, \P) \mapsto (a\B + c\P, b\B + d\P) \qquad\text{and}\qquad (H,J)
  \mapsto (r H + s J, J + t H).
\end{equation}
(This uses the observation that for any of the Lie algebras in
Table~\ref{tab:2d}, the only elements in the real span of $H$
and $J$ which act on $\B$ and $\P$ as rotations have the form $J + t
H$ and $t$ can be nonzero only when $H$ is central, which  is the case
for the static and Carroll algebras and also for the kinematical Lie
algebras \#29 and \#31 in Table~\ref{tab:2d}.)

If $\h$ is an admissible subalgebra, its complexification $\h_\CC$ is
the complex span of
$\J + t H,\alpha\B + \beta \P, \bar\alpha\Bbar + \bar\beta\Pbar$, for
some $\alpha,\beta \in \CC$ not both zero and some $t \in \RR$. (Again
this uses the above mentioned observation that the rotational
generators are of the form $\J + t H$.)  The real subalgebra $\h$ is
then the real span of $\J + t H$ and the real and imaginary parts of
$\alpha \B + \beta \P$.  We let $W$ denote the two-dimensional real
vector space spanned by the real and imaginary parts of
$\alpha \B + \beta \P$, which transforms as a two-dimensional real
(vector) representation of $\r$.  We will, however, abbreviate this by
saying $W$ is spanned by $\alpha\B + \beta\P$.

\subsubsection{Lie pairs associated to Lie algebra \#19}
\label{sec:three-dimens-spac-19}

Any subspace $\alpha \B + \beta \P$ is admissible. The automorphism
group is the full group
\begin{equation}
  A = \left\{
    \begin{pmatrix}
      a & b & & \\ c & d & & \\ & & r & t \\ & & s & 1
    \end{pmatrix}
\quad \middle | \quad a,b,c,d \in \CC,~r,s,t \in \RR,~r(ad-bc) \neq 0\right\}.
\end{equation}
As in the case of $D\geq 3$, we may bring $(\alpha, \beta)$ to $(1,0)$
by an automorphism which preserves $H$ and $J$, so we can take $W$ to
be spanned by $\B$ without loss of generality.  There is thus a unique
spacetime associated to this Lie algebra.

\subsubsection{Lie pairs associated to the Lie algebra \#20}
\label{sec:three-dimens-spac-20}

Any subspace $\alpha \B + \beta \P$ is admissible.  The automorphism
group is now
\begin{equation}
  A = \left\{
    \begin{pmatrix}
      a & \zero & & \\ c & a r & & \\ & & r & \zero \\ & & \zero & 1
    \end{pmatrix}
\quad \middle | \quad a,c \in \CC,~r \in \RR,~a r \neq 0\right\}.
\end{equation}
Using the automorphisms, we can bring any nonzero $(\alpha,\beta) \in
\CC^2$ to either $(1,0)$ or $(0,1)$, depending on whether or not
$\alpha = 0$.  In the former case, $W$ is spanned by $\B$ and the
nonzero brackets are unchanged.  In the latter case, $W$ is spanned by
$\P$.  Changing basis so that it is again spanned by $\B$ changes the
nonzero brackets to $[H,\P] = -\B$.

\subsubsection{Lie pairs associated to Lie algebra \#21}
\label{sec:three-dimens-spac-21}

Again every nonzero subspace $\alpha \B + \beta \P$ is admissible.
The automorphisms are given by
\begin{equation}
  A = \left\{
    \begin{pmatrix}
      a & b & & \\ 0 & a & & \\ & & 1 & \zero \\ & & \zero & 1
    \end{pmatrix}
\quad \middle | \quad a,b\in \CC,~a \neq 0\right\}.
\end{equation}
Using automorphisms we can bring any nonzero $(\alpha,\beta)$ to one
of two vectors: $(1,0)$ or $(0,1)$, depending on whether or not
$\beta = 0$.  In the former case, $W$ is already spanned by $\B$ and
the nonzero brackets are unchanged.  In the latter case, $W$ is
spanned by $\P$, but changing basis so that it is again spanned by
$\B$, results in the following nonzero Lie brackets:
\begin{equation}
  [H,\B] = \B + \P \qquad\text{and}\qquad [H,\P] = \P.
\end{equation}

\subsubsection{Lie pairs associated to Lie algebras \#22$_{\gamma+i \chi}$ and \#23}
\label{sec:three-dimens-spac-22-23}

These two cases can be treated simultaneously, with brackets
\begin{equation}
  [H,\B] = \B \qquad\text{and}\qquad [H,\P] = \xi \P,
\end{equation}
for $\xi \in \CC$ given by $\xi = \gamma + i \chi$ with
$\gamma\in[-1,1]$ and $\chi\geq 0$.  For kinematical Lie algebra
\#22$_{\gamma+i\chi}$, $\xi \neq -1$, which corresponds to Lie algebra
\#23.  It is convenient to distinguish three cases:
\begin{enumerate}
\item $\xi = 1$;
\item $\Re\xi = -1$; and
\item $\xi\neq 1$ and $\Re\xi \neq -1$.
\end{enumerate}
In all cases, any nonzero subspace $\alpha \B + \beta \P$ is
admissible, but the automorphisms differ.

For $\xi = 1$, the automorphisms are given by
\begin{equation}
  A = \left\{
    \begin{pmatrix}
      a & b & & \\ c & d & & \\ & & 1 & \zero \\ & & \zero & 1
    \end{pmatrix}
    \quad \middle | \quad a,b,c,d\in \CC,~ad-bc \neq 0\right\},
\end{equation}
and it is clear that we can bring any nonzero $(\alpha,\beta) \in
\CC^2$ to $(1,0)$.  So without loss of generality we can take $W$ to
be spanned by $\B$ and the brackets are unchanged:
\begin{equation}
  [H,\B] = \B \qquad\text{and}\qquad [H,\P] = \P.
\end{equation}

For $\Re\xi = -1$, the automorphisms are given by
\begin{equation}
  A = \left\{
    \begin{pmatrix}
      a & \zero & & \\ \zero & d & & \\ & & 1 & \zero \\ & & \zero & 1
    \end{pmatrix}
    \quad \middle | \quad a,d\in \CC,~ ad \neq 0\right\} \bigcup \left\{
    \begin{pmatrix}
      \zero & b & & \\ c & \zero & & \\ & & -1 & \zero \\ & & -\chi & 1
    \end{pmatrix}
    \quad \middle | \quad b,c\in \CC,~bc \neq 0\right\}.
\end{equation}
Any nonzero $(\alpha,\beta) \in \CC^2$ may be brought to one of two
normal forms: $(1,0)$ or $(1,1)$ according to whether one of $\alpha$
or $\beta$ are zero or neither are zero, respectively.  In the first 
case, $W$ is already spanned by $\B$ and there is no need to change
basis.  In the second case, $W$ is spanned by $\B + \P$ and after the
change of basis, the new brackets are
\begin{equation}
  [H,\B] = \P \qquad\text{and}\qquad [H,\P] = -\xi \B + (1+\xi) \P.
\end{equation}

Finally, if $\xi \neq 1$ and $\Re\xi \neq -1$, the automorphisms are
\begin{equation}
  A =\left\{
    \begin{pmatrix}
      a & \zero & & \\ \zero & d & & \\ & & 1 & \zero \\ & & \zero & 1
    \end{pmatrix}
    \quad \middle | \quad a,d\in \CC,~ ad \neq 0\right\}.
\end{equation}
Again we can bring any nonzero $(\alpha,\beta)$ to one of three normal
forms $(1,0)$ (if $\beta = 0$), $(0,1)$ (if $\alpha = 0$) or $(1,1)$
otherwise.  In the first case there is no need to change basis.  In
the second case, $W$ is spanned by $\P$ and changing basis results in
\begin{equation}
  [H,\B] = \xi B \qquad\text{and}\qquad [H,\P] = \P.
\end{equation}
Finally, in the last case, $W$ is spanned by $\B + \P$.  Changing
basis results again in
\begin{equation}
  [H,\B] = \P \qquad\text{and}\qquad [H,\P] = -\xi \B + (1+\xi) \P.
\end{equation}

\subsubsection{Lie pairs associated to Lie algebra \#24}
\label{sec:three-dimens-spac-24}

Again every nonzero subspace $W$ is admissible.  The automorphism
group has now two connected components:
\begin{equation}
  A = \left\{
    \begin{pmatrix}  a & \zero & & \\ \zero & d & & \\
      & & 1 & \zero \\ & & \zero & 1 \end{pmatrix} \quad \middle | \quad
    a,d \in \CC,~ad \neq 0\right\}
  \bigcup
   \left\{
    \begin{pmatrix}  \zero & b & & \\ c & \zero & & \\
      & & -1 & \zero \\ & & -1 & 1 \end{pmatrix} \quad \middle | \quad
    b,c \in \CC,~bc \neq 0\right\}.
\end{equation}
Via an automorphism, any nonzero $(\alpha,\beta) \in \CC^2$ can be
brought to either $(1,0)$ if at least one of $\alpha$ or $\beta$ is
zero, or $(1,1)$ otherwise.  In the former case, $W$ is already
spanned by $\B$ and the brackets are unchanged, whereas in the latter
case $W$ is spanned by $\B + \P$ and changing basis changes the
brackets to
\begin{equation}
  [H,\B] = -\P \qquad\text{and}\qquad [H,\P] = \B.
\end{equation}

\subsubsection{Lie pairs associated to Lie algebra \#25}
\label{sec:three-dimens-spac-25}

A subspace $\alpha \B + \beta \P$ is admissible if and only if
$\alpha\bar\beta \in \RR$: indeed,
\begin{equation}
  [\alpha \B + \beta \P, \bar\alpha \Bbar + \bar\beta\Pbar] =
  (\alpha\bar\beta - \bar\alpha \beta) H.
\end{equation}
The condition $\alpha\bar\beta \in \RR$ means that we can choose
$\alpha$ and $\beta$ real and still span the same subspace.  It is
unnecessary to determine the precise automorphism group.  It suffices
to remark that it includes the subgroup
\begin{equation}
  A = \left\{
    \begin{pmatrix}
      a & b & & \\ c & d & & \\ & & ad-bc & \zero \\ & & \zero & 1
    \end{pmatrix}
\quad \middle | \quad a,b,c,d \in \RR,ad-bc \neq 0\right\},
\end{equation}
and that using such an automorphism, we can take any
$(\alpha,\beta)\in\RR^2$ to $(1,0)$.  There is thus a unique
spacetime.

\subsubsection{Lie pairs associated to Lie algebra \#26$_\varepsilon$}
\label{sec:three-dimens-spac-26}

A subspace $\alpha \B + \beta \P$ is admissible if and only if $\alpha
\bar\beta$ is real:
\begin{equation}
  [\alpha \B + \beta \P, \bar\alpha \Bbar + \bar\beta \Pbar] = 2 (\alpha
  \bar\beta - \bar\alpha\beta) H + 2 i \varepsilon |\beta|^2 J.
\end{equation}
As in the previous case, we can take $\alpha,\beta$ both real without
altering their span.

The automorphism group consists of
\begin{equation}
  A = \left\{
    \begin{pmatrix}
      r d & u d & & \\ \zero & d & & \\ & & r & \zero \\ & & \zero & 1
    \end{pmatrix}
\quad \middle | \quad r,u \in\RR,~d \in \CC,~ r\neq 0,~|d|=1\right\}.
\end{equation}
Under automorphisms we can bring every admissible subspace to either
the span of the $\B$ or the span of the $\P$.  In the former case we
do not change the brackets, whereas in the latter case we change basis
and arrive at
\begin{equation}
  [H,\B] = \varepsilon\P \qquad [\B,\Pbar] = -2 H
  \qquad\text{and}\qquad [\B,\Bbar] = 2\varepsilon i J.
\end{equation}

\subsubsection{Lie pairs associated to Lie algebra \#27}
\label{sec:three-dimens-spac-27}

Because of the bracket
\begin{equation}
  [\alpha \B + \beta \P, \bar\alpha \Bbar + \bar\beta \Pbar] = 2
  (\alpha \bar\beta - \bar\alpha \beta) J - 2 i (\alpha \bar\beta +
  \bar\alpha \beta) H,
\end{equation}
a subspace $\alpha \B + \beta \P$ is admissible if and only if
$\alpha\bar\beta + \bar\alpha\beta = 0$.

The automorphism group in this case can be determined to consist of
two connected components
\begin{equation}
  A = \left\{
    \begin{pmatrix}  a & \zero & & \\ \zero & {\bar a}^{-1} & & \\
      & & 1 & \zero \\ & & \zero & 1 \end{pmatrix} \quad \middle | \quad
    a \in \CC,~a \neq 0\right\}
  \bigcup
   \left\{
    \begin{pmatrix}  \zero & b & & \\ -{\bar b}^{-1} & \zero & & \\
      & & -1 & \zero \\ & & \zero & 1 \end{pmatrix} \quad \middle | \quad
    b \in \CC,~b \neq 0\right\}.
\end{equation}

Under an automorphism in the identity component, $(\alpha,\beta)
\mapsto (a \alpha, \beta/\bar a)$, whereas under an automorphism in
the other component $(\alpha,\beta) \mapsto (b\beta, -\alpha/\bar
b)$.  Notice that the product $\alpha\bar\beta$ remains invariant
under automorphisms, provided that $(\alpha,\beta)$ is admissible, so
that $\alpha\bar\beta = - \bar\alpha\beta$.

Suppose that $\alpha\bar\beta = 0$.  Then either $\alpha = 0$ or
$\beta = 0$ and we may apply an automorphism to send such
$(\alpha,\beta)$ to $(1,0)$.  On the other hand, if $\alpha\bar\beta =
i t$ for some real $t \neq 0$, we can always choose an automorphism to
bring $(\alpha,\beta)$ to a multiple of $(1,\pm i)$.  The former case,
$W$ is spanned by $\B$ and in the latter cases by $\B \pm i \P$.  In
these latter cases, changing basis so that $W$ is spanned again by
$\B$, we find the following brackets:
\begin{equation}
  [H,\B] = \P \qquad [H,\P] = \B \qquad [\B,\Bbar] = \mp i J \qquad
  [\P,\Pbar] = \pm i J \qquad\text{and}\qquad [\B,\Pbar] = \pm H.
\end{equation}

\subsubsection{Lie pairs associated to Lie  algebra \#28$_\varepsilon$}
\label{sec:three-dimens-spac-28}

A subspace $\alpha \B + \beta \P$ is admissible if and only if $\alpha
\bar\beta \in \RR$.  Such an admissible subspace is therefore always
of the form $\alpha \B + \beta \P$ with $\alpha,\beta \in \RR$.
It is not necessary to determine the precise form of the automorphism
group, but simply to remark that it contains the following $\SO(2)$
subgroup:
\begin{equation}
  \left\{
    \begin{pmatrix}
      \cos\theta & -\sin\theta & & \\ \sin\theta & \cos\theta & & \\ & & 1 & \zero \\ & & \zero & 1
    \end{pmatrix}
\quad \middle | \quad \theta \in\RR\right\}.
\end{equation}
Any $(\alpha,\beta) \in \RR^2$ can be rotated to
$(\sqrt{\alpha^2+\beta^2},0)$ which corresponds to the subspace
spanned by $\B$.  Therefore there is a unique spacetime for each sign
$\varepsilon$.

\subsubsection{No Lie pairs associated to Lie algebras \#29  and \#30}
\label{sec:three-dimens-spac-29-30}

There is no admissible subalgebra in these cases, since in one case
\begin{equation}
  [\alpha \B + \beta \P, \bar\alpha \Bbar + \bar\beta \Pbar] = i
  (|\alpha|^2 + |\beta|^2) H
\end{equation}
and in the other
\begin{equation}
  [\alpha \B + \beta \P, \bar\alpha \Bbar + \bar\beta \Pbar] = i
  (|\alpha|^2 + |\beta|^2) H + i |\beta|^2 J,
\end{equation}
whereas the right-hand sides do not span a rotational subalgebra.
Therefore there are no spacetimes associated these kinematical Lie
algebras.

\subsubsection{Lie pairs associated to Lie algebras \#31, \#32 and \#33$_\varepsilon$}
\label{sec:three-dimens-spac-31-32-33}

We can treat these cases simultaneously, since they only differ in the
adjoint action of $H$.  In all cases, the only admissible subspace is
the one spanned by $\P$, since
\begin{equation}
  [\alpha \B + \beta \P, \bar\alpha \Bbar + \bar\beta \Pbar] = i
  |\alpha|^2 H,
\end{equation}
and $H$ does not span a rotational subalgebra.  Changing basis so that
it is spanned by $\B$, changes the brackets to
\begin{equation}
  [H,\P] =
  \begin{cases}
    0 \\ \B \\ i \varepsilon \P
  \end{cases}
  \qquad\text{and}\qquad
  [\P,\Pbar] = i H.
\end{equation}

\subsubsection{Summary}
\label{sec:summary-deq2}

In Table~\ref{tab:summary2d-real}, we list the Lie pairs associated to
the kinematical Lie algebras in $D=2$. We have reverted to the real
form of the $D=2$ Lie algebras. The notation is then very similar to
$D\geq 3$.  In particular, the rotational generator $\J$ is again
$J_{ab}$ as in $D\geq 3$.  This eases the comparison with $D\geq 3$
and allows us to quickly determine which Lie pairs are unique to $D=2$
and which are just the $D=2$ avatars of Lie pairs which exist also for
$D>2$.  The main difference between $D=2$ and $D>2$ is that some
brackets feature the $\epsilon$ symbol and we have therefore
introduced $\Pt$ and $\Bt$ as explained at the start of
Section~\ref{sec:summary-main-results}.  We have also changed basis in
some cases in order to uniformise the presentation and ease the
comparison with $D>2$.

The table is divided into two: the ones above the horizontal line are
$D=2$ avatars of Lie pairs which exist also for $D>2$.  The
precise dictionary is shown in Table~\ref{tab:deq2correspondence}.  The
ones below the line exist only for $D=2$.

\begin{table}[h!]
  \centering
  \caption{Lie pairs for kinematical Lie algebras ($D=2$, real form)}
  \label{tab:summary2d-real}
  \resizebox{\textwidth}{!}{
  \rowcolors{2}{blue!10}{white}
  \begin{tabular}{l|l|*{5}{>{$}l<{$}}|l} \toprule
    \multicolumn{1}{c|}{LP\#}&  \multicolumn{1}{c|}{LA\#}& \multicolumn{5}{c|}{Nonzero Lie brackets in addition to $[\J, \B] = \B$ and $[\J,\P] = \P$} & \multicolumn{1}{c}{Comments} \\\midrule
     \hypertarget{LP28}{28} & \hyperlink{LA19}{19} & & & & & & static\\
     \hypertarget{LP29}{29} & \hyperlink{LA20}{20} &  [H,\B] = -\P & & & & & galilean\\
     \hypertarget{LP30}{30} & \hyperlink{LA20}{20} & & [H,\P] = \B & & & & \\
     \hypertarget{LP31}{31} & \hyperlink{LA21}{21} &  [H,\B] = - \P & [H, \P] = \B + 2 \P & & & & \\
     \hypertarget{LP32}{32} & \hyperlink{LA21}{21} &  [H,\B] = \B & [H, \P] = \B + \P & & & & \\
     \hypertarget{LP33}{33} & \hyperlink{LA22}{22$_{1+i0}$} &  [H,\B] = \B & [H, \P] = \P & & & & \\
     \hypertarget{LP34}{34} & \hyperlink{LA23}{23} &  [H,\B] = -\B & [H,\P] = \P & & & & \\
     \hypertarget{LP35}{35} & \hyperlink{LA23}{23} &  [H,\B] = -\P & [H,\P] = -\B & & & & galilean dS\\
     \hypertarget{LP36}{36$_\gamma$} & \hyperlink{LA22}{22$_{\gamma+i0}$} & [H,\B] = \B & [H,\P] = \gamma \P & & & & $-1<\gamma<1$\\
     \hypertarget{LP37}{37$_\gamma$} & \hyperlink{LA22}{22$_{\gamma+i0}$} &  [H,\B] = \gamma \B & [H,\P] = \P & & & &  $-1<\gamma<1$\\
     \hypertarget{LP38}{38$_\gamma$} & \hyperlink{LA22}{22$_{\gamma+i0}$} &  [H,\B] = -\P & [H,\P] = \gamma \B + (1+\gamma) \P & & & &  $-1<\gamma<1$\\
     \hypertarget{LP39}{39} & \hyperlink{LA24}{24} &  [H,\B] = -\P & [H,\P] = \B & & & & galilean AdS\\
     \hypertarget{LP40}{40} & \hyperlink{LA25}{25} & & & & [\B,\P] = H & & carrollian\\
     \hypertarget{LP41}{41} & \hyperlink{LA26}{26$_{+1}$} & & [H,\P] = \B & & [\B,\P] =  H & [\P,\P] = \J & carrollian AdS\\
     \hypertarget{LP42}{42} & \hyperlink{LA26}{26$_{-1}$} & & [H,\P] = - \B & & [\B,\P] = H & [\P,\P] = -\J & carrollian dS\\
     \hypertarget{LP43}{43} & \hyperlink{LA26}{26$_{+1}$} &  [H,\B] = -\P & &  [\B,\B] = \J & [\B,\P] = H & & Minkowski\\
     \hypertarget{LP44}{44} & \hyperlink{LA26}{26$_{-1}$} &  [H,\B] = \P & & [\B,\B] = -\J & [\B,\P] = H & & euclidean\\
     \hypertarget{LP45}{45} & \hyperlink{LA27}{27} &  [H,\B] = \B & [H,\P] = -\P & & [\B,\P] = H + \J & & \\
     \hypertarget{LP46}{46} & \hyperlink{LA27}{27} &  [H,\B] = \P & [H,\P] = \B &  [\B,\B] = - \J & [\B,\P]= H & [\P,\P]= \J& hyperbolic\\
     \hypertarget{LP47}{47} & \hyperlink{LA27}{27} &  [H,\B] = -\P & [H,\P] = -\B &  [\B,\B] = \J & [\B,\P]= H & [\P,\P]= - \J& de~Sitter\\
     \hypertarget{LP48}{48} & \hyperlink{LA28}{28$_{+1}$} &  [H,\B] = \P & [H,\P] = -\B &  [\B,\B]= -\J & [\B,\P] = H &  [\P,\P] = -\J & sphere\\
     \hypertarget{LP49}{49} & \hyperlink{LA28}{28$_{-1}$} &  [H,\B] = -\P & [H,\P] = \B &  [\B,\B]= \J & [\B,\P] = H &  [\P,\P] = \J & anti~de~Sitter\\
     \hypertarget{LP50}{50$_\chi$} & \hyperlink{LA22}{22$_{1+i\chi}$} &  [H,\B] = -\P & [H,\P] = (1+4\chi^{-2}) \B + 4 \chi^{-1}\P & & & &  $\chi > 0$\\\midrule
     \hypertarget{LP51}{51$_\chi$} & \hyperlink{LA22}{22$_{1+i\chi}$} &  [H,\B] = \B - \chi \Bt & [H,\P] = \P & & & &  $\chi > 0$\\
     \hypertarget{LP52}{52$_\chi$} & \hyperlink{LA22}{22$_{1+i\chi}$} & [H,\B] = \B & [H,\P] = \P - \chi \Pt & & & & $\chi > 0$\\
     \hypertarget{LP53}{53$_{\gamma,\chi}$} & \hyperlink{LA22}{22$_{\gamma+i\chi}$} & [H,\B] = \B & [H,\P] = \gamma \P - \chi \Pt & & & & $-1\leq\gamma<1, \chi >0$\\
     \hypertarget{LP54}{54$_{\gamma,\chi}$} & \hyperlink{LA22}{22$_{\gamma+i\chi}$} &  [H,\B] = \gamma \B - \chi \Bt & [H,\P] = \P & & & &  $-1\leq\gamma<1, \chi >0$\\
     \hypertarget{LP55}{55$_{\gamma,\chi}$} & \hyperlink{LA22}{22$_{\gamma+i\chi}$} &  [H,\B] = - \P & [H,\P] = (1+\gamma) \P - \chi \Pt + \gamma \B - \chi \Bt & & & &  $-1\leq\gamma<1, \chi >0$\\
     \hypertarget{LP56}{56} & \hyperlink{LA24}{24} &  [H,\B] = - \Bt & & & & & \\
     \hypertarget{LP57}{57} & \hyperlink{LA31}{31} & & & & & [\P,\P] = H &\\
     \hypertarget{LP58}{58} & \hyperlink{LA32}{32} & & [H,\P] = \B & & & [\P,\P] = - H &\\
     \hypertarget{LP59}{59} & \hyperlink{LA33}{33$_{+1}$} & & [H,\P] = \Pt & & & [\P,\P] = H &\\
     \hypertarget{LP60}{60} & \hyperlink{LA33}{33$_{-1}$} & & [H,\P] = \Pt & & & [\P,\P] = - H &\\ \bottomrule
  \end{tabular}}
\end{table}

\begin{table}[h!]
  \centering
  \caption{Correspondence between $D=2$ and $D\geq 3$ Lie pairs}
  \label{tab:deq2correspondence}
  \begin{tabular}{l|l} \toprule
    \multicolumn{1}{c|}{$D=2$} & \multicolumn{1}{c}{$D\geq 3$}\\\midrule
    28 & 1\\
    29 & 2\\
    30 & 3\\
    31 & 10\\
    32 & 11\\
    33 & 7\\
    34 & 8\\
    35 & 9 \\\bottomrule
  \end{tabular}
  \hspace{3cm}
  \begin{tabular}{l|l} \toprule
    \multicolumn{1}{c|}{$D=2$} & \multicolumn{1}{c}{$D\geq 3$}\\\midrule
    36$_\gamma$ & 5$_\gamma$\\
    37$_\gamma$ & 4$_\gamma$\\
    38$_\gamma$ & 6$_\gamma$\\
    39 & 13\\
    40 & 14\\
    41 & 16$_{+1}$\\
    42 & 16$_{-1}$\\ \bottomrule
  \end{tabular}
  \hspace{3cm}
  \begin{tabular}{l|l} \toprule
    \multicolumn{1}{c|}{$D=2$} & \multicolumn{1}{c}{$D\geq 3$}\\\midrule
    43 & 15$_{+1}$\\
    44 & 15$_{-1}$\\
    45 & 17\\
    46 & 18$_{+1}$\\
    47 & 18$_{-1}$\\
    48 & 19$_{-1}$\\
    49 & 19$_{+1}$\\
    50$_\chi$ & 12$_{2/\chi}$ \\\bottomrule
  \end{tabular}
\end{table}

\subsection{Lie pairs for $D=1$}
\label{sec:one-dimension}

Finally, we consider the Lie pairs associated to the $D=1$ kinematical
Lie algebras.  Since there are no rotations in $D=1$, every
three-dimensional real Lie algebra is kinematical.  Such Lie algebras
were classified by Bianchi \cite{Bianchi} as part of his
classification of three-dimensional homogeneous manifolds.  In this
section, however, they will be associated to two-dimensional
homogeneous spaces.  The Bianchi classification of three-dimensional
real Lie algebras is recalled in Table~\ref{tab:bianchi}.  We have
omitted Bianchi III because it is isomorphic to Bianchi VI$_1$.

\begin{table}[h!]
  \caption{Kinematical Lie algebras for $D=1$}
  \label{tab:bianchi}
  \rowcolors{2}{blue!10}{white}
  \centering
  \begin{tabular}{l|>{$}l<{$}|*{3}{>{$}l<{$}}|l} \toprule
    \multicolumn{1}{c|}{Bianchi} & \multicolumn{1}{c|}{$\cong$} &
    \multicolumn{3}{c|}{Nonzero brackets} & \multicolumn{1}{c}{Comments} \\\midrule
    \hypertarget{BianchiI}{I} & \s & & & & \\
    \hypertarget{BianchiII}{II} & \g \cong \c & & & [\be_2,\be_3]=\be_1 & \\
    \hypertarget{BianchiIV}{IV} & & & [\be_1,\be_3]=\be_1 &  [\be_2,\be_3]=\be_1+\be_2 & \\
    \hypertarget{BianchiV}{V} & & & [\be_1,\be_3]=\be_1 &  [\be_2,\be_3]=\be_2 & \\
    \hypertarget{BianchiVI0}{VI$_0$} & \n_- \cong \p & & [\be_1,\be_3]=- \be_1 & [\be_2,\be_3]=\be_2 & \\
    \hypertarget{BianchiVI}{VI$_\chi$} & & & [\be_1,\be_3]=(\chi-1) \be_1 & [\be_2,\be_3]=(\chi+1) \be_2 & $\chi>0$ \\
    \hypertarget{BianchiVII0}{VII$_0$} & \n_+ \cong \e & & [\be_1,\be_3]=- \be_2 & [\be_2,\be_3]=\be_1 & \\
    \hypertarget{BianchiVII}{VII$_\chi$} & &  & [\be_1,\be_3]=\chi \be_1 - \be_2 & [\be_2,\be_3]=\be_1 + \chi \be_2 & $\chi>0$ \\
    \hypertarget{BianchiVIII}{VIII} & \so(1,2) &  [\be_1,\be_2]=-\be_3 &  [\be_1,\be_3]=-\be_2 & [\be_2,\be_3]=\be_1 & \\
    \hypertarget{BianchiIX}{IX} & \so(3) &  [\be_1,\be_2]=\be_3 & [\be_1,\be_3]=-\be_2 & [\be_2,\be_3]=\be_1 & \\ \bottomrule
  \end{tabular}
\end{table}

Because of dimension and the absence of rotations, every
one-dimensional subspace is an admissible subalgebra.  Therefore to
classify pairs $(\k,\h)$ we need, for each $\k$ in the Bianchi
classification, classify the one-dimensional subspaces up to the
action of automorphisms.  In other words, we must determine the orbits
of the action of the automorphism group of $\k$ on the projective
space $\PP\k$ of one-dimensional subspaces of $\k$.  We will now do
this for every Bianchi type in turn.  We will choose a basis for $\k$
where $\h$ is spanned by $B$ and will choose $P,H$, if possible, so
that their span is $\h$-invariant.  The resulting Lie pairs are
summarised in Table~\ref{tab:summary1d} below.  Those Lie pairs below
the horizontal line correspond to Lie algebras unique to $D=1$;
although the Lie pair might actually be the $1+1$ case of a Lie pair
which exists for all $D$.

One notational remark about the automorphism groups.  Since we have
chosen a basis $(\be_1,\be_2,\be_3)$ for $\k$, we can identify the
automorphism group $A = \Aut(\k)$ with a subgroup of $\GL(3,\RR)$ and
therefore we will be describing it as a set of matrices.

\subsubsection{Bianchi I}
\label{sec:bianchi-i}

Since $\k$ is abelian, every invertible linear map is an
automorphism.  The general linear group acts transitively on the
projective space, so we can take $\h$ to be spanned by $B= \be_1$,
say.

\subsubsection{Bianchi II}
\label{sec:bianchi-ii}

Here $\k$ is the Heisenberg algebra, whose automorphisms can be
determined to be
\begin{equation}
  A = \left\{
    \begin{pmatrix}
      ad - bc & \alpha & \beta \\
      \zero & a & b\\
      \zero & c & d\\
    \end{pmatrix}
    \middle | a,b,c,d,\alpha,\beta \in \RR,~ad \neq bc \right\}.
\end{equation}

A nonzero vector $x^1 \be_1 + x^2 \be_2 + x^3 \be_3$ determines a
one-dimensional subspace and hence an admissible subalgebra.  Under a
typical automorphism, this vector transforms as
\begin{equation}
  \begin{pmatrix}
    x^1 \\ x^2 \\ x^3
  \end{pmatrix} \mapsto
  \begin{pmatrix}
    ad - bc & \alpha & \beta \\
    \zero & a & b\\
    \zero & c & d\\
  \end{pmatrix}
  \begin{pmatrix}
    x^1 \\ x^2 \\ x^3
  \end{pmatrix} =
  \begin{pmatrix}
    (ad - bc) x^1 + \alpha x^2 + \beta x^3  \\ a x^2 + b x^3 \\ c x^2
    + d x^3
  \end{pmatrix} 
\end{equation}

The line spanned by $\be_1$ is sent to itself, whereas any other line
can be transformed to the line spanned by $\be_2$, say.

In the first case we let $B = \be_1$, $P= \be_2$ and $H = \be_3$,
arriving at the Lie algebra with nonzero brackets
\begin{equation}
  [P,H] = B.
\end{equation}
  
In the second case we let $B = \be_2$, $P = \be_1$ and $H = \be_3$,
arriving at
\begin{equation}
  [B,H] = P.
\end{equation}

\subsubsection{Bianchi IV}
\label{sec:bianchi-iv}

The automorphism group is now
\begin{equation}
  A = \left\{
    \begin{pmatrix}
      a & b & \alpha \\
      \zero & a & \beta\\
      \zero & \zero & 1\\
    \end{pmatrix}
    \middle | a,b,\alpha,\beta \in \RR,~a \neq 0 \right\}.
\end{equation}
There are three $A$-orbits in $\PP\k$, labelled by the vectors
$(1,0,0)$, $(0,1,0)$ and $(0,0,1)$.

In the first case, $B = \be_1$, $P = \be_2$ and $H = \be_3$, with
nonzero brackets
\begin{equation}
  [B,H] = B \qquad\text{and}\qquad [P,H] = B + P.
\end{equation}

In the second case, $B= \be_2$, $P= \be_1$ and $H= \be_3$, with
nonzero brackets
\begin{equation}
  [B, H] = P \qquad\text{and}\qquad [P,H] = 2P - B.
\end{equation}

In the final case, $B = \be_3$, $P= \be_1$ and $H = \be_2$, with
nonzero brackets
\begin{equation}
  [B,P] = -P \qquad\text{and}\qquad [B,H] = - P - H.
\end{equation}

\subsubsection{Bianchi V}
\label{sec:bianchi-v}

The automorphism group in this case is
\begin{equation}
    A = \left\{
    \begin{pmatrix}
      a & b & \alpha\\
      c & d & \beta\\
      \zero & \zero & 1\\
    \end{pmatrix}
    \middle | a,b,c,d,\alpha,\beta \in \RR,~ad \neq bc \right\}.
\end{equation}

There are two $A$-orbits in $\PP\k$ labelled by the vectors $(1,0,0)$
and $(0,0,1)$.

In the first case, $B = \be_1$, $P = \be_2$ and $H = \be_3$, with
nonzero brackets
\begin{equation}
  [B,H] = B \qquad\text{and}\qquad [P,H] = P.
\end{equation}

In the other case, $B = \be_3$, $P = \be_1$ and $H = \be_2$, with
nonzero brackets
\begin{equation}
  [B,P] = - P \qquad\text{and}\qquad [B,H] = - H.
\end{equation}

\subsubsection{Bianchi VI$_0$}
\label{sec:bianchi-vi0}

The automorphism group is this case is
\begin{equation}
    A = \left\{
    \begin{pmatrix}
      a & \zero & \alpha \\
      \zero & d & \beta\\
      \zero & \zero & 1\\
    \end{pmatrix}
    \middle | a,d,\alpha,\beta \in \RR,~ad \neq 0 \right\} \bigcup
   \left\{
    \begin{pmatrix}
      \zero & b & \alpha \\
      c & \zero & \beta\\
      \zero & \zero & -1\\
    \end{pmatrix}
    \middle | b,c,\alpha,\beta \in \RR,~bc \neq 0 \right\}.
\end{equation}

There are three $A$-orbits in $\PP\k$, labelled by $(0,0,1)$,
$(1,0,0)$ and $(1,1,0)$.

In the first case, $B = \be_3$, $P= \be_1$ and $H = \be_2$, with
nonzero brackets
\begin{equation}
  [B,H] = - H \qquad\text{and}\qquad [B,P] = P.
\end{equation}

In the second case, $B = \be_1$, $P = \be_2$ and $H = \be_3$, with
nonzero brackets
\begin{equation}
  [B,H] = - B \qquad\text{and}\qquad [P,H] = P.
\end{equation}
This spacetime is not reductive.

In the final case, $B = \be_1 + \be_2$, $P = -\be_1 + \be_2$ and $H
= \be_3$, with nonzero brackets
\begin{equation}
  [B,H] = P \qquad\text{and}\qquad [P,H] = B.
\end{equation}

\subsubsection{Bianchi VI$_{\chi>0}$}
\label{sec:bianchi-vic}

The automorphism group in this case is the identity component of the
automorphism group of Bianchi VI$_0$:
\begin{equation}
  A = \left\{
    \begin{pmatrix}
      a & \zero & \alpha \\
      \zero & d & \beta\\
      \zero & \zero & 1\\
    \end{pmatrix}
    \middle | a,d,\alpha,\beta \in \RR,~ad \neq 0 \right\}.
\end{equation}
There are four $A$-orbits in $\PP\k$ labelled by the vectors
$(0,0,1)$, $(1,0,0)$, $(0,1,0)$ and $(1,1,0)$.

In the first case, $B = \be_3$, $P= \be_1$ and $H = \be_2$, with
nonzero brackets
\begin{equation}
  [B,H] = - (1+\chi) H \qquad\text{and}\qquad [B,P] = (1-\chi) P.
\end{equation}

In the second case, $B = \be_1$, $P= \be_2$ and $H = \be_3$, with
nonzero brackets
\begin{equation}
  [B,H] = (\chi-1) B \qquad\text{and}\qquad [P,H] = (\chi+1) P.
\end{equation}

In the third case, $B = \be_2$, $P= \be_1$ and $H = \be_3$, with
nonzero brackets
\begin{equation}
  [B,H] = (\chi+1) B \qquad\text{and}\qquad [P,H] = (\chi-1) P.
\end{equation}

In the final case, $B = \be_1 + \be_2$, $P = \be_2 - \be_1$ and $H =
\be_3$, with nonzero brackets
\begin{equation}
  [B,H] = \chi B + P \qquad\text{and}\qquad [P,H] = \chi P +  B.
\end{equation}

\subsubsection{Bianchi VII$_0$}
\label{sec:bianchi-vii0}

The automorphism group has two connected components:
\begin{equation}
  A = \left\{
    \begin{pmatrix}
      a & b & \alpha \\
      -b & a & \beta\\
      \zero & \zero & 1\\
    \end{pmatrix}
    \middle | a,b,\alpha,\beta \in \RR,~a^2 + b^2 \neq 0 \right\} \bigcup
   \left\{
    \begin{pmatrix}
      a & b & \alpha \\
      b & -a & \beta\\
      \zero & \zero & -1\\
    \end{pmatrix}
    \middle | a,b,\alpha,\beta \in \RR,~a^2+b^2 \neq 0 \right\}.
\end{equation}
Using only the identity component, we can bring any line in $\PP\k$ to
of of two lines: the one spanned by $(0,0,1)$ and the one spanned by
$(1,0,0)$.

In the first case, $B = \be_3$, $P= \be_1$ and $H = \be_2$, with
nonzero brackets
\begin{equation}
  [B,H] = -P \qquad\text{and}\qquad [B,P] = H.
\end{equation}

In the other case, $B = \be_1$, $P= \be_2$ and $H = \be_3$, with
nonzero brackets
\begin{equation}
  [B,H] = -P \qquad\text{and}\qquad [P,H] = B.
\end{equation}

\subsubsection{Bianchi VII$_{\chi>0}$}
\label{sec:bianchi-viic}

The automorphism group in this case is the identity component of the
automorphism group for Bianchi VII$_0$:
\begin{equation}
  A = \left\{
    \begin{pmatrix}
      a & b & \alpha \\
      -b & a & \beta\\
      \zero & \zero & 1\\
    \end{pmatrix}
    \middle | a,b,\alpha,\beta \in \RR,~a^2 + b^2 \neq 0 \right\}.
\end{equation}
This was all that was needed to bring every line to one of two lines,
spanned by either $(0,0,1)$ and $(1,0,0)$.

In the first case, $B = \be_3$, $P= \be_1$ and $H = \be_2$, with
nonzero brackets
\begin{equation}
  [B,H] = -P - \chi H \qquad\text{and}\qquad [B,P] = H - \chi P.
\end{equation}

In the other case, $B = \be_1$, $P= \be_2$ and $H = \be_3$, with
nonzero brackets
\begin{equation}
  [B,H] = \chi B - P \qquad\text{and}\qquad [P,H] = B + \chi P.
\end{equation}

\subsubsection{Bianchi VIII}
\label{sec:bianchi-viii}

This Lie algebra is isomorphic to $\so(2,1)$ and the automorphism
group is the adjoint group, so isomorphic to $SO(2,1)$.  It acts on
$\k$ via proper Lorentz transformations and hence has three orbits in
the space of lines, corresponding to timelike, spacelike and null
lines.  The inner product on $\k$ is the Killing form $\kappa$, which
in the chosen basis is diagonal with components $\kappa(\be_1,\be_1) =
\kappa(\be_2,\be_2) = 2$ and $\kappa(\be_3,\be_3) = -2$.

We can take for our spacelike line, the one spanned by $(1,0,0)$.
Here $B = \be_1$, $P = \be_2$ and $H = \be_3$, with brackets
\begin{equation}
  [B,H] = - P, \qquad [B,P] = - H \qquad\text{and}\qquad [P, H]
  = B.
\end{equation}

For the timelike line, we take the one spanned by $(0,0,1)$.  Here $B
= \be_3$, $P = \be_1$ and $H = \be_2$, with brackets
\begin{equation}
  [B,H] = - P, \qquad [B,P] = H \qquad\text{and}\qquad [P, H]
  = -B.
\end{equation}

Finally, we take the span of $(1,0,1)$ for the null line.  Here $B =
\be_1 + \be_3$, $P = \be_2$ and $H = \be_1 - \be_3$, with brackets
\begin{equation}
  [B,H] = 2 P, \qquad [B,P] = -B \qquad\text{and}\qquad [P, H]
  = -H.
\end{equation}

\subsubsection{Bianchi IX}
\label{sec:bianchi-ix}

Bianchi IX is isomorphic to $\su(2)$, whose automorphism group is the
adjoint group $SO(3)$, which acts transitively on the space of lines.
We can take $B = \be_1$, $P = \be_2$ and $H = \be_3$ with brackets
\begin{equation}
    [B,H] = - P, \qquad [B,P] = H \qquad\text{and}\qquad [P, H]
  = B.
\end{equation}

\subsubsection{Summary}
\label{sec:summary-deq1}

In Table~\ref{tab:summary1d} we list the Lie pairs associated
to the Bianchi Lie algebras.  In that table we have often redefined
$H$ and $P$ linearly in order to ease the comparison with the
Lie pairs in $D\geq 3$.  This comparison leads to a division of the
table into two: the ones above the horizontal line are 
$D=1$ avatars of Lie pairs which exist also for $D>1$.
Table~\ref{tab:deq1correspondence} shows the correspondence between
the $D=1$ Lie pairs and the $D\geq 3$ Lie pairs in those cases where
there is one.  Notice that in $D=1$ there are exceptional isomorphisms
between Lie pairs which might differ in $D\geq 3$.  For example,
exchanging $P$ and $H$ (i.e., re-interpreting what is time and space),
we see that the following pairs of spacetimes are isomorphic:
de~Sitter/anti~de~Sitter, galilean/carrollian, galilean dS/carrollian
AdS and galilean AdS/carrollian dS.

As in previous tables (Tables~\ref{tab:summarydge3},
\ref{tab:summarydeq3} and \ref{tab:summary2d-real}) each row in
Table~\ref{tab:summary1d} is an isomorphism class of kinematical Lie
pair with sequential label ``LP\#''.  Similarly, the label ``Bianchi''
in Table~\ref{tab:summary1d} identifies the Bianchi type
of the Lie algebra in Table~\ref{tab:bianchi}.  It bears reminding
that the Lie brackets in the tables below are expressed in a basis
where the admissible subalgebra $\h$ is spanned by $B$.

\begin{table}[h!]\small
  \centering
  \caption{Lie pairs for kinematical Lie algebras ($D=1$)}
  \label{tab:summary1d}
  \rowcolors{2}{blue!10}{white}
  \begin{tabular}{l|l|*{3}{>{$}l<{$}}|l} \toprule
    \multicolumn{1}{c|}{LP\#}& \multicolumn{1}{c|}{Bianchi} & \multicolumn{3}{c|}{Nonzero Lie brackets} & \multicolumn{1}{c}{Comments} \\\midrule
    \hypertarget{LP61}{61} & \hyperlink{BianchiI}{I} & & & & static\\
    \hypertarget{LP62}{62} & \hyperlink{BianchiII}{II} & & & [H,P]=B & \\
    \hypertarget{LP63}{63} & \hyperlink{BianchiII}{II} & [H,B] = -P & & & galilean/carrollian\\
    \hypertarget{LP64}{64} & \hyperlink{BianchiIV}{IV} & [H,B] = B & & [H,P] = B + P & \\
    \hypertarget{LP65}{65} & \hyperlink{BianchiIV}{IV} & [H,B] = -P & & [H,P] = B + 2P & \\
    \hypertarget{LP66}{66} & \hyperlink{BianchiV}{V} & [H,B] = B & & [H,P] = P & \\
    \hypertarget{LP67}{67} & \hyperlink{BianchiVI0}{VI$_0$} & [H,B] = -B & & [H,P] = P & \\
    \hypertarget{LP68}{68} & \hyperlink{BianchiVI0}{VI$_0$} & [H,B] = -P & & [H,P] = -B & galilean dS/carrollian AdS\\
    \hypertarget{LP69}{69$_\chi$} & \hyperlink{BianchiVI}{VI$_\chi$} & [H,B] = (\chi-1) B & & [H,P]= (1+\chi)P & $\chi>0$\\
    \hypertarget{LP70}{70$_\chi$} & \hyperlink{BianchiVI}{VI$_\chi$} & [H,B] = (\chi+1) B & & [H,P]= (\chi-1)P & $\chi>0$\\
    \hypertarget{LP71}{71$_\chi$} & \hyperlink{BianchiVI}{VI$_\chi$} & [H,B] = -\chi B + P & & [H,P] = B - \chi P & $\chi>0$\\
    \hypertarget{LP72}{72} & \hyperlink{BianchiVII0}{VII$_0$} & [H,B] = P & [B,P] = H & & euclidean\\
    \hypertarget{LP73}{73} & \hyperlink{BianchiVII0}{VII$_0$} & [H,B] = -P & & [H,P] = B & galilean AdS/carrollian dS\\
    \hypertarget{LP74}{74$_\chi$} & \hyperlink{BianchiVI}{VI$_\chi$} & [H,B] = - P & & [H,P] = (1+\chi^2) B + 2  \chi P & $\chi>0$\\
    \hypertarget{LP75}{75} & \hyperlink{BianchiVIII}{VIII} & [H,B] = - P & [B,P] = H & [H,P] = B & (anti) de Sitter \\
    \hypertarget{LP76}{76} & \hyperlink{BianchiVIII}{VIII} & [H,B] =  P & [B,P] = H & [H,P] = B & hyperbolic \\
    \hypertarget{LP77}{77} & \hyperlink{BianchiIX}{IX} & [H,B] = P & [B,P] =  H & [H,P] = -B & sphere \\
    \hypertarget{LP78}{78} & \hyperlink{BianchiVI0}{VI$_0$} & [H,B] = H & [B,P] = P & & Minkowski\\
    \hypertarget{LP79}{79} & \hyperlink{BianchiVIII}{VIII} & [H,B] = - P & [B,P] = B & [H,P] = -H  & \\\midrule
    \hypertarget{LP80}{80} & \hyperlink{BianchiIV}{IV} & [H,B] = P + H & [B,P] = - P & & \\
    \hypertarget{LP81}{81} & \hyperlink{BianchiV}{V} & [H,B] = H & [B,P] = -P & & \\
    \hypertarget{LP82}{82$_\chi$} & \hyperlink{BianchiVI}{VI$_\chi$} & [H,B] = (1+\chi) H & [B,P]= (1-\chi)P & & $\chi>0$ \\
    \hypertarget{LP83}{83$_\chi$} & \hyperlink{BianchiVII}{VII$_\chi$} & [H,B] = P + \chi H & [B,P] = H - \chi P & & $\chi>0$ \\ \bottomrule
  \end{tabular}
\end{table}

\begin{table}[h!]
  \centering
  \caption{Correspondence between $D=1$ and $D\geq 3$ Lie pairs}
  \label{tab:deq1correspondence}
  \begin{tabular}{l|l} \toprule
    \multicolumn{1}{c|}{$D=1$} & \multicolumn{1}{c}{$D\geq 3$}\\\midrule
    61 & 1\\
    62 & 3\\
    63 & 2 and 14\\
    64 & 11\\
    65 & 10\\
    66 & 7\\ \bottomrule
  \end{tabular}
  \hspace{2cm}
  \begin{tabular}{l|l} \toprule
    \multicolumn{1}{c|}{$D=1$} & \multicolumn{1}{c}{$D\geq 3$}\\\midrule
    67 & 8\\
    68 & 9 and 16$_{+1}$\\
    69$_\chi$ & 4$_{\frac{\chi-1}{\chi+1}}$\\
    70$_\chi$ & 5$_{\frac{\chi-1}{\chi+1}}$\\
    71$_\chi$ & 6$_{\frac{\chi-1}{\chi+1}}$\\
    72 & 15$_{-1}$\\ \bottomrule
  \end{tabular}
  \hspace{2cm}
  \begin{tabular}{l|l} \toprule
    \multicolumn{1}{c|}{$D=1$} & \multicolumn{1}{c}{$D\geq 3$}\\\midrule
    73 & 13 and 16$_{-1}$\\
    74$_\chi$ & 12$_\chi$\\
    75 & 18$_{-1}$ and 19$_{+1}$\\
    76 & 18$_{+1}$\\
    77 & 19$_{-1}$\\
    78 & 15$_{+1}$\\
    79 & 17 \\ \bottomrule
  \end{tabular}
\end{table}

\section{Classification of simply-connected homogeneous spacetimes}
\label{sec:class-kinem-spac}

In Section~\ref{sec:class-kinem-lie} we have classified the
equivalence classes of Lie pairs $(\k,\h)$ where $\k$ is a kinematical
Lie algebra and $\h$ is an admissible subalgebra.  Our aim is to
classify simply-connected homogeneous spacetimes and, as explained in
Appendix~\ref{sec:infin-descr-homog}, this requires selecting those
Lie pairs which are effective (so that $\h$ contains no nonzero ideal
of $\k$) and also geometrically realisable, so that there exists a
kinematical Lie group $\Kgr'$ and an admissible subgroup $\Hgr'$ whose
Lie pair $(\k',\h')$ is isomorphic to $(\k,\h)$.  Let us discuss both
of these selection criteria in turn.

\subsection{Effective Lie pairs}
\label{sec:effective-lie-pairs}

Recall that we have chosen a basis for the kinematical Lie algebra
$\k$ such that the subalgebra $\h$ is spanned by $J_{ab}$ and $B_a$.
This means that the only possible nonzero ideal of $\k$ contained in
$\h$ is the span of the boosts $B_a$.  Therefore to check if a Lie
pair $(\k,\h)$ is effective, all we need to do is inspect the Lie
brackets $[B_a,X]$ for $X$ in the span of $(B_a, P_a, H)$ and see
whether they all lie in the span of the $B_a$ (in which case the Lie
pair is not effective) or not (in which case it is).  Reducing a
non-effective Lie pair by the ideal $\b$ spanned by the boosts, we
arrive at an effective aristotelian Lie pair.  Since it turns out that
not all aristotelian Lie pairs arise in this way, we treat their
classification separately in Appendix~\ref{sec:class-arist-lie}.

\subsubsection{Effective Lie pairs for all $D\geq 3$}
\label{sec:effective-lie-pairs-dge3}

Inspecting Table~\ref{tab:summarydge3} we see that Lie pairs
1, 3, 4$_\gamma$, 5$_\gamma$, 7, 8 and 11 are not effective, since the
span of the $B_a$ define an ideal of $\k$.  Reducing by that ideal we
obtain an aristotelian Lie pair of the ones in
Table~\ref{tab:aristotelian}, which are classified in
Appendix~\ref{sec:class-arist-lie}:
\begin{itemize}
\item Lie pairs 1 and 3 reduce to the static aristotelian Lie pair \st;
  whereas
\item Lie pairs 4$_\gamma$, 5$_\gamma$, 7, 8 and 11 reduce to the
  torsional static aristotelian Lie pair \tst.
\end{itemize}

\subsubsection{No effective Lie pairs unique to $D=3$}
\label{sec:no-effective-lie-pairs-deq3}

Inspecting Table~\ref{tab:summarydeq3} we see that no Lie pairs for
kinematical Lie algebras unique to $D=3$ are effective.  They all
reduce to aristotelian Lie pairs:
\begin{itemize}
\item Lie pairs 21, 23 and 26 reduce to the static aristotelian Lie
  pair \st;
\item Lie pairs 24 and 27 reduce to the torsional static aristotelian
  Lie pair \tst;
\item Lie pair 20$_\varepsilon$ reduces to the aristotelian Lie pair
  \athree$_\varepsilon$ with bracket $[\P,\P] = - \varepsilon \J$; and
\item Lie pairs 22 and 25 reduce to $[\P,\P] = \P$, which is
  isomorphic to \athree$_{+1}$ after changing basis
  $P_a \mapsto \tfrac12 (P_a + \tfrac14 \epsilon_{abc} J_{bc})$.
\end{itemize}

\subsubsection{Effective Lie pairs for $D=2$}
\label{sec:effective-lie-pairs-deq2}

We inspect Table~\ref{tab:summary2d-real} and concentrate on the Lie
pairs which are unique to $D=2$, since for those which exist for $D>2$
the calculations for $D>2$ are valid also for $D=2$.  We see that only
the Lie pair 55$_{\gamma,\chi}$ is effective.  The others reduce to
aristotelian Lie pairs:
\begin{itemize}
\item 56 to \st;
\item 51$_\chi$, 52$_\chi$, 53$_{\gamma,\chi}$ and 54$_{\gamma,\chi}$
  to \tst;
\item 57 and 58 to \twoda; and
\item 59 and 60, respectively, to the aristotelian Lie pairs
  \athree$_{-1}$ and  \athree$_{+1}$, after redefining $H$ and $J$.
\end{itemize}

\subsubsection{Effective Lie pairs for $D=1$}
\label{sec:effective-lie-pairs-deq1}

Inspecting Table~\ref{tab:summary1d}, but concentrating only on the
Lie pairs unique to $D=1$ (those below the horizontal line), we see
that they are all effective.

\subsubsection{Summary}
\label{sec:summary-1}

Table~\ref{tab:effective} summarises the effective Lie pairs.  Some
classes of Lie pairs exist for all $D\geq 1$ and we collect them in
the same row.  This is possibly the most navigationally useful table
in the paper, in that it shows the correspondence between the
spacetimes and their Lie pairs.  The table is hyperlinked for ease of
navigation.

\begin{table}[h!]
  \centering
  \caption{Effective kinematical Lie pairs}
  \label{tab:effective}
  \rowcolors{2}{blue!10}{white}
  \begin{tabular}{l|l|l|l|l} \toprule
    \multicolumn{1}{c|}{Label} & \multicolumn{1}{c|}{$D\geq 3$}& \multicolumn{1}{c|}{$D=2$} & \multicolumn{1}{c|}{$D=1$} & \multicolumn{1}{c}{Comments} \\\midrule
    \hypertarget{S1}{\mink} & \hyperlink{LP15p}{15$_{+1}$} & \hyperlink{LP43}{43} & \hyperlink{LP78}{78} & Minkowski ($\MM$)\\
    \hypertarget{S2}{\ds} & \hyperlink{LP18m}{18$_{-1}$} & \hyperlink{LP47}{47} & \hyperlink{LP75}{75} & de~Sitter ($\zdS$)\\
    \hypertarget{S3}{\ads} & \hyperlink{LP19p}{19$_{+1}$} & \hyperlink{LP49}{49} & \hyperlink{LP75}{75} & anti~de~Sitter ($\zAdS$)\\\midrule
    \hypertarget{S4}{\euc} & \hyperlink{LP15m}{15$_{-1}$} & \hyperlink{LP44}{44} & \hyperlink{LP72}{72} & euclidean ($\EE$)\\
    \hypertarget{S5}{\sph} & \hyperlink{LP19m}{19$_{-1}$} & \hyperlink{LP48}{48} & \hyperlink{LP77}{77} & sphere ($\SS$)\\
    \hypertarget{S6}{\hyp} & \hyperlink{LP18p}{18$_{+1}$} & \hyperlink{LP46}{46} & \hyperlink{LP76}{76} & hyperbolic ($\HH$)\\\midrule
    \hypertarget{S7}{\gal} & \hyperlink{LP2}{2} & \hyperlink{LP29}{29} & \hyperlink{LP63}{63} & galilean ($\zG$)\\
    \hypertarget{S8}{\dsg} & \hyperlink{LP9}{9} & \hyperlink{LP35}{35} & \hyperlink{LP68}{68} & galilean dS ($\zdSG = \ztdSG_{-1}$)\\
    \hypertarget{S9}{\tdsg$_\gamma$} & \hyperlink{LP6}{6$_\gamma$} & \hyperlink{LP38}{38$_\gamma$} & \hyperlink{LP71}{71$_{\frac{1+\gamma}{1-\gamma}}$} & $\ztdSG_{\gamma\in(-1,1)}$\\
    \hypertarget{S9}{\tdsg$_1$} & \hyperlink{LP10}{10} & \hyperlink{LP31}{31} & \hyperlink{LP65}{65} & $\ztdSG_1 = \ztAdSG_\infty$\\
    \hypertarget{S10}{\adsg} & \hyperlink{LP13}{13} & \hyperlink{LP39}{39} & \hyperlink{LP73}{73} & galilean AdS ($\zAdSG = \ztAdSG_0$)\\
    \hypertarget{S11}{\tadsg$_\chi$} & \hyperlink{LP12}{12$_\chi$} & \hyperlink{LP50}{50$_{\frac2\chi}$}  & \hyperlink{LP74}{74$_\chi$} &$\ztAdSG_{\chi>0}$\\
    \hypertarget{S12}{\twodgal$_{\gamma,\chi}$} & $-$ & \hyperlink{LP55}{55$_{\gamma,\chi}$} & $-$ & $\gamma \in[-1,1)$, $\chi>0$\\\midrule
    \hypertarget{S13}{\car} & \hyperlink{LP14}{14} & \hyperlink{LP40}{40} & \hyperlink{LP63}{63} & carrollian ($\zC$)\\
    \hypertarget{S14}{\dsc} & \hyperlink{LP16m}{16$_{-1}$} & \hyperlink{LP42}{42} & \hyperlink{LP73}{73} & carrollian dS ($\zdSC$)\\
    \hypertarget{S15}{\adsc} & \hyperlink{LP16p}{16$_{+1}$} & \hyperlink{LP41}{41} & \hyperlink{LP68}{68} & carrollian AdS ($\zAdSC$)\\
    \hypertarget{S16}{\flc} & \hyperlink{LP17}{17} & \hyperlink{LP45}{45} & \hyperlink{LP79}{79} & carrollian light cone ($\zLC$)\\\midrule
    \hypertarget{S17}{\xone} & $-$ & $-$ & \hyperlink{LP80}{80} & \\
    \hypertarget{S18}{\xtwo} & $-$ & $-$ & \hyperlink{LP81}{81} & \\
    \hypertarget{S19}{\xthree$_\chi$} & $-$ & $-$ & \hyperlink{LP82}{82$_\chi$} & $\chi>0$\\
    \hypertarget{S20}{\xfour$_\chi$} & $-$ & $-$ & \hyperlink{LP83}{83$_\chi$} & $\chi>0$\\ \bottomrule
  \end{tabular}
\end{table}

Table~\ref{tab:LAs-to-spacetimes} is included for convenience.  We
have found this list useful at times and thus we think it might be
useful to other readers.  Table~\ref{tab:LAs-to-spacetimes} lists
which simply-connected homogeneous kinematical or aristotelian
spacetimes are associated to which kinematical Lie algebras.  We do
not list aristotelian Lie algebras since there is a one-to-one
correspondence between the Lie algebras and the simply-connected
homogenous aristotelian spacetimes in that case. The table is
separated into three: corresponding to the Lie algebras for $D\geq 3$,
$D=2$ and $D=1$.  The Lie algebras below the horizontal line in the
first part of the table exist only for $D=3$.  Lie algebras
\hyperlink{LA29}{29} and \hyperlink{LA30}{30}, which exist only in
$D=2$, have no admissible Lie pairs at all.

\begin{table}[h!]
  \centering
  \caption{Kinematical Lie algebras and their homogeneous spacetimes}
  \label{tab:LAs-to-spacetimes}
  \resizebox{\textwidth}{!}{
  \rowcolors{2}{blue!10}{white}
  \begin{tabular}{l|lll} \toprule
    \multicolumn{1}{c|}{Lie algebra} & \multicolumn{3}{c}{Spacetimes} \\\midrule
    \hyperlink{LA1}{1} & \hyperlink{A21}{\st} & & \\
    \hyperlink{LA2}{2} & \hyperlink{S7}{\gal} & \hyperlink{A21}{\st} & \\
    \hyperlink{LA3}{3$_\gamma$} & \hyperlink{S9}{\tdsg$_\gamma$} & \hyperlink{A22}{\tst} & \\
    \hyperlink{LA4}{4} & \hyperlink{A22}{\tst} & &  \\
    \hyperlink{LA5}{5} & \hyperlink{S8}{\dsg} & \hyperlink{A22}{\tst} & \\
    \hyperlink{LA6}{6} & \hyperlink{S9}{\tdsg$_1$} & \hyperlink{A22}{\tst} & \\
    \hyperlink{LA7}{7$_\chi$} & \hyperlink{S11}{\tadsg$_\chi$} & & \\
    \hyperlink{LA8}{8} & \hyperlink{S10}{\adsg} & & \\
    \hyperlink{LA9}{9} & \hyperlink{S13}{\car} & & \\
    \hyperlink{LA10}{10$_{+1}$} & \hyperlink{S1}{\mink} & \hyperlink{S15}{\adsc} & \\
    \hyperlink{LA10}{10$_{-1}$} & \hyperlink{S4}{\euc} & \hyperlink{S14}{\dsc} & \\
    \hyperlink{LA11}{11} & \hyperlink{S2}{\ds} & \hyperlink{S6}{\hyp} & \hyperlink{S16}{\flc}\\
    \hyperlink{LA12}{12$_{+1}$} & \hyperlink{S3}{\ads} & & \\
    \hyperlink{LA12}{12$_{-1}$} & \hyperlink{S5}{\sph} & & \\\midrule
    \hyperlink{LA13}{13$_\varepsilon$} & \hyperlink{A23p}{\athree$_\varepsilon$} & & \\
    \hyperlink{LA14}{14} & \hyperlink{A21}{\st} & \hyperlink{A23p}{\athree$_{+1}$} & \\
    \hyperlink{LA15}{15} & \hyperlink{A21}{\st} & & \\
    \hyperlink{LA16}{16} & \hyperlink{A22}{\tst} & \hyperlink{A23p}{\athree$_{+1}$} & \\
    \hyperlink{LA17}{17} & \hyperlink{A21}{\st} & & \\
    \hyperlink{LA18}{18} & \hyperlink{A22}{\tst} & &  \\\bottomrule
  \end{tabular}
  \qquad
  \rowcolors{2}{blue!10}{white}
  \begin{tabular}{l|lll} \toprule
    \multicolumn{1}{c|}{Lie algebra} & \multicolumn{3}{c}{Spacetimes} \\\midrule
    \hyperlink{LA19}{19} & \hyperlink{A21}{\st} & & \\
    \hyperlink{LA20}{20} & \hyperlink{S7}{\gal} &  \hyperlink{A21}{\st} & \\
    \hyperlink{LA21}{21} & \hyperlink{S9}{\tdsg$_1$} &  \hyperlink{A22}{\tst} & \\
    \hyperlink{LA22}{22$_{1+i0}$} & \hyperlink{A22}{\tst} & & \\
    \hyperlink{LA22}{22$_{\gamma + i 0}$} & \hyperlink{S9}{\tdsg$_\gamma$} &  \hyperlink{A22}{\tst} & \\
    \hyperlink{LA22}{22$_{1+i\chi}$} & \hyperlink{S11}{\tadsg$_{\frac2\chi}$} &  \hyperlink{A22}{\tst} &  \\
    \hyperlink{LA22}{22$_{\gamma+i\chi}$} & \hyperlink{S12}{\twodgal$_{\gamma,\chi}$} &  \hyperlink{A22}{\tst} & \\
    \hyperlink{LA23}{23} & \hyperlink{S8}{\dsg} &  \hyperlink{A22}{\tst} & \\
    \hyperlink{LA24}{24} & \hyperlink{S10}{\adsg}  &  \hyperlink{A21}{\st} & \\
    \hyperlink{LA25}{25} & \hyperlink{S13}{\car} & & \\
    \hyperlink{LA26}{26$_{+1}$} & \hyperlink{S1}{\mink} &  \hyperlink{S15}{\adsc} & \\
    \hyperlink{LA26}{26$_{-1}$} & \hyperlink{S4}{\euc} &  \hyperlink{S14}{\dsc} & \\      
    \hyperlink{LA27}{27} & \hyperlink{S2}{\ds} &  \hyperlink{S6}{\hyp} &  \hyperlink{S16}{\flc}\\
    \hyperlink{LA28}{28$_{+1}$} & \hyperlink{S5}{\sph} & & \\
    \hyperlink{LA28}{28$_{-1}$} & \hyperlink{S3}{\ads} & & \\
    \hyperlink{LA31}{31} & \hyperlink{A24}{\twoda} & & \\
    \hyperlink{LA32}{32} & \hyperlink{A23m}{\athree$_{-1}$} & & \\
    \hyperlink{LA33}{33} & \hyperlink{A23p}{\athree$_\varepsilon$} & & \\\bottomrule
  \end{tabular}
  \qquad
  \rowcolors{2}{blue!10}{white}
  \begin{tabular}{c|llll} \toprule
    \multicolumn{1}{c|}{Lie algebra} & \multicolumn{4}{c}{Spacetimes} \\\midrule
    \hyperlink{BianchiI}{I} & \hyperlink{A21}{\st} & & & \\
    \hyperlink{BianchiII}{II} & \hyperlink{S7}{\gal} & \hyperlink{S13}{\car} &  \hyperlink{A21}{\st} & \\
    \hyperlink{BianchiIV}{IV} & \hyperlink{S9}{\tdsg$_1$} & \hyperlink{S17}{\xone} &  \hyperlink{A22}{\tst} & \\
    \hyperlink{BianchiV}{V} & \hyperlink{S18}{\xtwo} & & & \\
    \hyperlink{BianchiVI0}{VI$_0$} & \hyperlink{S1}{\mink} & \hyperlink{S8}{\dsg} & \hyperlink{S15}{\adsc} &  \hyperlink{A22}{\tst} \\
    \hyperlink{BianchiVI}{VI$_\chi$} & \hyperlink{S9}{\tdsg$_{\frac{\chi-1}{\chi+1}}$} & \hyperlink{S11}{\tadsg$_\chi$} & \hyperlink{S19}{\xthree$_\chi$} &  \hyperlink{A22}{\tst} \\
    \hyperlink{BianchiVII0}{VII$_0$} & \hyperlink{S4}{\euc} & \hyperlink{S10}{\adsg} & \hyperlink{S14}{\dsc} & \\
    \hyperlink{BianchiVII}{VII$_\chi$} & \hyperlink{S20}{\xfour$_\chi$} & & & \\
    \hyperlink{BianchiVIII}{VIII} & \hyperlink{S2}{\ds} & \hyperlink{S3}{\ads} & \hyperlink{S6}{\hyp} & \hyperlink{S16}{\flc} \\
    \hyperlink{BianchiIX}{IX} & \hyperlink{S5}{\sph} & & & \\\bottomrule
  \end{tabular}
  }
\end{table}

\subsection{Geometric realisability}
\label{sec:geom-real}

Having selected the effective Lie pairs, we must then select those
which are geometrically realisable, so that they correspond to the Lie
pair of a homogeneous spacetime.  It will turn out that all effective
Lie pairs are geometrically realisable.  Recall that $(\k,\h)$ is
geometrically realisable if there exists a connected Lie group $\Kgr$,
a Lie algebra isomorphism $\varphi: \text{Lie}(\Kgr) \to \k$ and a
\emph{closed} Lie subgroup $\Hgr \subset \Kgr$ whose Lie algebra is
isomorphic to $\h$ under $\varphi$.  There are a number of criteria
which can be brought to bear in order to help decide whether a Lie
pair admits a geometric realisation.

\subsubsection{Riemannian maximally symmetric spaces}
\label{sec:riem-maxim-symm}

We start by showing that the Lie pairs corresponding to the riemannian
symmetric spaces are geometrically realisable; although this is of
course well-known.

\begin{criterion}\label{crit:compact}
  Compact subgroups are closed, so if if $\h$ generates a
  compact subgroup, then $(\k,\h)$ is geometrically realisable.
\end{criterion}

From this criterion we see that the following Lie pairs are
geometrically realisable:
\begin{itemize}
\item Lie pairs 15$_{-1}$, 44 and 72, which we can identify with
  euclidean spaces;
\item Lie pairs 18$_{+1}$, 46, 76, which we can identify with hyperbolic spaces;
  and 
\item Lie pairs 19$_{-1}$, 48, 77, which we can identify with the round spheres.
\end{itemize}

\subsubsection{A sufficient criterion}
\label{sec:sufficient}

Another useful criterion (sufficient, but by no means necessary)
applies to linear Lie algebras; that is, Lie algebras isomorphic to
Lie algebras of matrices.  By Ado's theorem (see, e.g.,
\cite[Ch.~VI]{MR559927}) every kinematical Lie algebra $\k$, being
finite-dimensional, has a faithful linear representation and hence is
a linear Lie algebra.  Exponentiating inside the matrix algebra, we
obtain a connected Lie group $\Kgr$ with $\k$ (or, more precisely, its
isomorphic image in the matrix algebra) as its Lie algebra.

\begin{criterion}
  If the subalgebra $\h$ of a linear Lie algebra $\k$ is its own
  normaliser, so that the only elements $X \in \k$ with
  $[X,\h] \subset \h$ are the elements of $\h$, then the unique
  connected subgroup $\Hgr \subset \Kgr$ to which it exponentiates is
  closed (see, e.g., \cite[Pr.~2.7.4]{MR1889121}) and $\Kgr/\Hgr$ is
  geometric realisation of $(\k,\h)$.
\end{criterion}

In particular, if the Lie pair is reductive, so that
$\k = \h \oplus \m$ as adjoint $\h$-modules, then we may decompose
$X\in\k$ uniquely as $X = X_\h + X_\m$, with $X_\h \in \h$ and
$X_\m \in \m$.  Now, rotations (when present) act reducibly, so
$[J_{ab},X]\in \h$ means that $X_\m = c H$ for some $c$. So we then
need to inspect whether $[B_a,H] \in \h$.

As we can see by inspection, this second criterion allows us to
conclude that all Lie pairs are geometrically realisable with the
following possible exceptions requiring a closer look: Lie pairs 14,
16$_\varepsilon$ and 17 in $D\geq 3$, Lie pairs 40, 41, 42 and 45 in
$D=2$ and Lie pairs 63, 65, 68, 71$_\chi$, 73, 74$_\chi$, 79 and
82$_{\chi=1}$ in $D=1$.

\subsubsection{Two-dimensional spacetimes}
\label{sec:two-dimens-spac}

To show that $D=1$ Lie pairs 63, 65, 68, 71$_\chi$, 73, 74$_\chi$, 79
and 82$_{\chi=1}$ are geometrically realisable, we may use yet a third
criterion for when a one-dimensional subgroup of a matrix group
is (not) closed.

\begin{criterion}
  A one-parameter subgroup of a matrix group is \emph{not} closed if
  and only if the generating matrix in the Lie algebra is similar to
  a diagonal matrix with imaginary entries, at least two of which have
  an irrational ratio (see, e.g., \cite[Pr.~2.7.5]{MR1889121}).
\end{criterion}

All the Bianchi Lie algebras have faithful representations of
dimension 2 or 3.  So it is simply a matter of calculating the
eigenvalues of $B$ in each of these representations to deduce that all
these Lie pairs are geometrically realisable.  Indeed, let us go back
to the description of the Bianchi Lie algebras in terms of the basis
$(\be_1,\be_2,\be_3)$ as in Table~\ref{tab:bianchi} and let us write
the generic element as $X = x \be_1 + y \be_2 + z \be_3$.  For each of
the cases of interest, we will write down the matrix $\rho(X)$
representing $X$ and the matrix $\rho(B)$ representing $B$.  We will
see that in no case does $\rho(B)$ have imaginary eigenvalues.

\begin{enumerate}
\item[(63)] This is Bianchi II,
  \begin{equation}
   \rho(X) =
   \begin{pmatrix}
     \zero & -z & x \\ \zero & \zero & y \\ \zero & \zero & \zero
   \end{pmatrix}
   \qquad\text{so that}\qquad
   \rho(B) = \rho(\be_2) = \begin{pmatrix}
     \zero & \zero & \zero \\ \zero & \zero & 1 \\ \zero & \zero & \zero
   \end{pmatrix}.
 \end{equation}
 This matrix is not diagonalisable.
 
\item[(65)] This is Bianchi IV,
  \begin{equation}
   \rho(X) =
   \begin{pmatrix}
     -z & -z & x \\ \zero & -z & y \\ \zero & \zero & \zero
   \end{pmatrix}
   \qquad\text{so that}\qquad
   \rho(B) = \rho(\be_2) = \begin{pmatrix}
     \zero & \zero & \zero \\ \zero & \zero & 1 \\ \zero & \zero & \zero
   \end{pmatrix},
 \end{equation}
 which is not diagonalisable.

\item[(68)] This is Bianchi VI$_0$, but let us consider the general
  Bianchi VI$_\chi$,
  \begin{equation}
   \rho(X) =
   \begin{pmatrix}
     (1-\chi) z & \zero & x \\ \zero & -(1+\chi)z & y \\ \zero & \zero & \zero
   \end{pmatrix}
   \qquad\text{so that}\qquad
   \rho(B) = \rho(\be_1 + \be_2) = \begin{pmatrix}
     \zero & \zero & 1 \\ \zero & \zero & 1 \\ \zero & \zero & \zero
   \end{pmatrix},
 \end{equation}
 which is not diagonalisable.
  
\item[(71$_\chi$)] This is again Bianchi VI$_\chi$, so that we can reuse
  the previous calculation.  In this case also
  \begin{equation}
    \rho(B) = \rho(\be_1 + \be_2) = \begin{pmatrix}
     \zero & \zero & 1 \\ \zero & \zero & 1 \\ \zero & \zero & \zero
   \end{pmatrix},
 \end{equation}
 which is not diagonalisable.
 
\item[(74$_\chi$)] This is again Bianchi VI$_\chi$ and again $B = \be_1+
  \be_2$, with the same matrix non-diagonalisable as above.

\item[(73)] This is Bianchi VII$_0$, but we will treat the general
  Bianchi VII$_\chi$ whose matrix representation is
  \begin{equation}
   \rho(X) =
   \begin{pmatrix}
    -\chi z & -z & x \\ z & -\chi z & y \\ \zero & \zero & \zero
   \end{pmatrix}
   \qquad\text{so that}\qquad
   \rho(B) = \rho(\be_1) = \begin{pmatrix}
     \zero & \zero & 1 \\ \zero & \zero & \zero \\ \zero & \zero & \zero
   \end{pmatrix},
 \end{equation}
 which is not diagonalisable.

 \item[(82$_1$)] This is Bianchi VII$_1$ and $B = \be_3$, so that the
   matrix representing it is
   \begin{equation}
   \rho(B) = \rho(\be_1) = \begin{pmatrix}
     -1 & -1 & \zero \\ 1 & -1 & \zero \\ \zero & \zero & \zero
   \end{pmatrix}.
 \end{equation}
 This matrix is diagonalisable, but the eigenvalues are not
 imaginary: $-1 \pm i$.

 \item[(79)] This is Bianchi VIII ($\cong \mathfrak{sl}(2,\RR)$) which
   has a faithful two-dimensional representation
   \begin{equation}
     \rho(X) =
     \begin{pmatrix}
       x & y \\ z & -x
     \end{pmatrix}.
   \end{equation}
   Here $B = \be_1 + \be_3$, which has matrix
   \begin{equation}
     \rho(B) =
     \begin{pmatrix}
       1 &  \zero \\ 1 & -1
     \end{pmatrix},
   \end{equation}
   whose eigenvalues are real: $\pm 1$.
 \end{enumerate}

In summary, after this discussion we are left with the carrollian Lie
pairs: 14, 16$_\varepsilon$ and 17 in $D\geq 3$ and 40, 41, 42 and 45
in $D=2$.

\subsubsection{The carrollian light cone}
\label{sec:non-reduct-carr}

It is easy to show that Lie pairs 17 and 45 are geometrically
realisable.  We can treat them together and, in fact, the argument
works also to give an alternative proof of geometric realisability for
the $D=1$ Lie pair 79.  These Lie pairs are all of the form
$(\so(D+1,1),\h)$ where, as we now show, $\h$ is the subalgebra of
$\so(D+1,1)$ corresponding to the stabiliser of a null vector in the
lorentzian vector space $\RR^{D+1,1}$.  Indeed let
$e_\mu = \{e_a, e_+, e_-\}$ be a Witt basis for $\RR^{D+1,1}$ where
the lorentzian inner product $\eta_{\mu\nu}:= \eta(e_\mu,e_\nu)$ is
given by $\eta_{ab} = \delta_{ab}$ and $\eta_{+-} = 1$.  Then
$\so(D+1,1)$ has generators $J_{\mu\nu}$ with Lie brackets:
\begin{equation}
  [J_{\mu\nu}, J_{\rho\sigma}] = \eta_{\nu\rho} J_{\mu\sigma} -
  \eta_{\mu\rho} J_{\nu\sigma} - \eta_{\nu\sigma} J_{\mu\rho} +
  \eta_{\mu\sigma} J_{\nu\rho}.
\end{equation}
Letting $\mu = (a,+,-)$ and decomposing $J_{\mu\nu}$ into $\{J_{ab},
J_{a+}, J_{a-}, J_{+-}\}$, we obtain the following nonzero brackets:
\begin{equation}
  \begin{split}
    [J_{ab}, J_{cd}] &= \delta_{bc} J_{ad} - \delta_{ac} J_{bd} - \delta_{bd} J_{ac} + \delta_{ad} J_{bc}\\
    [J_{ab}, J_{c+}] &= \delta_{bc} J_{a+} - \delta_{ac} J_{b+}\\
    [J_{ab}, J_{c-}] &= \delta_{bc} J_{a-} - \delta_{ac} J_{b-}\\
    [J_{+-}, J_{a+}] &= J_{a+}\\
    [J_{+-}, J_{a-}] &= -J_{a-}\\
    [J_{a+}, J_{b-}] &= - J_{ab} - \delta_{ab} J_{+-},
  \end{split}
\end{equation}
from where we can identify $B_a = J_{a+}$, $P_a = -J_{a-}$ and $H = J_{+-}$.
The Lie pairs 17, 45 and 79 correspond to $(\so(D+1,1), \h)$ where
$\h$ is the span of $J_{ab}$ (if $D\geq 2$) and $B_a$.  The key
observation is that $\h$ is the subalgebra which annihilates the basis
vector $e_+$ under the usual action:
\begin{equation}
  J_{\mu\nu} \cdot e_\rho = \eta_{\nu\rho} e_\mu - \eta_{\mu\rho} e_\nu.
\end{equation}
The connected subgroup $\Hgr$ of $\SO(D+1,1)$ generated by $\h$ is
(the identity component of) the stabiliser of $e_+$ and hence it is a
closed subgroup. Since $\Hgr$ is connected, it is actually a subgroup
of $\SO(D+1,1)_0$, the identity component of $\SO(D+1,1)$. We conclude
that $\SO(D+1,1)_0/\Hgr$ is thus a geometric realisation of
$(\so(D+1,1),\h)$. Geometrically, it corresponds to the future light
cone $\eL_+ \subset \MM$, where $\MM$ is ($D+2$)-dimensional Minkowski
spacetime.\footnote{Strictly speaking this description holds for
  $D\geq 2$.  For $D=1$, the future light cone is not simply-connected
  and hence the simply-connected geometric realisation of the Lie pair
  79 is the universal cover of the future light cone, which we may
  think of as the submanifold of $\RR^3$ consisting of points
  $(r\cos\theta,r\sin\theta, \theta)$ with $r>0$, projecting to the
  light cone in $\RR^{1,2}$, by sending
  $(r\cos\theta,r\sin\theta,\theta) \mapsto
  (r,r\cos\theta,r\sin\theta)$.}  Indeed, as shown in
\cite{Duval:2014uoa} (see also \cite{Hartong:2015xda}), null
hypersurfaces (such as the future light cone in Minkowski spacetime)
are carrollian spacetimes.  This idea turns out to be very fruitful in
order to prove the geometric realisability of the remaining symmetric
carrollian Lie pairs, as we will now see.

\subsubsection{Symmetric carrollian spacetimes}
\label{sec:symm-carr-spac}

Finally, we show that the symmetric carrollian Lie pairs 14 and
16$_\varepsilon$ in $D\geq 3$ and 40, 41 and 42 in $D=2$ are
geometrically realisable.  As mentioned above already, one way to do
this is to construct the geometric realisations explicitly by
exhibiting them as null hypersurfaces in lorentzian manifolds one
dimension higher.  This was done originally for the carrollian
spacetime $\zC$ (\hyperlink{S13l}{\car}) in \cite{Duval:2014uoa}, who
embedded it as a null hypersurface in Minkowski spacetime one
dimension higher.  In a similar way we will construct
the carrollian (anti) de~Sitter spacetimes $\zdSC$
(\hyperlink{S14l}{\dsc}) and $\zAdSC$ (\hyperlink{S15l}{\adsc}) as
null hypersurfaces in (anti) de~Sitter spacetimes one dimension
higher.  All we need to show is that the Lie pairs describing these
null hypersurfaces are the symmetric carrollian Lie pairs 14,
16$_\varepsilon$, 40, 41 and 42.

Introducing a parameter $\varepsilon = 0, \pm 1$, we define the
kinematical Lie algebra $\k_{\varepsilon}$ by the following Lie
brackets in addition to the ones in \eqref{eq:kin}:
\begin{equation}
  [H, \P] = \varepsilon \B,\qquad [\B,\P] = H \qquad\text{and}\qquad
  [\P,\P] = \varepsilon \J.
\end{equation}
We shall let $\h_\varepsilon$ denote the admissible subalgebra spanned
by $\J$ and $\B$.  The Lie pair $(\k_\varepsilon,\h_\varepsilon)$ is
isomorphic to 14 and 40 when $\varepsilon=0$, to 16$_{+1}$ and 41 when
$\varepsilon=+1$ and to 16$_{-1}$ and 42 when $\varepsilon=-1$.  We
will now exhibit homogeneous manifolds whose Lie pairs are isomorphic
to $(\k_\varepsilon,\h_\varepsilon)$ for each value of $\varepsilon$.

We start with $\varepsilon=0$, which is the construction of flat
carrollian space $\zC$ (\hyperlink{S13l}{\car}) in
\cite{Duval:2014uoa}. To this end, let $\MM$ denote
($D+2$)-dimensional Minkowski spacetime with coordinates $x^\mu$ for
$\mu = 0,1,\dots,D+1$ (although we will let $\natural$ stand for
$D+1$) and metric \begin{equation}
  ds^2 = \eta_{\mu\nu} dx^\mu dx^\mu = - (dx^0)^2 + \sum_{a=1}^D
  (dx^a)^2+ (dx^\natural)^2 = 2 dx^+ dx^- + \sum_{a=1}^D (dx^a)^2,
\end{equation}
where $x^\pm = \tfrac{1}{\sqrt2}(x^\natural \pm x^0)$.  Let $\eC_0 \subset
\MM$ denote the null hypersurface defined by $x^- = 0$.  We claim that
$\eC_0$ is a geometric realisation of the Lie pair $(\k_0,\h_0)$.

The Poincaré Lie algebra is spanned by the following vector fields on
$\MM$:
\begin{equation}
  J_{\mu\nu} = x_\mu \frac{\d}{\d x^\nu} - x_\nu \frac{\d}{\d x^\mu}
  \qquad\text{and}\qquad P_\mu = \frac{\d}{\d x^\mu},
\end{equation}
where $x_\mu = \eta_{\mu\rho} x^\rho$.  It is a transitive Lie algebra
for $\MM$, which means that for every $p \in \MM$, the values
$J_{\mu\nu}(p)$ and $P_\mu(p)$ span the tangent space $T_p \MM$.
The subalgebra of the Poincaré algebra consisting of vector fields
which are tangent to $\eC$ is spanned by
\begin{equation}
  J_{ab}, \qquad P_a, \qquad B_a := J_{a0} + J_{a\natural} \qquad
  H:= P_0 + P_\natural \qquad\text{and}\qquad J_{0\natural}.
\end{equation}
The subalgebra spanned by $J_{ab}$,$P_a$, $B_a$ and $H$ satisfy the
Lie brackets of $\k_0$ (the Carroll algebra) and the subalgebra $\h_0$
which vanishes at the point $o \in \eC_0$ with coordinates $x^\mu = 0$
is the span of $J_{ab}$ and $B_a$.  By dimension, $P_a$ and $H$ span
the tangent space $T_o\eC_0$ and indeed the same is true at any other
point of $\eC$ with different (but isomorphic) stabiliser subalgebra.
Hence $\k_0$ is a transitive Lie algebra for $\eC_0$ with stabiliser
$\h_0$ at $o$.  Therefore $\eC_0$ is a geometric realisation of
$(\k_0,\h_0)$.

Now let us consider $\varepsilon=-1$ and let $\MM$ now stand for
Minkowski spacetime of dimension $D+3$, with coordinates $x^\mu$ with
$\mu= 0,1,\dots,D+2$ and metric
\begin{equation}
  ds^2 = \eta_{\mu\nu} dx^\mu dx^\nu = - (dx^0)^2 + \sum_{A=1}^{D+2} (dx^A)^2.
\end{equation}
Let $\eQ_-$ denote the quadric defined by
\begin{equation}
  \eta_{\mu\nu} x^\mu x^\nu = R^2,
\end{equation}
for some (fixed) $R>0$.  Its universal cover (with the induced metric)
is de~Sitter spacetime in dimension $D+2$.  The Lorentz Lie algebra,
spanned by
\begin{equation}
  J_{\mu\nu} = x_\mu \frac{\d}{\d x^\nu} - x_\nu \frac{\d}{\d x^\mu}
\end{equation}
with $x_\mu = \eta_{\mu\rho} x^\rho$, is a transitive Lie algebra for
$\eQ_-$ and isomorphic to $\so(D+2,1)$.  Let $\eN \subset \MM$ denote
the null hypersurface defined by the equation $x^0 = x^{D+2}$ and let
$\eC_- = \eQ_- \cap \eN $.  This is defined by the following two
equations:
\begin{equation}
  x^0 = x^{D+2} \qquad\text{and}\qquad \sum_{i=1}^{D+1} (x^i)^2 = R^2,
\end{equation}
which shows that $\eC_-$ is diffeomorphic to $\RR \times \SS^D$ and
hence is simply-connected for $D\geq 2$.  We claim that $\eC_-$ is a
geometric realisation for the Lie pair $(\k_-,\h_-)$.  We will show
this by determining the subalgebra of the Lorentz Lie algebra
consisting of vector fields tangent to $\eC_-$, a subalgebra of which,
by the same argument as in the previous case, is transitive on
$\eC_-$.  We will show that this subalgebra is isomorphic to $\k_-$
and that the stabiliser at a suitably chosen point $o \in \eC_-$ is
isomorphic to $\h_-$.  The following Lorentz generators are tangent to
$\eC_-$:
\begin{equation}
  J_{ij}, \qquad V_i := J_{i0} + J_{i,D+2} \qquad\text{and}\qquad J_{0,D+2}
\end{equation}
for all $i=1,\dots,D+1$.  The subalgebra spanned by $J_{ij}$ and $V_i$
is isomorphic to the euclidean algebra in dimension $D+1$, with
nonzero Lie brackets
\begin{equation}
  \begin{split}
    [J_{ij},J_{k\ell}] &= \delta_{jk} J_{i\ell} - \delta_{ik} J_{j\ell} - \delta_{j\ell} J_{ik} + \delta_{i\ell} J_{jk}\\
    [J_{ij}, V_k] &= \delta_{jk} V_i - \delta_{ik} V_j.
  \end{split}
\end{equation}
Let $a = 1,\dots,D$ and let $\natural = D+1$.  Then these generators
break up as
\begin{equation}
  J_{ab},\qquad P_a:=J_{a\natural}, \qquad B_a:=V_a \qquad\text{and}\qquad H:=V_\natural,
\end{equation}
which obey
\begin{equation}
  [H,P_a] = - B_a, \qquad [B_a,P_b] = \delta_{ab} H
  \qquad\text{and}\qquad [P_a,P_b] = - J_{ab},
\end{equation}
apart from \eqref{eq:kin}.  We see that this Lie algebra is isomorphic
to $\k_-$.  Now let $o \in \eC_-$ denote the point with coordinates
$x^a=x^0=x^{D+2}=0$ and $x^\natural = R$.  Then the vector fields
which vanish at $o$ are the span of $J_{ab}$ and $B_a$, which is
isomorphic to $\h_-$.  Therefore $\eC_-$ is a geometric realisation of
$(\k_-,\h_-)$.

Finally, we consider the case $\varepsilon=+1$.  Now $\widetilde{\EE}$ is
pseudo-euclidean space with signature $(D+1,2)$ with coordinates
$x^\mu$ for $\mu =0,1,\dots,D+2$ and metric
\begin{equation}
  ds^2 = \eta_{\mu\nu} dx^\mu dx^\nu = - (dx^0)^2 + \sum_{i=1}^{D+1} (dx^i)^2 -
  (dx^{D+2})^2
\end{equation}
metric as above.  Now we fix $R>0$ and let $\eQ_+\subset
\widetilde{\EE}$ denote the quadric defined by the equation
\begin{equation}
  \eta_{\mu\nu} x^\mu x^\nu = -R^2.
\end{equation}
Its universal cover (with the induced metric) is anti de~Sitter
spacetime in dimension $D+2$.  The Lie algebra spanned by the vector
fields
\begin{equation}
  J_{\mu\nu} = x_\mu \frac{\d}{\d x^\nu} - x_\nu \frac{\d}{\d x^\mu}
\end{equation}
with $x_\mu = \eta_{\mu\rho} x^\rho$, is a transitive Lie algebra for
$\eQ_+$ and isomorphic to $\so(D+1,2)$.  Let $\eN \subset
\widetilde{\EE}$ denote now the null hypersurface with equation
$x^{D+1} = x^{D+2}$.  The intersection $\eQ_+ \cap \eN$ is described
by the two equations:
\begin{equation}
  x^{D+1} = x^{D+2} \qquad\text{and}\qquad (x^0)^2 = R^2 +
  \sum_{a=1}^D (x^a)^2,
\end{equation}
which has two connected components determined by the sign of $x^0$,
which is never zero.  Let $\eC_+$ denote the component where $x^0>0$.
The following vector fields generate the subalgebra of the span of the
$J_{\mu\nu}$ which are tangent to $\eC_+$:
\begin{equation}
  J_{ab}, \qquad P_a := - J_{0a}, \qquad B_a := J_{a,D+1} + J_{a,D+2},
  \qquad H:= J_{0,D+1} + J_{0,D+2} \qquad\text{and}\qquad J_{D+1,D+2},
\end{equation}
for $a=1,\dots,D$.  The subalgebra spanned by $J_{ab}$, $P_a$, $B_a$
and $H$ has, in addition to \eqref{eq:kin}, the
following nonzero Lie brackets
\begin{equation}
[H,P_a] = B_a   \qquad [B_a, P_b] = \delta_{ab} H  \qquad\text{and}\qquad [P_a,P_b] = J_{ab},
\end{equation}
showing that it is isomorphic to $\k_+$.  The stabiliser subalgebra at the point
$o\in\eC_+$ with coordinates $x^a = x^{D+1} = x^{D+2}=0$ and $x^0 = R$, is the
span of $J_{ab}$ and $B_a$, so isomorphic to $\h_+$.  As before,
dimension says that $\k_+$ is transitive at $o\in \eC_+$ with stabiliser $\h_+$
and also at other points with isomorphic stabilisers.  Therefore
$\eC_+$ is a geometric realisation of $(\k_+,\h_+)$.

It is worth remarking that in all the homogeneous carrollian
spacetimes discussed here, the boost generators can be interpreted as
null rotations in the ambient Minkowski or pseudo-euclidean spaces.

\section{Limits between homogeneous spacetimes}
\label{sec:limits}

In the previous two sections we have classified the simply-connected
homogeneous spacetimes.  This provides the objects in
Figure~\ref{fig:generic-d-graph}, which also contains arrows between
the spacetimes.  These arrows are explained by limits between
spacetimes and in this section we will discuss these limits and in
this way explain Figure~\ref{fig:generic-d-graph}.  We will also
explain how the picture gets modified in $D\leq 2$ and explain
Figures~\ref{fig:d=3-graph} and \ref{fig:d=2-graph}.

In the infinitesimal description of the homogeneous spacetimes in
terms of Lie pairs, most (but not all) limits between homogeneous
spacetimes manifest themselves as contractions of the underlying
kinematical Lie algebras.

\subsection{Contractions}
\label{sec:contractions}

Recall that a (finite-dimensional, real) Lie algebra consists of a
vector space $V$, together with a linear map
$\varphi: \Lambda^2 V \to V$ satisfying the Jacobi identity.  The
Jacobi identity defines an algebraic variety
$\eJ \subset \Lambda^2V^*\otimes V$, every point of which is a Lie
algebra structure on $V$.  The general linear group $\GL(V)$ acts on
$V$ and hence also on the vector space $\Lambda^2V^*\otimes V$ and
since the action is tensorial it preserves the variety $\eJ$.  If
$\varphi \in \eJ$ defines a Lie algebra $\g$ and $g \in \GL(V)$, then
$g \cdot \varphi \in \eJ$ and, by definition, the Lie algebra it
defines is isomorphic to $\g$.  Indeed, the $\GL(V)$ orbit of
$\varphi$ consists of all Lie algebras on $V$ which are isomorphic to
$\g$.  This orbit may not be closed relative to the induced topology
on $\eJ$.  The closure of the orbit may contain Lie algebras which are
not isomorphic to $\g$: they are said to be ``degenerations'' of $\g$.
A special class of degenerations are the \emph{contractions} of $\g$,
which are limit points of curves in the $\GL(V)$-orbit of $\varphi$.
More precisely, let $t \in (0,1]$ and let $g_t$ be a continuous curve
in $\GL(V)$ with $g_1$ the identity.  Define $\varphi_t := g_t \cdot
\varphi$. Explicitly, the Lie bracket $[-,-]_t$ associated to
$\varphi_t$ is given by
\begin{equation}
  [X,Y]_t = g_t \cdot [g_t^{-1} \cdot X, g_t^{-1} \cdot Y].
\end{equation}
Then $\varphi_1 = \varphi$ and for every $t \in (0,1]$, $\varphi_t$
defines a Lie algebra on $V$ isomorphic to $\g$.  By continuity, if
the limit $\varphi_0 := \lim_{t \to 0} \varphi_t$ exists, it defines a
Lie algebra, but since the linear transformation $g_0 := \lim_{t \to
  0} g_t$ of $V$ (even if it exists) need not be invertible, the Lie
algebra defined by $\varphi_0$ need not be isomorphic to $\g$.  It is,
however, by definition a \textbf{contraction} of $\g$.

We will now explicitly exhibit contractions between kinematical Lie
algebras which induce the limits between the $(D+1)$-dimensional
homogeneous spaces in Figure~\ref{fig:generic-d-graph} for $D\geq 3$.
We will also explain other non-contraction limits in that figure, as
well as in $D\leq 2$.

\subsection{$D\geq 3$}
\label{sec:dgeq-3}

We will start with the lorentzian and riemannian space forms with
nonzero curvature, whose kinematical Lie algebras are the (semi)simple
Lie algebras: $\so(D+1,1)$ for de~Sitter spacetime (\ds) and hyperbolic
space (\hyp), $\so(D,2)$ for anti~de~Sitter spacetime (\ads) and
$\so(D+2)$ for the round sphere (\sph).

Let $\RR^{D+2}$ have basis $e_\mu = (e_0,e_a,e_\natural)$, with $a
=1,\dots,D$ and let $\eta$ denote the inner product with coefficients
$\eta_{\mu\nu} = \eta(e_\mu,e_\nu)$ given by $\eta_{ab} = \delta_{ab}$
and all other components zero except for $\eta_{00}$ and
$\eta_{\natural\natural}$.   The generators of $\so(\RR^{D+2},\eta)$
are $J_{\mu\nu} = \{J_{ab}, B_a := J_{0a}, P_a := J_{a\natural}, H := J_{0\natural}\}$
subject to the following Lie brackets
\begin{equation}
  \begin{aligned}\relax
    [J_{ab}, J_{cd}] &= \delta_{bc} J_{ad} - \delta_{ac} J_{bd} - \delta_{bd} J_{ac} + \delta_{ad} J_{bc}\\
    [J_{ab}, B_c] &= \delta_{bc} B_a - \delta_{ac} B_b\\
    [J_{ab}, P_c] &= \delta_{bc} P_a - \delta_{ac} P_b\\
    [H, B_a] &= \eta_{00} P_a
  \end{aligned}
  \qquad\qquad
  \begin{aligned}\relax
    [H, P_a] &= -\eta_{\natural\natural} B_a\\
    [B_a, B_b] &= -\eta_{00} J_{ab}\\
    [P_a, P_b] &= -\eta_{\natural\natural} J_{ab}\\
    [B_a, P_b] &= \delta_{ab} H.
  \end{aligned}
\end{equation}
The first three brackets are the standard kinematical Lie brackets, so
we will focus attention on the remaining brackets and we will change
to shorthand notation where the $\so(D)$ indices are implicit.

Let us consider a three-parameter $(\kappa,c,\tau)$ family of linear
transformations $g_{\kappa,c,\tau}$ defined on generators by
\begin{equation}
  g_{\kappa,c,\tau} \cdot \J = \J, \qquad
  g_{\kappa,c,\tau} \cdot \B = \tfrac{\tau}{c} \B, \qquad
  g_{\kappa,c,\tau} \cdot \P = \tfrac{\kappa}{c} \P
  \qquad\text{and}\qquad
  g_{\kappa,c,\tau} \cdot H = \tau\kappa H.
\end{equation}
The transformed Lie brackets are such that the common kinematical Lie
brackets involving $\J$ are unchanged and the remaining brackets are
\begin{equation}
  \label{eq:simple-params}
  \begin{split}
    [H, \B] &= \tau^2 \eta_{00} \P\\
    [H, \P] &= -\kappa^2 \eta_{\natural\natural} \B\\
    [\B, \B] &= -\left(\tfrac{\tau}{c}\right)^2 \eta_{00} \J\\
    [\P, \P] &= -\left(\tfrac{\kappa}{c}\right)^2 \eta_{\natural\natural} \J\\
    [\B, \P] &= \left(\tfrac{1}{c}\right)^2 H.
  \end{split}
\end{equation}
The \textbf{flat limit} corresponds to taking $\kappa \to 0$, the
\textbf{non-relativistic limit} to $c \to \infty$ and the
\textbf{ultra-relativistic limit} to $\tau \to 0$.  We may take any
two of the three limits or, indeed, all limits at once; although
whenever we combine a non-relativistic limit with an ultra-relativistic
limit we arrive at a non-effective Lie pair reducing to the
aristotelian static spacetime (\st), denoted $\hyperlink{A21}{\mathsf{S}}$ in
Figure~\ref{fig:generic-d-graph}.

Let us first take the flat limit $\kappa \to 0$ of the Lie brackets
in \eqref{eq:simple-params} to arrive at
\begin{equation}
  \begin{split}
    [H, \B] &= \tau^2 \eta_{00} \P\\
    [\B, \B] &= -\left(\tfrac{\tau}{c}\right)^2 \eta_{00} \J\\
    [\B, \P] &= \left(\tfrac{1}{c}\right)^2 H.
  \end{split}
\end{equation}
For $\frac{\tau}{c} \neq 0$, this is either the Poincaré Lie algebra
for $\eta_{00} = -1$ or the euclidean Lie algebra for $\eta_{00}=1$.
The corresponding Lie pairs are those of Minkowski spacetime (\mink) and
euclidean space (\euc).  In Figure~\ref{fig:generic-d-graph} only
$\eta_{00}=-1$ is considered and these limits explain the arrows
$\hyperlink{S3l}{\mathsf{AdS}}\to\hyperlink{S1l}{\mathbb{M}}$ and $\hyperlink{S2l}{\mathsf{dS}}\to\hyperlink{S1l}{\mathbb{M}}$, but in
Figure~\ref{fig:state-of-prior-art} this also explains the arrows
$\hyperlink{S6l}{\mathbb{H}} \to \hyperlink{S4l}{\mathbb{E}}$ and $\hyperlink{S5l}{\mathbb{S}} \to \hyperlink{S4l}{\mathbb{E}}$.  We may now take
the non-relativistic limit $c \to \infty$ to arrive at the galilean
algebra (after rescaling $H$ by $-1/(\eta_{00} \tau^2)$),
\begin{equation}
    [H, \B] = - \P,
\end{equation}
or alternatively the ultra-relativistic limit $\tau \to 0$ to arrive at
the Carroll algebra (after setting $c=1$):
\begin{equation}
    [\B, \P] = H.
\end{equation}
The corresponding Lie pairs are the galilean (\gal) and carrollian
(\car) spacetimes.  This explains the arrows $\hyperlink{S1l}{\mathbb{M}} \to \hyperlink{S7l}{\mathsf{G}}$
and $\hyperlink{S1l}{\mathbb{M}} \to \hyperlink{S13l}{\mathsf{C}}$ in Figure~\ref{fig:generic-d-graph} and
the arrows $\hyperlink{S4l}{\mathbb{E}} \to \hyperlink{S7l}{\mathsf{G}}$ and $\hyperlink{S4l}{\mathbb{E}} \to \hyperlink{S13l}{\mathsf{C}}$ in
Figure~\ref{fig:state-of-prior-art}.

Taking now the non-relativistic limit of the Lie brackets in
\eqref{eq:simple-params}, we have
\begin{equation}
  \begin{split}
    [H, \B] &= \tau^2 \eta_{00} \P\\
    [H, \P] &= -\kappa^2 \eta_{\natural\natural} \B.
  \end{split}
\end{equation}
For $\tau \kappa \neq 0$, we obtain the Lie pairs corresponding to the
galilean de~Sitter spacetime (\dsg), if
$\eta_{\natural\natural} \eta_{00} = -1$, or the galilean
anti~de~Sitter spacetime (\adsg), if
$\eta_{\natural\natural} \eta_{00} = 1$, thus explaining the arrows
$\hyperlink{S2l}{\mathsf{dS}} \to \hyperlink{S8l}{\mathsf{dSG}}$ and $\hyperlink{S3l}{\mathsf{AdS}} \to \hyperlink{S10l}{\mathsf{AdSG}}$ in
Figure~\ref{fig:generic-d-graph} and also the arrows
$\hyperlink{S6l}{\mathbb{H}} \to \hyperlink{S8l}{\mathsf{dSG}}$ and $\hyperlink{S5l}{\mathbb{S}} \to \hyperlink{S10l}{\mathsf{AdSG}}$ in
Figure~\ref{fig:state-of-prior-art}.  If we then take the flat
limit $\kappa \to 0$, we obtain the galilean spacetime, thus
explaining the arrows $\hyperlink{S8l}{\mathsf{dSG}} \to \hyperlink{S7l}{\mathsf{G}}$ and
$\hyperlink{S10l}{\mathsf{AdSG}} \to \hyperlink{S7l}{\mathsf{G}}$.  If instead we take the
ultra-relativistic limit we obtain a non-effective Lie pair reducing
to the aristotelian static spacetime (\st).

Finally, let us start by taking the ultra-relativistic limit
($\tau\to0$) in the brackets~\eqref{eq:simple-params}, to arrive at
\begin{equation}
  \begin{split}
    [H, \P] &= -\kappa^2 \eta_{\natural\natural} \B\\
    [\P, \P] &= -\left(\tfrac{\kappa}{c}\right)^2 \eta_{\natural\natural} \J\\
    [\B, \P] &= \left(\tfrac{1}{c}\right)^2 H.
  \end{split}
\end{equation}
If $\frac{\kappa}c \neq 0$, we obtain either the carrollian de~Sitter
spacetime (\dsc) if $\eta_{\natural\natural} = 1$ or the carrollian
anti~de~Sitter spacetime (\adsc) if $\eta_{\natural\natural} = -1$.
This explains the arrows $\hyperlink{S2l}{\mathsf{dS}} \to \hyperlink{S14l}{\mathsf{dSC}}$ and
$\hyperlink{S3l}{\mathsf{AdS}} \to \hyperlink{S15l}{\mathsf{AdSC}}$ in Figure~\ref{fig:generic-d-graph} and
the arrows $\hyperlink{S6l}{\mathbb{H}} \to \hyperlink{S15l}{\mathsf{AdSC}}$ and $\hyperlink{S5l}{\mathbb{S}} \to \hyperlink{S14l}{\mathsf{dSC}}$
in Figure~\ref{fig:state-of-prior-art}.  If we now take the flat
limit we arrive at the carrollian spacetime, which explains the arrows
$\hyperlink{S14l}{\mathsf{dSC}} \to \hyperlink{S13l}{\mathsf{C}}$ and $\hyperlink{S15l}{\mathsf{AdSC}} \to \hyperlink{S13l}{\mathsf{C}}$.  If
instead we take the non-relativistic limit, we arrive at a
non-effective Lie pair reducing to the aristotelian static spacetime
(\st).

Of all the arrows to the aristotelian static spacetime, only
$\hyperlink{S7l}{\mathsf{G}} \to \hyperlink{A21}{\mathsf{S}}$ and $\hyperlink{S13l}{\mathsf{C}} \to \hyperlink{A21}{\mathsf{S}}$ are shown
explicitly in Figure~\ref{fig:generic-d-graph}.  Taking any two limits
in the brackets \eqref{eq:simple-params}, the resulting Lie algebra
does not depend on the order in which we take the limits.  This means
that the arrows in \eqref{eq:simple-params} ``commute'' and thus, for
instance, that the arrow (not shown) $\hyperlink{S14l}{\mathsf{dSC}} \to \hyperlink{A21}{\mathsf{S}}$ is to
be understood as the composition of the arrows (shown)
$\hyperlink{S14l}{\mathsf{dSC}} \to \hyperlink{S13l}{\mathsf{C}} \to \hyperlink{A21}{\mathsf{S}}$.  Similarly, the arrows (not
shown) $\hyperlink{S15l}{\mathsf{AdSC}} \to \hyperlink{A21}{\mathsf{S}}$, $\hyperlink{S8l}{\mathsf{dSG}} \to \hyperlink{A21}{\mathsf{S}}$ and
$\hyperlink{S10l}{\mathsf{AdSG}} \to \hyperlink{A21}{\mathsf{S}}$ can be understood as compositions of
arrows which are shown: $\hyperlink{S15l}{\mathsf{AdSC}} \to \hyperlink{S13l}{\mathsf{C}} \to \hyperlink{A21}{\mathsf{S}}$,
$\hyperlink{S8l}{\mathsf{dSG}} \to \hyperlink{S7l}{\mathsf{G}} \to \hyperlink{A21}{\mathsf{S}}$ and $\hyperlink{S10l}{\mathsf{AdSG}} \to \hyperlink{S7l}{\mathsf{G}}
\to \hyperlink{A21}{\mathsf{S}}$, respectively.

We have so far explained the limits in
Figure~\ref{fig:generic-d-graph} (or even
Figure~\ref{fig:state-of-prior-art}) corresponding to the known
symmetric spacetimes and it now remains to explain the limits from the
new spacetimes in our classification.

\subsubsection{$\ztAdSG_\chi \to \zG$}
\label{sec:adsg-to-g}

Let $t \in (0,1]$ and let $g_t$ be defined by
\begin{equation}
  g_t \cdot \J = \J, \qquad g_t \cdot \B = \B, \qquad g_t \cdot \P =
  t \P \qquad\text{and}\qquad g_t \cdot H = t H.
\end{equation}
The new brackets are now
\begin{equation}
  [H,\B] = - \P \qquad\text{and}\qquad [H,\P] = t^2 (1+ \chi^2) \B + 2
  t \chi \P~,
\end{equation}
so that taking the limit $t \to 0$, gives the galilean algebra $[H,\B]
= - \P$.

\subsubsection{$\ztdSG_\gamma \to \zG$}
\label{sec:dsg-to-g}

This is just like the previous case.  Under the same $g_t$ as before,
the new brackets are now
\begin{equation}
  [H,\B] = - \P \qquad\text{and}\qquad [H,\P] = t^2 \gamma \B + t (1+
  \gamma) \P~,
\end{equation}
so that taking the limit $t \to 0$, gives the galilean algebra.

\subsubsection{$\zLC \to \zC$}
\label{sec:flc-to-c}

Taking $g_t$ as in the previous two cases, the brackets become
\begin{equation}
  [H,\B] = t \B, \qquad [H,\P] = t \P \qquad\text{and}\qquad [\B,\P] = H + t \J.
\end{equation}
Taking the limit $t \to 0$ we recover the Carroll algebra $[\B,\P] = H$.

\subsubsection{$\zLC \to \zTS$}
\label{sec:flc-to-ts}

Let $t \in (0,1]$ and let $g_t$ be defined by
\begin{equation}
  g_t \cdot \J = \J, \qquad g_t \cdot \B = \B, \qquad g_t \P = t \P
  \qquad\text{and}\qquad g_t \cdot H =  H,
\end{equation}
so that the brackets become
\begin{equation}
  [H,\B] = \B, \qquad [H,\P] = \P \qquad\text{and}\qquad [\B,\P] = t H + t \J.
\end{equation}
Taking the limit $t\to 0$ gives
\begin{equation}
  [H,\B] = \B \qquad\text{and}\qquad [H,\P] = \P.
\end{equation}
The resulting Lie pair is not effective because the span of the $B_a$
is an ideal.  Quotienting by this ideal gives the aristotelian Lie
algebra defined by $[H,\P] = \P$, whose associated spacetime is the
torsional static spacetime \tst.

\subsubsection{$\zTS \to \zS$}
\label{sec:ts-to-s}

Let $t \in (0,1]$ and let $g_t \cdot H = t H$ and $g_t \cdot P = P$.
The new bracket is $[H,P] = t P$, which vanishes in the limit $t \to
0$.

\subsubsection{A non-contracting limit}
\label{sec:add-lim}

Finally, we discuss a limit which does not come from a contraction of
Lie algebras.  The Lie algebra of $\hyperlink{S11l}{\mathsf{AdSG}}_\chi$ depends on a
parameter $\chi\geq 0$ and this parameter determines the isomorphism
class of the Lie algebra.  A natural question is what spacetime
corresponds to $\hyperlink{S11l}{\mathsf{AdSG}}_\chi$ in the limit $\chi \to \infty$.  The
answer turns out to be that
$\lim_{\chi\to\infty} \hyperlink{S11l}{\mathsf{AdSG}}_\chi = \hyperlink{S9l}{\mathsf{dSG}}_1$.
To see this we start with the Lie algebra corresponding to
$\hyperlink{S11l}{\mathsf{AdSG}}_\chi$
\begin{equation}
  [H,\B] = -\P \qquad\text{and}\qquad [H,\P] = (1+\chi^2) \B + 2\chi \P
\end{equation}
and we change basis to
\begin{equation}
  H' = \chi^{-1} H \qquad \B' = \B \qquad\text{and}\qquad \P' =
  \chi^{-1} \P.
\end{equation}
This is a vector space isomorphism for any $\chi >0$, but becomes
singular in the limit $\chi \to \infty$.  In this sense this is
reminiscent of a contraction, but it is not a contraction since we are
changing the isomorphism type of the Lie algebra as we change $\chi$.
In the new basis,
\begin{equation}
  [H',\B'] = -\P' \qquad\text{and}\qquad [H',\P'] = (1+\chi^{-2}) \B' + 2 \P'
\end{equation}
and now taking $\chi \to \infty$ we arrive at
\begin{equation}
  [H',\B'] = -\P' \qquad\text{and}\qquad [H',\P'] = \B' + 2 \P',
\end{equation}
which is the Lie algebra corresponding to $\hyperlink{S9l}{\mathsf{dSG}}_1$.

\subsection{$D=2$}
\label{sec:d=2}

In $D=2$ there is an additional two-parameter family of spacetimes not
present in $D\geq 3$: namely, spacetime \twodgal$_{\gamma,\chi}$, for
$\gamma \in [-1,1)$ and $\chi>0$.  The Lie brackets are given by
\eqref{eq:kin} and in addition
\begin{equation}\label{eq:twodgal}
  [H,\B] = -\P  \qquad\text{and}\qquad  [H,\P] = (1+\gamma) \P -
  \chi\Pt + \gamma \B - \chi \Bt,
\end{equation}
or in complex form
\begin{equation}\label{eq:twodgal-complex}
  [H,\B] = -\P  \qquad\text{and}\qquad  [H,\P] = (1+z) \P + z \B,
\end{equation}
where $z = \gamma + i \chi \in \CC$ lies in the infinite vertical
strip in the upper-half plane defined by $-1\leq\Re
z<1$.

Parenthetically, let us mention that for $z = -1 + i \chi$,
the complexified Lie pair for spacetime \twodgal$_{-1,\chi}$ is isomorphic
to the complexification of the Lie pair for $\hyperlink{S8l}{\mathsf{dSG}}$.  This can be seen
by the following complex change of basis:
\begin{equation}
  H' = \tfrac{2}{2+i\chi} (H + \tfrac\chi2 J) \qquad \B' = \B
  \qquad\text{and}\qquad  \P' = \tfrac{2}{2+i\chi}(\P + i \tfrac\chi2 \B),
\end{equation}
so that
\begin{equation}
  [H', \B'] = - \P' \qquad\text{and}\qquad [H',\P'] = -\B'.
\end{equation}
This provides an example of a finite-dimensional complex Lie algebra
having a continuum of non-isomorphic real forms.

The region where $z$ lives has two additional boundaries: $z = \gamma
\in [-1,1]$ and $z= 1 + i \chi$ with $\chi\geq 0$.  The horizontal
boundary $z = \gamma \in [-1,1]$ corresponds to $\chi = 0$ in
equation~\eqref{eq:twodgal}:
\begin{equation}
  [H,\B] = -\P  \qquad\text{and}\qquad  [H,\P] = \gamma \B + (1+\gamma) \P,
\end{equation}
which corresponds to $\hyperlink{S9l}{\mathsf{dSG}}_\gamma$.

Let us change basis (for $\chi > 0$) from $(H,\B,\P)$ to
$(H':=\frac2\chi H + J, \B':=\B, \P':=i \B + \frac2\chi \P)$ in such a
way that the Lie brackets become
\begin{equation}
  [H',\B'] = -\P' \qquad\text{and}\qquad [H',\P'] =
  \frac{2(1+\gamma)}\chi \P' + \left(1 + \frac{4\gamma^2}{\chi^2} +
    \frac{2(1-\gamma)i}\chi \right) \B'.
\end{equation}
When $\gamma = 1$, this corresponds to $\hyperlink{S11l}{\mathsf{AdSG}}_{2/\chi}$.  Now let
us consider the (singular) limit $\chi \to\infty$, so that the Lie
brackets become
\begin{equation}
  [H',\B'] = -\P' \qquad\text{and}\qquad [H',\P'] = \B',
\end{equation}
which we recognise as $\hyperlink{S10l}{\mathsf{AdSG}}$.

The picture resulting from this discussion is illustrated in
Figure~\ref{fig:d=2-region}.  This shows that in $D=2$, spacetime
\twodgal$_{\gamma,\chi}$ interpolates between the one-dimensional
continua of torsional galilean de~Sitter and anti~de~Sitter
spacetimes.  Figure~\ref{fig:d=3-graph} shows how to insert this
figure into Figure~\ref{fig:generic-d-graph}.

\begin{figure}[h!]
  \centering
  \begin{tikzpicture}[>=latex, x=1.0cm,y=1.0cm]
    %
    %
    %
    %
    \coordinate [label=below left:{\tiny $\hyperlink{S8l}{\mathsf{dSG}}$}] (dsg) at (-2,0);
    \coordinate [label=below right:{\tiny $\hyperlink{S9l}{\mathsf{dSG}}_1$}] (dsgone) at (2,0);
    \coordinate [label=above:{\tiny $\hyperlink{S10l}{\mathsf{AdSG}}$}] (adsgtoo) at  (-2,4);
    \coordinate [label=above:{\tiny $\hyperlink{S10l}{\mathsf{AdSG}}$}] (adsg) at (2,4);
    \coordinate [label=above:{\tiny $\hyperlink{S10l}{\mathsf{AdSG}}$}] (adsgthree) at  (0,4);    
    %
    %
    \fill [color=green!10!white] (adsgtoo) -- (dsg) -- (dsgone) -- (adsg) -- (adsgtoo);
    %
    %
    \coordinate [label=below:{\tiny  $\hyperlink{S9l}{\mathsf{dSG}_\gamma}$\quad $\gamma\in[-1,1]$}] (tdsg) at (0,0);
    \coordinate [label=right: {\tiny $\hyperlink{S11l}{\mathsf{AdSG}_{\frac2\chi}}$\quad $\chi>0$}] (tadsg) at (2,2);
    \coordinate [label={\small (\twodgal)$_{\gamma, \chi}$}] (tdg) at (0,1.5);
    %
    %
    \draw [-,line width=2pt,color=green!70!black] (dsg) -- (dsgone);
    \draw [->,line width=2pt,color=green!15!white] (dsg) -- (adsgtoo);
    \draw [<->, line width=2pt,shorten >= 3pt,color=green!70!black] (adsg) -- (dsgone);
    %
    %
    \foreach \point in {dsg,dsgone}
    \filldraw [color=green!70!black,fill=green!70!black] (\point) circle (1.5pt);
  \end{tikzpicture}
  \caption{Parameter space for some $D=2$ spacetimes}
  \label{fig:d=2-region}
\end{figure}
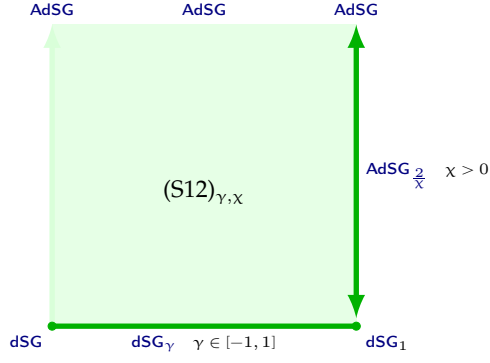

\subsection{$D=1$}
\label{sec:d=1}

Now the spacetimes are two-dimensional.  It is then mostly a matter of
convention what we call space and what we call time.  This manifests
itself in some accidental isomorphisms between spacetimes.  For
example, de~Sitter and anti~de~Sitter spacetimes are isomorphic as
homogeneous spaces of $\SO(2,1) \cong \SO(1,2)$.  At first one might
be surprised at this statement since after all de~Sitter space has
positive scalar curvature, whereas anti~de~Sitter space has negative
scalar curvature and surely they are geometrically distinguishable.
This is perhaps a good place to point out that scalar curvature is not
an invariant of a homogeneous space, but rather of the homogeneous
space together with the choice of an invariant metric.  In (A)dS there
is a one-parameter family of invariant metrics, labelled by the radius
of curvature, all sharing the same connection and curvature: after
all, the Levi--Civita connection and hence the Riemann curvature are
homothety invariant.  Even the Ricci tensor is homothety invariant and
it is only the Ricci scalar, which involves tracing with the metric,
that distinguishes between the (A)dS spacetimes with different radii
of curvature.  In two dimensions, we have the possibility of
exchanging space and time, which results in multiplying the metric by
$-1$, which is formally a homothety, so that the Riemann and Ricci
tensors are unchanged.  It is only when we calculate the Ricci scalar
that we see the effect of this homothety: namely, changing the sign.

There are other accidental isomorphisms coming from exchanging space
and time, e.g., galilean and carrollian spacetimes are isomorphic.
This is clear from Figure~\ref{fig:lightcones} since exchanging space
and time rotates the light cones by 90 degrees.  Similarly, carrollian
dS and galilean AdS spacetimes are isomorphic, as are carrollian AdS
and galilean dS spacetimes: the result of both changing the sign of
the curvature and rotating the light cone by 90 degrees.

In addition to these identifications, there are additional homogeneous
spacetimes which are unique to two dimensions: namely, \xone, \xtwo,
\xthree$_{\chi>0}$ and \xfour$_{\chi>0}$ and which admit none of the
low-rank invariant structures we have been focussing on.  We describe
two kinds of limits between these exotic spacetimes: limits which
manifest themselves infinitesimally as contractions of the relevant
Bianchi Lie algebra as well as limits which are not of this type.  The
contractions between the Bianchi Lie algebras have been determined in
\cite{MR1123589}.  All Lie algebras contract to the abelian Lie
algebra (here, Bianchi~I), and we will not mention these contractions
explicitly.  The resulting spacetimes and their limits are depicted in
Figure~\ref{fig:d=2-graph}.

\subsubsection{$\text{\xone}\to\text{\xtwo}$ and $\text{\xone}\to\text{\gal/\car}$}
\label{sec:s17-to-s18-g/c}

The Lie algebra associated to \xone\ is Bianchi~IV, which can contract
to Bianchi~II and Bianchi~V \cite{MR1123589}.  Indeed, both
contractions arise from limits of spacetime \xone: a limit to
galilean/carrollian spacetime and a limit to \xtwo.

The Lie algebra has brackets $[H,B] = -P$ and $[B,P]=-H-2P$.  This is
an extension by $B$ of the abelian Lie algebra spanned by $H$ and
$P$.  The action of $B$ is via a non-diagonalisable endomorphism with
one eigenvector with eigenvalue $-1$.  This means that we can change
basis in the span of $H$ and $P$ so that relative to the new basis $H',P'$,
\begin{equation}
  [B,H'] = - H' \qquad\text{and}\qquad [B,P'] = -P' + H'.
\end{equation}
If we now introduce a parameter $t \in (0,1]$ and a one-parameter
family $g_t$ of invertible endomorphisms defined by
\begin{equation}
  g_t \cdot H' = H', \qquad g_t \cdot B = B \qquad\text{and}\qquad
  g_t \cdot P' = t P',
\end{equation}
the brackets become
\begin{equation}
  [B,H']_t = - H' \qquad\text{and}\qquad [B,P'] = -P' + t H',
\end{equation}
so that the limit $t \to 0$ recovers the Lie algebra corresponding to
spacetime \xtwo.

If instead we define $g_t$ by
\begin{equation}
  g_t \cdot H' = tH', \qquad g_t \cdot B = t B \qquad\text{and}\qquad
  g_t \cdot P' = P',
\end{equation}
the brackets become
\begin{equation}
  [B,H']_t = - t H' \qquad\text{and}\qquad [B,P'] = -t P' + H',
\end{equation}
so that the limit $t \to 0$ recovers the Lie algebra corresponding to
spacetime \gal/\car\ after exchanging $H'$ and $P'$.

\subsubsection{$\text{\xthree}_\chi \to \text{\gal/\car}$}
\label{sec:s19-to-g/c}

The Lie algebra is Bianchi~VI$_\chi$ for $\chi>0$, which can contract
to Bianchi~II.  The Lie brackets are $[H,B]=(1-\chi) H$ and $[B,P] =
(1+\chi)P$.  Change basis to $H' = H+P$ and $P' = P-H$, so that the
Lie algebra in this new basis is
\begin{equation}
  [H',B] = -P' - \chi H' \qquad\text{and}\qquad [B,P'] = H' + \chi P'.
\end{equation}
Define $g_t$, for $t \in (0,1]$, by
\begin{equation}
  g_t \cdot B = t B, \qquad g_t \cdot H' = H' \qquad\text{and}\qquad
  g_t \cdot P' = t P'.
\end{equation}
The brackets become
\begin{equation}
  [H',B]_t = - P' - \chi t H' \qquad\text{and}\qquad [B,P'] = t^2 H' +
  \chi t P',
\end{equation}
so that the limit $t \to 0$ recovers the Lie algebra corresponding to
spacetime \gal/\car.

\subsubsection{$\text{\xfour}_\chi \to \text{\gal/\car}$}
\label{sec:s20-to-g/c}

The Lie algebra is Bianchi~VII$_\chi$ for $\chi>0$, which can contract
to Bianchi~II.  Under the same one-parameter family of
invertible transformations $g_t$ as in the previous case:
\begin{equation}
  g_t \cdot B = t B, \qquad g_t \cdot H = H \qquad\text{and}\qquad
  g_t \cdot P = t P,
\end{equation}
the brackets become
\begin{equation}
  [H,B]_t = P + \chi t H \qquad\text{and}\qquad [B,P] = t^2 H -
  \chi t P.
\end{equation}
Taking the limit $t\to 0$ and changing $B$ to $-B$, recovers the Lie
algebra of spacetime \gal/\car.

\subsubsection{Additional limits}
\label{sec:additional-limits}

The continua of spacetimes \xthree$_\chi$ and \xfour$_\chi$ depend on
a parameter $\chi >0$.  If we take $\chi \to 0$, then we obtain either
Minkowski spacetime or euclidean space:
$\text{\mink} = \text{\xthree}_{\chi=0}$ and $\text{\euc} =
\text{\xfour}_{\chi=0}$.  The limits $\chi \to \infty$ are different.
Defining $B' = \chi^{-1} B$ and letting $\chi \to \infty$, we obtain
the Lie algebra
\begin{equation}
  [B,H]=-H \qquad\text{and}\qquad [B,P] = -P,
\end{equation}
which corresponds to spacetime \xtwo.  The only limit of spacetime
\xtwo\ is to the aristotelian static spacetime \st.

\section{Some geometrical properties of homogeneous spacetimes }
\label{sec:some-geom-prop}

In this section we start to study some of the geometrical properties
of the homogeneous spacetimes in Tables~\ref{tab:spacetimes} and
\ref{tab:aristotelian}.  We will concentrate on those geometrical
properties which are easy to glean from the infinitesimal description
in terms of the Lie pair $(\k,\h)$ and which help distinguish between
the different homogeneous spacetimes.  In a follow-up paper
\cite{Figueroa-OFarrill:2019sex} we will study in more detail the local geometry
of these homogeneous spacetimes.

\subsection{Basic notions}
\label{sec:basic-notions}

We start by introducing some notions about Lie pairs which are the
algebraic analogues of geometric properties of their associated
homogeneous spaces.

\subsubsection{Reductive and symmetric Lie pairs}
\label{sec:reduct-symm-lie}

We say that a Lie pair $(\k,\h)$ is \textbf{reductive} if there is a
vector space decomposition $\k = \h \oplus \m$, where  $[\h,\h] \subset \h$ and $[\h,\m] \subset
\m$. A reductive Lie pair is said to be \textbf{symmetric} if
$[\m,\m] \subset \h$.  At the other extreme, if $[\m,\m] \subset \m$,
then $\m$ is a Lie algebra and then the homogeneous space is a
principal homogeneous space (i.e., with trivial stabilisers) of a Lie
group $\Mgr$ with Lie algebra $\m$, which is simply-connected if the
homogeneous space is.  The intersection between two cases is where
$[\m,\m] = 0$.  In that case $\Mgr$ is an abelian group and, if
simply-connected, then a vector space.  The homogeneous space is then
an affine space modelled on $\Mgr$.  We will say then that the
homogeneous space is \textbf{affine}.  (One should not confuse this
with the more general notion of an \emph{affine symmetric space} from
\cite{MR0059050,MR1393941}.)

In Table~\ref{tab:spacetimes} we have chosen a basis for
$\k$ in such a way that, in the reductive examples, $\h$ is spanned by
$\J$ and $\B$ and $\m$ is the span
of $\P$ and $\H$.  It is clear by inspection that all of the
spacetimes in Table~\ref{tab:spacetimes} are reductive with one
exception: the carrollian light cone $\zLC$.  We will see
below that $\zLC$ does not admit any invariant connections for $D\geq
2$, providing a separate proof that it is indeed non-reductive.  We
will also classify the invariant connections for $D=1$.

\subsubsection{The linear isotropy representation}
\label{sec:line-isotr-repr}

The natural geometric objects in a homogeneous space (e.g., metric,
connections, curvature, torsion,...) are those which are invariant
under the group action.  Invariance means, in particular, that their
value at any point is invariant under the stabiliser subgroup of that
point, which acts on the tangent space at that point via the linear
isotropy representation, which we now introduce at the Lie algebraic
level.

Even if $(\k,\h)$ is not reductive, we have a representation of $\h$
on $\k/\h$, called the \textbf{linear isotropy representation} and
denoted $\lambda: \h \to \gl(\k/\h)$ and sending $X \in \h$ to
$\lambda_X$. For every $Y \in \k$, let $\Ybar$ denote its image in
$\k/\h$. Then $X \in \h$ acts on $\Ybar$ as
\begin{equation}
  \lambda_X \Ybar := \overline{[X,Y]},
\end{equation}
which is well defined because $\h$ is a Lie subalgebra.  In the
reductive case, $\m$ is isomorphic to $\k/\h$ as a representation of
$\h$.  In the non-reductive case, we may choose a vector space
complement $\m$ and then define on it a representation of $\h$ by
transporting the linear isotropy representation on $\k/\h$ via the
vector space isomorphism $\m \cong \k/\h$.  In practice all this means
is that when calculating the action of $X \in \h$ on $Y \in \m$, we
can compute the Lie bracket $[X,Y]$ in $\k$ and set to zero anything
on the right-hand side that lies in $\h$.

If the Lie pair $(\k,\h)$ has a geometric realisation, the vector
space $\k/\h$ is an algebraic model for the tangent space at the
origin of any homogeneous space $M:=\Kgr/\Hgr$ with $\Kgr$ a Lie group
with Lie algebra $\k$ and $\Hgr$ a Lie subgroup with Lie algebra $\h$.
If $\Hgr$ is connected---which we can assume with no loss of
generality by passing to the universal cover of the homogeneous space,
if necessary---there is then a bijective correspondence between
$\h$-invariant tensors on $\k/\h$ and $\Kgr$-invariant tensor fields
on $M$.

\subsubsection{Invariant structures}
\label{sec:invariants}

For the purposes of this paper, we are particularly interested in
invariant tensors of low rank; that is, $\h$-invariant tensors in
$\k/\h$, $(\k/\h)^*$, $S^2(\k/\h)$ and $S^2(\k/\h)^*$.  A
non-degenerate invariant tensor in $S^2(\k/\h)^*$ gives rise to a
$\Kgr$-invariant metric on $M$, whereas $\h$-invariant tensors in
$\k/\h$ and $(\k/\h)^*$ give rise to an invariant vector field and
one-form on $M$, respectively.  We shall say that a homogeneous
kinematical spacetime $M$ is \textbf{lorentzian} or
\textbf{riemannian} if it admits a $\Kgr$-invariant metric of
lorentzian or riemannian signature.  This is the case if the
associated Lie pair admits an $\h$-invariant nondegenerate tensor in
$S^2(\k/\h)^*$ of the right signature.  In cases where rotations are
present in $\h$ (i.e., $D\geq 2$), the only possible invariant in
$\k/\h$ must be proportional to $\overline{H}$ and any invariants in
$S^2(\k/\h)$ must lie in the subspace spanned by $\overline{H}^2$ and
$\overline{P}^2 := \delta^{ab} \overline{P}_a \overline{P}_b$.

Let us introduce a basis $(\pi^a,\eta)$ for $(\k/\h)^*$ canonically
dual to the basis $(\overline{P}_a,\overline{H})$ for $\k/\h$; that
is,
\begin{equation}
  \pi^a (\overline{P}_b) = \delta^a_b \qquad \pi^a(\overline{H})=0
  \qquad
  \eta (\overline{P}_a) = 0 \qquad \eta(\overline{H})=1.
\end{equation}
Then similarly, for $D\geq 2$, any invariants in $(\k/\h)^*$ are
proportional to $\eta$, whereas in $S^2(\k/\h)^*$ any invariant lies
in the span of $\eta^2$ and $\pi^2$, where $\pi^2 = \delta_{ab}
\pi^a\pi^b$.

If $(\k,\h)$ is such that the one-form $\eta \in (\k/\h)^*$ and the
co-metric $\overline{P}^2 \in S^2(\k/\h)$ are $\h$-invariant, we say
that $M$ admits an invariant \textbf{galilean} structure. The one-form
$\eta$ is called the absolute clock reflecting that galilean
spacetimes are absolute in time. This means, two points of the
homogeneous space that are at same time stay that way, irrespective of
any galilean kinematical transformation. If $\overline{H} \in \k/\h$
and $\pi^2 \in S^2(\k/\h)^*$ are $\h$-invariant we say that $M$ admits
an invariant \textbf{carrollian} structure. Analogous to the galilean
case, the fundamental vector field $\overline{H}$ reflects the
absolute space character of carrollian spacetimes.  Notice that
aristotelian spacetimes in Table~\ref{tab:aristotelian} admit
simultaneously a galilean and carrollian structure, since
$\overline{H},\eta,\overline{P}^2,\pi^2$ are all rotationally
invariant.  It follows that they also admit many invariant lorentzian
and riemannian structures.

It is easy to determine the existence of these invariants from the
data in Table~\ref{tab:spacetimes}. After writing down down the
possible rotationally invariant tensors, we only need to check
invariance under $\B$. The action of $\B$ is induced by duality from
its action via the linear isotropy representation on $\g/\h$:
\begin{equation}
  \lambda_{B_a} \overline{H} = \overline{[B_a,H]} \qquad\text{and}\qquad
  \lambda_{B_a} \overline{P}_b = \overline{[B_a, P_b]}
\end{equation}
with the brackets being those of $\k$.  In practice, we can determine
this from the tables, by computing the brackets in $\k$ and simply
dropping any $\B$ or $\J$ from the right-hand side.

The only possible invariants in $\k/\h$ are proportional to
$\overline{H}$, which is invariant provided that $[\B,H] \in \h$.
Dually, the only possible invariants in $(\k/\h)^*$ are proportional
to $\eta$, which is invariant provided that there is no $X \in \k$
such that $H$ appears in $[\B,X]$.

\subsubsection{Parity and time reversal}
\label{sec:parity-time-reversal}

We define a \textbf{parity transformation} on a kinematical Lie pair
$(\k,\h)$ to be an automorphism of $\k$ which changes the sign of $\B$
and $\P$ and leaves $H$ and $\J$ inert.  Similarly, we define a
\textbf{time reversal transformation} to be an automorphism of $\k$
which changes the sign of $\B$ and $H$, but leaves $\P$ and $\J$
inert.  For aristotelian Lie pairs $(\a,\r)$, only $\P$ changes sign under
a parity transformation and only $H$ changes sign under a time
reversal transformation.  The combination of a parity and
time-reversal transformations is then an automorphism of $\k$ (or
$\a$) which changes simultaneously the signs of $\P$ and $H$ and leaves
other generators inert.

It follows from equivariance under rotations, that for $D\neq 1,3$,
every kinematical Lie algebra possesses a parity transformation.  It
is only in $D=1$ and $D=3$ where we can have Lie brackets which
violate parity: in $D=3$ because we have a vector product which is
only invariant under the orientation preserving orthogonal
transformations, and in $D=1$ because there are no rotations.  We saw
that there are no effective Lie pairs for $D=3$ whose kinematical Lie
algebra involves the vector product, so that it is only in $D=1$ where
we can expect to have effective Lie pairs without parity (or time
reversal, since in $D=1$ what is time and space is a matter of
convention) symmetry.

Any spacetime whose canonical connection (see
Section~\ref{sec:invar-conn} below) has torsion cannot be invariant
under $PT$, hence it cannot be invariant under both $P$ and $T$.
Since, as explained above, parity is guaranteed for $D\neq 1$, it is
time reversal invariance which fails for torsional geometries.

It is simply a matter of inspecting the brackets in
Tables~\ref{tab:spacetimes} and \ref{tab:aristotelian} to determine
whether the corresponding spacetimes possess parity and/or time
reversal invariance.  The results are summarised in
Table~\ref{tab:spacetimes-props}.

\subsection{Invariant connections, curvature and torsion}
\label{sec:invar-conn}

As a final geometric property, we discuss the existence of invariant
affine connections and their curvature and torsion. Let $(\k,\h)$ be
an effective Lie pair associated to a homogeneous space. By an
\textbf{invariant (affine) connection} on $(\k,\h)$ we shall mean a
linear map $\Lambda : \k \to \gl(\k/\h)$, denoted
$X \mapsto \Lambda_X$, satisfying the following two properties:
\begin{enumerate}
\item $\left.\Lambda\right|_\h = \lambda$, the linear isotropy
  representation, and
\item $\Lambda$ is $\h$-equivariant.
\end{enumerate}
Notice that $\Lambda$ is not generally a Lie algebra homomorphism;
although its restriction to $\h$ is.  In fact, as we will see, the
obstruction of $\Lambda$ being a Lie algebra homomorphism is the
curvature of the connection.

The equivariance condition for $\Lambda$ says that for all
$X,Y \in \k$ and $Z \in \h$,
\begin{equation}
  0 =
  \lambda_Z \Lambda_X \Ybar -
  \Lambda_X\lambda_Z \Ybar - \Lambda_{[Z,X]}\Ybar ,
\end{equation}
which is the $\h$-invariance of $\Lambda$ as an element in the space
of linear maps $\Hom(\k,\gl(\k/\h))$.  Notice that if either $X$ or
$Y$ are in $\h$, then this is automatically satisfied: it is clear if
$Y \in \h$, since then $\Ybar = 0$, but then also if $X \in \h$,
$\Lambda_X = \lambda_X$ and the invariance condition reads
\begin{equation}
  0 = \lambda_Z \lambda_X \Ybar - \lambda_X
  \lambda_Z \Ybar - \lambda_{[Z,X]} \Ybar ,
\end{equation}
which holds because $\lambda$ is a representation.  So the only
nontrivial condition comes from $X,Y \not\in\h$.

Now choose $\m$ such that $\k = \h \oplus \m$ as vector spaces.  In
the reductive case, we can choose $\m$ such that $[\h,\m] \subset \m$
and hence $\m$ is an $\h$-module.  But even in the non-reductive case,
$\m \cong \k/\h$ as a vector space and there is a unique way to give
$\m$ the structure of an $\h$-module so that this is also an
isomorphism of $\h$-modules.  Let us assume we have done that.

Now let $\Lambda$ be an invariant affine connection.  Since the
restriction of $\Lambda$ to $\h$ is fixed, $\Lambda$ is determined by 
its component mapping $\m \to \gl(\k/\h)$ and, as argued above, it is
only this component that is involved in the invariance condition.  In
the reductive case, this component is an $\h$-equivariant linear map
$\m \to \gl(\m)$ or, equivalently, an $\h$-equivariant bilinear map
$\m \times \m \to \m$, called the \textbf{Nomizu map}.  In a reductive
homogeneous space, we may always take the Nomizu map to be zero and in
this way arrive at a \textbf{canonical invariant connection} (termed ``of the
second kind'' in \cite{MR0059050}).  All invariant connections are
then classified by their Nomizu maps.

In the non-reductive case, it can very well be the case that there are
no invariant connections.  Turning this around, if a given homogeneous
space does not admit any invariant connections, it cannot be
reductive.

Given an invariant affine connection, its \textbf{torsion} is given for all
$X,Y\in\k$,
\begin{equation}
  \Theta(X,Y) = \Lambda_X \Ybar - \Lambda_Y \Xbar - \overline{[X,Y]}~.
\end{equation}
One checks that this only depends on the images $\Xbar,\Ybar$ of $X,Y$
in $\k/\h$, so it defines an $\h$-equivariant skewsymmetric bilinear
map $\Theta : \k/\h \times \k/\h \to \k/\h$.

The \textbf{curvature} $\Omega$ is given, for all $X,Y \in \k$ and $\Zbar\in \k/\h$, by
\begin{equation}
  \Omega(X,Y) \Zbar = [\Lambda_X,\Lambda_Y]\Zbar -
  \Lambda_{[X,Y]} \Zbar,
\end{equation}
from where we see that it measures the failure of
$\Lambda: \k \to \gl(\k/\h)$ to be a Lie algebra homomorphism. The
$\h$-equivariance of $\Lambda$ guarantees that this expression only
depends on $\Xbar,\Ybar$ and hence it defines an $\h$-equivariant
skewsymmetric bilinear map
$\Omega: \k/\h \times \k/\h \to \gl(\k/\h)$. When the curvature
vanishes we say that the canonical connection is \textbf{flat}.

In the reductive case, the torsion and curvature of an invariant
affine connection are given in terms of its Nomizu map $\alpha : \m
\times \m \to \m$ by the following expressions for all $X,Y,Z \in \m$,
\begin{equation}
  \begin{split}
    \Theta(X,Y) &= \alpha(X,Y) - \alpha(Y,X) - [X,Y]_\m,\\
    \Omega(X,Y) Z &= \alpha(X,\alpha(Y,Z)) - \alpha(Y,\alpha(X,Z)) -
    \alpha([X,Y]_\m,Z) - [[X,Y]_\h, Z],
  \end{split}
\end{equation}
where $[X,Y] = [X,Y]_\h + [X,Y]_\m$ is the decomposition of $[X,Y] \in
\k = \h \oplus \m$.  These expressions simplify for the canonical
connection with zero Nomizu map:
\begin{equation}
    \Theta(X,Y) = - [X,Y]_\m \qquad\text{and}\qquad
    \Omega(X,Y) Z = - [[X,Y]_\h, Z].
\end{equation}
If the space is symmetric, so that $[X,Y]\in \h$ for $X,Y \in \m$,
then we see that the canonical connection is torsion-free.  If, on the
contrary, the canonical connection is flat, then $[X,Y] \in \m$
for all $X,Y\in\m$ and hence $\m$ is a Lie subalgebra of $\k$.  In
this case we can identify the homogeneous space (assumed
simply-connected) with the group manifold of the simply-connected Lie
group with Lie algebra $\m$.  If the canonical connection is both
torsion-free and flat, then the Lie algebra $\m$ is abelian and hence
the homogeneous space (if simply-connected) is the group manifold of a
simply-connected abelian Lie group.  A simply-connected abelian Lie
group is a vector space and hence the homogeneous space in that case
is an affine space.

The holonomy of the canonical connection on a reductive homogeneous
space $M = \Kgr/\Hgr$ can be calculated in general. Indeed, by a
result of Nomizu's \cite[§12]{MR0059050}, the Lie algebra of the
holonomy group is isomorphic to the ideal $[\m,\m]_{\h}$ of $\h$
acting on $\m$ via (the restriction of) the linear isotropy
representation. As a corollary, the covariant derivative of a
$\Kgr$-invariant tensor field $T$, say, on $M$ with respect to the
canonical connection vanishes: $\nabla T = 0$. If $M$ is
simply-connected, then parallel-transporting $T$ along any closed loop
leaves it invariant. In particular, the curvature and torsion tensor
fields (of any invariant connection, but in particular of the
canonical connection), being themselves invariant, are parallel, and
so are the tensor fields corresponding to any invariant lorentzian,
riemannian, galilean, carrollian or aristotelian structure that $M$
might possess. In particular the connection is compatible with the
(co\nobreakdash-)metrics. It is thus clear that the galilean
($\hyperlink{S7l}{\zG}$) and carrollian ($\hyperlink{S13l}{\zC}$)
spacetimes reproduce the standard flat Newton--Cartan and Carroll
structures, respectively (cf.~\cite{Duval:2014uoa}).

All spacetimes, except for $\hyperlink{S16l}{\zLC}$ are reductive. It
is then easy to inspect Tables~\ref{tab:spacetimes}
and~\ref{tab:aristotelian} and determine the torsion and curvature of
the canonical invariant connection with zero Nomizu map. The results
are summarised in Section~\ref{sec:invar-homog-spac} and in
Table~\ref{tab:spacetimes-props}. The existence of invariant
connections for the non-reductive spacetime $\hyperlink{S16l}{\zLC}$
has to be studied separately and we do so below. We will see that for
$D\geq 2$ the spacetime admits no invariant connections, whereas for
$D=1$ it admits a three-parameter family of invariant connections and
a unique torsion-free, flat connection. The follow-up paper
\cite{Figueroa-OFarrill:2019sex} presents the classification of invariant
connections and the calculation of their torsion and curvature for the
reductive kinematical and aristotelian spacetimes.

\subsubsection{Invariant connections for spacetime $\zLC$}
\label{sec:invar-conn-flc}

We will show that this homogeneous spacetime does not admit any
invariant connections for $D\geq 2$.  Since $[H,B_a] = B_a$, $[H,P_a]
= - P_a$ and $[B_a,P_b] = \delta_{ab}H + J_{ab}$, we have that
$\lambda_{B_a} \barH = 0$ and $\lambda_{B_a} \barP_b = \delta_{ab}
\barH$.

\paragraph{($D\geq 3$)}

The most general rotationally equivariant map $\Lambda$ is given by
\begin{equation}
  \Lambda_{H} \barH = \alpha \barH \qquad \Lambda_{H} \barP_a = \beta \barP_a \qquad
  \Lambda_{P_a} \barH = \gamma \barP_a \qquad\text{and}\qquad
  \Lambda_{P_a} \barP_b = \mu \delta_{ab} \barH + \nu \epsilon_{abc} \barP_c,
\end{equation}
with the tacit understanding that the term proportional to $\nu$ is
only present if $D=3$.  Invariance demands, in particular, that
\begin{equation}
  0 = \lambda_{B_a} \Lambda_{P_b} \barP_c - \Lambda_{[B_a,P_b]} \barP_c - \Lambda_{P_b}
  \lambda_{B_a} \barP_c,
\end{equation}
which becomes
\begin{equation}
  \begin{split}
    0 &= \nu \epsilon_{abc} \barH - \delta_{ab} \Lambda_{H} \barP_c -
    \Lambda_{J_{ab}} \barP_c - \delta_{ac} \Lambda_{P_b} \barH\\
    &= \nu \epsilon_{abc} \barH - \beta \delta_{ab} \barP_c -
    (\delta_{bc} \barP_a - \delta_{ac} \barP_b) - \gamma \delta_{ac} \barP_b\\
    &= \nu \epsilon_{abc} \barH - \beta \delta_{ab} \barP_c - \delta_{bc} \barP_a
    +  (1-\gamma) \delta_{ac} \barP_b.
  \end{split}
\end{equation}
Taking any $b=c\neq a$, we arrive at a contradiction.  Therefore there are
no invariant connections for $D\geq 3$.

\paragraph{($D=2$)}

Now the most general rotationally equivariant $\Lambda$ is given by
\begin{equation}
  \begin{aligned}\relax
    \Lambda_{H} \barH &= \alpha \barH\\
    \Lambda_{H}\barP_a &= \beta \barP_a + \beta' \epsilon_{ab} \barP_b
  \end{aligned}
  \qquad\qquad
  \begin{aligned}\relax
    \Lambda_{P_a} \barH &= \gamma \barP_a + \gamma' \epsilon_{ab} \barP_b\\
    \Lambda_{P_a} \barP_b &= \mu \delta_{ab} \barH + \mu' \epsilon_{ab} \barH.
  \end{aligned}
\end{equation}
But as before, equivariance requires, in particular, that
\begin{equation}
  0 = \lambda_{B_a} \Lambda_{P_b} \barP_c - \Lambda_{[B_a,P_b]}\barP_c - \Lambda_{P_b}
  \lambda_{B_a} \barP_c,
\end{equation}
which again for any $b=c\neq a$ results in a contradiction.  Therefore there are
no invariant connections for $D=2$ either.

\paragraph{($D=1$)}

Now there are no rotations, and $\Lambda$ is a general linear map:
\begin{equation}
  \begin{aligned}\relax
    \Lambda_{H} \barH &= \alpha \barH + \alpha' \barP\\
    \Lambda_{H}\barP &= \beta \barH + \beta' \barP
  \end{aligned}
  \qquad\qquad
  \begin{aligned}\relax
    \Lambda_{P} \barH &= \gamma \barH + \gamma' \barP\\
    \Lambda_{P} \barP &= \delta \barH + \delta' \barP.
  \end{aligned}
\end{equation}
Invariance under $B$ says that for all $X,Y \in \{H,P\}$,
\begin{equation}
  \lambda_B \Lambda_X \Ybar - \Lambda_{[B,X]} \Ybar - \Lambda_X
  \lambda_B \Ybar = 0.
\end{equation}
Taking $(X,Y) = (H,H),(H,P),(P,H),(P,P)$ in turn we arrive at the
following conditions:
\begin{equation}
  \alpha' = 0 \qquad \beta' = \alpha -1 \qquad \gamma' = \alpha \qquad
  \delta' = \beta + \gamma \qquad\text{and}\qquad \beta' + \gamma' = 0.
\end{equation}
This results in the following three-parameter family of invariant
connections
\begin{equation}
  \begin{aligned}\relax
    \Lambda_{H} \barH &= \tfrac12 \barH\\
    \Lambda_{H}\barP &= \beta \barH - \tfrac12 \barP
  \end{aligned}
  \qquad\qquad
  \begin{aligned}\relax
    \Lambda_{P} \barH &= \gamma \barH + \tfrac12 \barP\\
    \Lambda_{P} \barP &= \delta \barH + (\beta + \gamma) \barP.
  \end{aligned}
\end{equation}
Calculating the torsion and curvature, we find
\begin{equation}
  \Theta(H,P) = (\beta-\gamma) \barH \qquad \Omega(H,P) \barH = (\gamma +
  \tfrac12 \beta) \barH \qquad\text{and}\qquad \Omega(H,P)\barP = (2\delta
  + \beta^2) \barH + (\tfrac12 \beta + \gamma) \barP.
\end{equation}
We see that there is a unique flat and torsion-free connection, given
by
\begin{equation}
  \begin{aligned}\relax
    \Lambda_{H} \barH &= \tfrac12 \barH\\
    \Lambda_{H}\barP &= - \tfrac12 \barP
  \end{aligned}
  \qquad\qquad
  \begin{aligned}\relax
    \Lambda_{P} \barH &= \tfrac12 \barP\\
    \Lambda_{P} \barP &= 0.
  \end{aligned}
\end{equation}

\subsection{Summary of properties of homogeneous spacetimes}
\label{sec:invar-homog-spac}

We now summarise the properties of the simply-connected homogeneous
spacetimes in Tables~\ref{tab:spacetimes} and \ref{tab:aristotelian}.

There is precisely one non-reductive spacetime: $\zLC$
(\hyperlink{S16l}{\flc}), which we identified with the future light
cone in Minkowski spacetime one dimension higher.  It is carrollian,
but for $D>1$ does not admit any invariant connections. We determined
above the invariant connections when $D=1$.

The remaining spacetimes in Tables~\ref{tab:spacetimes} and
\ref{tab:aristotelian} are reductive and we proceed to list them
according to the type of reductive structure they possess.

\subsubsection{Flat symmetric spacetimes}
\label{sec:affine-spacetimes}

These are symmetric spacetimes where the canonical connection is flat ($\Omega=0$).
This means that the homogeneous space is a principal homogeneous space
for the translations.  If simply-connected, then it is an affine space
modelled on the vector space of translations.

\begin{itemize}
\item[(\mink)] Minkowski spacetime
\item[(\euc)] euclidean space
\item[(\gal)] galilean spacetime
\item[(\car)] carrollian spacetime
\item[(\st)] static aristotelian spacetime
\end{itemize}
and the exotic two-dimensional spacetimes: (\xone), (\xtwo),
(\xthree)$_{\chi>0}$ and (\xfour)$_{\chi>0}$.

\subsubsection{Non-flat symmetric spacetimes}
\label{sec:symmetric-non-flat}

\begin{itemize}
\item[(\ds)] de~Sitter spacetime, with curvature
  \begin{equation}
    \label{eq:dS-curv}
    \Omega(H,P_a) = \lambda_{B_a} \qquad\text{and}\qquad \Omega(P_a,P_b) = \lambda_{J_{ab}}.
  \end{equation}
  The notation is such that we interpret $\Omega$ as a two-form with values
  in endomorphisms of the tangent space, which for a homogeneous space
  localises to a linear map $\Omega: \Lambda^2\m \to \gl(\m)$ and we write
  it in terms of the image in $\gl(\m)$ of the linear isotropy
  representation $\lambda: \h \to \gl(\m)$.  For example,
  \begin{equation}
    \Omega(H,P_a)H = \lambda_{B_a}H = [B_a,H] = P_a \qquad\text{and}\qquad
    \Omega(H,P_a)P_b = \lambda_{B_a} P_b = \delta_{ab} H,
  \end{equation}
  et cetera.  Notice that the curvature 2-form of the canonical
  connection of de~Sitter spacetime does not see the radius of
  curvature.  This is because the canonical connection (and hence its
  curvature) is an invariant of the reductive homogeneous space,
  whereas the radius of curvature is an additional structure: namely,
  an invariant lorentzian metric.  The same happens with
  anti~de~Sitter spacetime, the round sphere and hyperbolic space.
  
\item[(\ads)] anti~de~Sitter spacetime, with curvature
  \begin{equation}
    \label{eq:AdS-curv}
    \Omega(H,P_a) = -\lambda_{B_a} \qquad\text{and}\qquad \Omega(P_a,P_b) = -\lambda_{J_{ab}},
  \end{equation}
  which is formally like for de~Sitter spacetime except for an overall
  sign.
  
\item[(\sph)] round sphere, with curvature formally identical to that
  in equation \eqref{eq:dS-curv}, except that of course, the linear
  isotropy representation $\lambda_{B_a}$ differs.

\item[(\hyp)] hyperbolic space, with curvature formally identical to
  that in equation~\eqref{eq:AdS-curv}, except that again the linear
  isotropy representation $\lambda_{B_a}$ differs.

\item[(\dsg)] galilean de~Sitter spacetime, with curvature
  \begin{equation}
    \Omega(H,P_a) = \lambda_{B_a}.
  \end{equation}

\item[(\adsg)] galilean anti~de~Sitter spacetime, with curvature
  \begin{equation}
    \Omega(H,P_a) = -\lambda_{B_a},
  \end{equation}
  which is again a sign off the one for galilean de~Sitter spacetime.
  
\item[(\dsc)] carrollian de~Sitter spacetime, whose curvature is
  formally identical to that in equation~\eqref{eq:dS-curv}, with the
  different action of $\lambda_{B_a}$.

\item[(\adsc)] carrollian anti~de~Sitter spacetime, whose curvature is
  formally identical to that in equation~\eqref{eq:AdS-curv}, with the
  different action of $\lambda_{B_a}$.

\item[(\athree)] aristotelian: $\RR \times \SS^D$ for $\varepsilon=-1$
  and $\RR\times \HH^D$ for $\varepsilon=1$, with curvature given by
  \begin{equation}
    \Omega(P_a,P_b) = -\varepsilon \lambda_{J_{ab}}.
  \end{equation}
\end{itemize}

\subsubsection{Reductive torsional spacetimes}
\label{sec:reduct-non-symm}

The canonical connection of these reductive spacetimes has torsion and
hence they are not symmetric spaces.  If the connection is flat, then
the spacetime is actually a principally homogeneous space for a Lie
group whose Lie algebra is isomorphic to $\m$, which is a Lie algebra
in the flat case.

\begin{itemize}
\item[(\tdsg)] This is $\hyperlink{S9l}{\mathsf{dSG}}_\gamma$ for $\gamma \in (-1,1]$.  The
  torsion and curvature of the canonical connection are given by
  \begin{equation}
    \Theta(H,P_a) = -(1+\gamma) P_a \qquad\text{and}\qquad \Omega(H,P_a) =
    -\gamma \lambda_{B_a}.
  \end{equation}
  We see that it is torsion-free if and only if $\gamma = -1$, which
  corresponds to the symmetric space (\dsg).  It is flat if and only
  if $\gamma = 0$.  It is then a principally homogeneous space for the
  simply-connected solvable Lie group with Lie algebra $[H,P_a] =
  P_a$.
\item[(\tadsg)] This is $\hyperlink{S11l}{\mathsf{AdSG}}_\chi$ for $\chi>0$.  The torsion and
  curvature of the canonical connection are given by
  \begin{equation}
    \Theta(H,P_a) = -2 \chi P_a \qquad\text{and}\qquad \Omega(H,P_a) =
    -(1+\chi^2) \lambda_{B_a},
  \end{equation}
  so that it is never flat, but it is torsion-free if and only if
  $\chi = 0$, corresponding to the symmetric space (\adsg).
\item[(\twodgal)] This is a two-parameter family of three-dimensional galilean
  spacetimes \twodgal$_{\gamma,\chi}$ with $\gamma \in [-1,1)$ and
  $\chi>0$.  The torsion and curvature are given by
  \begin{equation}
    \Theta(H,P_a) = -(1+\gamma) P_a + \chi \epsilon_{ab} P_b
    \qquad\text{and}\qquad
    \Omega(H,P_a) = - \gamma \lambda_{B_a} + \chi \epsilon_{ab} \lambda_{B_b},
  \end{equation}
  which is torsion-free if and only if $\gamma=-1$ and $\chi=0$, which
  corresponds to galilean de~Sitter spacetime (\dsg).  The connection
  is flat if and only if $\gamma = \chi = 0$, which corresponds to
  \tdsg$_{\gamma=0}$.
\item[(\tst)] This is an aristotelian non-symmetric space with torsion
  \begin{equation}
    \Theta(H,P_a) = - P_a
  \end{equation}
  and zero curvature.  It is a principally homogeneous space for the
  simply-connected solvable Lie group with Lie algebra $[H,P_a] = P_a$.
\item[(\twoda)] This is a three-dimensional aristotelian spacetime with torsion
  \begin{equation}
    \Theta(P_a,P_b) = - \epsilon_{ab} H
  \end{equation}
  and zero curvature.  It is a principally homogeneous space for the
  simply-connected Heisenberg Lie group with Lie algebra $[P_a,P_b] =
  \epsilon_{ab} H$.
\end{itemize}

\subsubsection{Summary}
\label{sec:summary-2}

In Table~\ref{tab:spacetimes-props} we summarise the basic properties
of the homogeneous kinematical spacetimes in
Table~\ref{tab:spacetimes} and aristotelian spacetimes in
Table~\ref{tab:aristotelian}. The first column is simply our label in
this paper, the second column specifies the value of $D$, where the
dimension of the spacetime is $D+1$. For the columns labeled ``R'',
``S'' and ``A'' we indicate with a $\checkmark$ when a spacetime is
reductive, symmetric and/or affine, respectively. The columns labelled
``L'', ``E'', ``G'' and ``C'' indicate the kind of invariant
structures the spacetime possesses: lorentzian, riemannian
(``euclidean''), galilean and carrollian,
respectively. Again a $\checkmark$ indicates that the spacetime
possesses that structure. The columns ``P'', ``T'' and ``PT'' indicate
whether the spacetime is invariant under parity, time reversal or
their combination, respectively, with $\checkmark$ signalling when
they are. The columns ``$\Omega$'' and ``$\Theta$'' tell us,
respectively, about the curvature and torsion of the canonical
invariant connection for the reductive spacetimes (that is, all but
S16). A $\neq 0$ indicates the presence of curvature and/or torsion.
Otherwise the connection is flat and/or torsion-free, respectively.
The final column contains any relevant comments, including, when
known, the name of the spacetime.

The table is divided into six sections.  The first four correspond to
lorentzian, euclidean, galilean and carrollian spacetimes.  The
fifth section contains two-dimensional spacetimes with no invariant
structure of these kinds.  The sixth and last section contains the
aristotelian spacetimes.

\begin{table}[h!]
  \centering
  \caption{Properties of simply-connected homogeneous spacetimes}
  \label{tab:spacetimes-props}
  \rowcolors{2}{blue!10}{white}
    \begin{tabular}{l|*{1}{>{$}l<{$}}|*{3}{>{$}l<{$}}|*{4}{>{$}l<{$}}|*{3}{>{$}l<{$}}|*{2}{>{$}c<{$}}|l} \toprule
      \multicolumn{1}{c|}{Label}                 & \multicolumn{1}{c|}{$D$} & R   & S   & A   & L   & E   & G   & C   & P   & T   & PT  & \Omega & \Theta & \multicolumn{1}{c}{Comments}              \\\midrule
      \hyperlink{S1l}{\mink}                     & \geq 1                   & \cm & \cm & \cm & \cm & \tm & \tm & \tm & \cm & \cm & \cm & \zro   & \zro   & Minkowski                                 \\
      \hyperlink{S2l}{\ds}                       & \geq 1                   & \cm & \cm & \tm & \cm & \tm & \tm & \tm & \cm & \cm & \cm & \neq0  & \zro   & de~Sitter                                 \\
      \hyperlink{S3l}{\ads}                      & \geq 1                   & \cm & \cm & \tm & \cm & \tm & \tm & \tm & \cm & \cm & \cm & \neq0  & \zro   & anti~de~Sitter                            \\\midrule
      \hyperlink{S4l}{\euc}                      & \geq 1                   & \cm & \cm & \cm & \tm & \cm & \tm & \tm & \cm & \cm & \cm & \zro   & \zro   & euclidean                                 \\
      \hyperlink{S5l}{\sph}                      & \geq 1                   & \cm & \cm & \tm & \tm & \cm & \tm & \tm & \cm & \cm & \cm & \neq0  & \zro   & sphere                                    \\
      \hyperlink{S6l}{\hyp}                      & \geq 1                   & \cm & \cm & \tm & \tm & \cm & \tm & \tm & \cm & \cm & \cm & \neq0  & \zro   & hyperbolic                                \\\midrule
      \hyperlink{S7l}{\gal}                      & \geq 1                   & \cm & \cm & \cm & \tm & \tm & \cm & \tm & \cm & \cm & \cm & \zro   & \zro   & galilean                                  \\
      \hyperlink{S8l}{\dsg}                      & \geq 1                   & \cm & \cm & \tm & \tm & \tm & \cm & \tm & \cm & \cm & \cm & \neq0  & \zro   & galilean dS = $\ztdSG_{-1}$                \\
      \hyperlink{S9l}{\tdsg$_{\gamma\neq 0}$}    & \geq 1                   & \cm & \tm & \mn & \tm & \tm & \cm & \tm & \cm & \tm & \tm & \neq0  & \neq0  & $\ztdSG_\gamma$, $0\neq \gamma \in (-1,1]$ \\
      \hyperlink{S9l}{\tdsg$_0$}                 & \geq 1                   & \cm & \tm & \mn & \tm & \tm & \cm & \tm & \cm & \tm & \tm & \zro   & \neq0  & $\ztdSG_0$                                 \\
      \hyperlink{S10l}{\adsg}                    & \geq 1                   & \cm & \cm & \tm & \tm & \tm & \cm & \tm & \cm & \cm & \cm & \neq0  & \zro   & galilean AdS = $\ztAdSG_0$                 \\
      \hyperlink{S11l}{\tadsg$_\chi$}            & \geq 1                   & \cm & \tm & \mn & \tm & \tm & \cm & \tm & \cm & \tm & \tm & \neq0  & \neq0  & $\ztAdSG_\chi$, $\chi>0$                   \\
      \hyperlink{S12l}{\twodgal$_{\gamma,\chi}$} & 2                        & \cm & \tm & \mn & \tm & \tm & \cm & \tm & \cm & \tm & \tm & \neq0  & \neq0  & $\gamma\in [-1,1)$, $\chi>0$              \\\midrule
      \hyperlink{S13l}{\car}                     & \geq 1                   & \cm & \cm & \cm & \tm & \tm & \tm & \cm & \cm & \cm & \cm & \zro   & \zro   & carrollian                                \\
      \hyperlink{S14l}{\dsc}                     & \geq 1                   & \cm & \cm & \tm & \tm & \tm & \tm & \cm & \cm & \cm & \cm & \neq0  & \zro   & carrollian dS                             \\
      \hyperlink{S15l}{\adsc}                    & \geq 1                   & \cm & \cm & \tm & \tm & \tm & \tm & \cm & \cm & \cm & \cm & \neq0  & \zro   & carrollian AdS                            \\
      \hyperlink{S16l}{\flc}                     & \geq 1                   & \tm & \mn & \mn & \tm & \tm & \tm & \cm & \cm & \tm & \tm & \mn    & \mn    & carrollian light cone                  \\\midrule
      \hyperlink{S17l}{\xone}                    & 1                        & \cm & \cm & \cm & \tm & \tm & \tm & \tm & \tm & \tm & \cm & \zro   & \zro   &                                           \\
      \hyperlink{S18l}{\xtwo}                    & 1                        & \cm & \cm & \cm & \tm & \tm & \tm & \tm & \tm & \cm & \tm & \zro   & \zro   &                                           \\
      \hyperlink{S19l}{\xthree$_\chi$}           & 1                        & \cm & \cm & \cm & \tm & \tm & \tm & \tm & \tm & \cm & \tm & \zro   & \zro   & $\chi>0$                                  \\
      \hyperlink{S20l}{\xfour$_\chi$}            & 1                        & \cm & \cm & \cm & \tm & \tm & \tm & \tm & \tm & \tm & \cm & \zro   & \zro   & $\chi>0$                                  \\\midrule
      \hyperlink{A21}{\st}                       & \geq 0                   & \cm & \cm & \cm & \cm & \cm & \cm & \cm & \cm & \cm & \cm & \zro   & \zro   & static                                    \\
      \hyperlink{A22}{\tst}                      & \geq 1                   & \cm & \tm & \mn & \cm & \cm & \cm & \cm & \cm & \tm & \tm & \zro   & \neq0  & torsional static                          \\
      \hyperlink{A23p}{\athree$_\varepsilon$}    & \geq 2                   & \cm & \cm & \tm & \cm & \cm & \cm & \cm & \cm & \cm & \cm & \neq0  & \zro   & $\varepsilon= \pm 1$                      \\
      \hyperlink{A24}{\twoda}                    & 2                        & \cm & \tm & \mn & \cm & \cm & \cm & \cm & \cm & \tm & \tm & \zro   & \neq0  &                                           \\ \bottomrule
    \end{tabular}
                                                                                                                                                                                                      \\[10pt]
    \caption*{This table describes if a spacetime of dimension $D+1$ is
      reductive (R), symmetric (S) or affine (A). A spacetime might
      exhibit a lorentzian (L), riemannian (E), galilean (G) or
      carrollian (C) structure, and be invariant under parity (P),
      time reversal (T) or their combination (PT).  Furthermore the
      canonical connection might be have curvature ($\Omega$) and/or
      torsion ($\Theta$). For the precise definitions of these
      properties see Sections~\ref{sec:basic-notions} and
      \ref{sec:invar-conn}.}
\end{table}

In particular, we see how all the spacetimes in
Figure~\ref{fig:state-of-prior-art} are indeed symmetric: with \mink\
($\hyperlink{S1l}{\mathbb{M}}$), \ds\ ($\hyperlink{S2l}{\mathsf{dS}}$)
and \ads\ ($\hyperlink{S3l}{\mathsf{AdS}}$) lorentzian; \euc\
($\hyperlink{S4l}{\mathbb{E}}$), \sph\ ($\hyperlink{S5l}{\mathbb{S}}$)
and \hyp\ ($\hyperlink{S6l}{\mathbb{H}}$) riemannian; \gal\
($\hyperlink{S7l}{\mathsf{G}}$), \dsg\
($\hyperlink{S8l}{\mathsf{dSG}}$) and \adsg\
($\hyperlink{S10l}{\mathsf{AdSG}}$) galilean; and \car\
($\hyperlink{S13l}{\mathsf{C}}$), \dsc\
($\hyperlink{S14l}{\mathsf{dSC}}$) and \adsc\
($\hyperlink{S15l}{\mathsf{AdSC}}$) carrollian. It is clear from a
dimension count that there can be no other lorentzian or riemannian
kinematical spacetimes than the ones in
Figure~\ref{fig:state-of-prior-art}. The dimension of the kinematical
group associated to a ($D+1$)-dimensional homogeneous spacetime is
given by $(D+1)(D+2)/2$, which is also the maximal dimension of the
isometry group of a ($D+1$)-dimensional pseudo-riemannian manifold, so
the homogeneous lorentzian and riemannian homogeneous spaces are
necessarily maximally symmetric. The perhaps remarkable fact is that
for $D\geq 2$, every homogeneous (kinematical) spacetime which is not
lorentzian or riemannian is either galilean, carrollian or
aristotelian. The new spacetimes in Figures~\ref{fig:generic-d-graph}
and \ref{fig:d=3-graph}, not already present in
Figure~\ref{fig:state-of-prior-art}, are therefore necessarily
galilean, carrollian or aristotelian.  We see that the class of
galilean spacetimes is particularly rich: admitting two
one-dimensional continua of such spacetimes: one, denoted
$\ztAdSG_\chi$ (\hyperlink{S11l}{\tadsg$_\chi$}), which extends
the galilean anti~de~Sitter spacetime $\zAdSG$
(\hyperlink{S10l}{\adsg}) and a second, denoted $\ztdSG_\gamma$
(\hyperlink{S10l}{\tdsg$_\gamma$}), which extends the galilean de~Sitter spacetime $\zdSG$
(\hyperlink{S8l}{\dsg}).  The carrollian spacetimes are all
realisable as null hypersurfaces in either Minkowski or (anti)
de~Sitter spacetimes one dimension higher.  In particular, the
carrollian spacetime $\zLC$ (\hyperlink{S16l}{\flc})
can be realised as the future light cone in Minkowski space one
dimension higher.  There are also several aristotelian spacetimes.
The situation in $D=2$ is even richer, with a  two-dimensional
continuum interpolating between the one-dimensional continua present
in all $D\geq 1$. Finally, in $D=1$ there are exotic (i.e., without
any discernible invariant structures) homogeneous spacetimes,
including two one-dimensional continua.

\section{Conclusions}
\label{sec:conclusions}

In this paper we have classified isomorphism classes of
simply-connected homogeneous spacetimes of kinematical and
aristotelian Lie groups with $D$-dimensional space isotropy for all
$D\geq 0$.  We have done this by classifying the corresponding
infinitesimal algebraic objects, namely (geometrically realisable,
effective) Lie pairs.  A number of observations follow from the
classification.

It follows from our classification (see, e.g.,
Table~\ref{tab:LAs-to-spacetimes}) that inequivalent spacetimes
may have the same transitive kinematical Lie algebra, which might have
interesting consequences (e.g., as in AdS/CFT).  Conversely,
non-isomorphic kinematical Lie algebras may have isomorphic
spacetimes.  For example, the para-galilean and static
kinematical Lie algebras lead to the same homogeneous spacetime: the
static aristotelian spacetime (\hyperlink{A21}{\st}).

The classification yields novel (at least to us) spacetimes:
particularly, the family of torsional galilean (anti) de~Sitter
spacetimes (\hyperlink{S9l}{\tdsg$_\gamma$} and
\hyperlink{S11l}{\tadsg$_\chi$}) and the torsional static aristotelian
spacetime (\hyperlink{A22}{\tst}), as well as the new families of two-
and three-dimensional spacetimes:
\hyperlink{S12l}{\twodgal$_{\gamma,\chi}$},
\hyperlink{S19l}{\xthree$_{\chi}$} and
\hyperlink{S20l}{\xfour$_{\chi}$}.
These novel spacetimes can be distinguished from the known ones (see
Figure~\ref{fig:state-of-prior-art}) in one of several (equivalent)
ways:
\begin{itemize}
\item they are not symmetric homogeneous spaces;
\item they are not invariant under both parity and time-reversal (at
  least for $D\geq 2$); and
\item they do not arise as limits of the maximally symmetric
  riemannian and lorentzian spaces.
\end{itemize}
In particular, this last characterisation allows us to see their
existence as a purely non-relativistic prediction.  Conversely, one can
ask whether there is a relativistic set-up that leads to these
spacetimes via limits.  None of these characterisations is compelling
reason to ignore the novel spacetimes.

We observed that not all aristotelian spacetimes arise from
kinematical Lie algebras and this motivated us to present the separate
classification of aristotelian Lie algebras in
Appendix~\ref{sec:class-arist-lie}.

We also explored some of the geometrical properties of the
spacetimes, particularly those which can be easily read from the
infinitesimal description: namely the existence of invariant
(pseudo\nobreakdash-)riemannian, galilean, carrollian and aristotelian
structures.  In the reductive cases, which are the vast
majority, we have paid particular attention to the torsion and
curvature of the canonical connection, as this provides an
identifiable invariant for the spacetime in question.

The main results are contained in Tables~\ref{tab:spacetimes},
\ref{tab:aristotelian} and \ref{tab:spacetimes-props} and their
interrelations are conveniently summarised in
Figures~\ref{fig:generic-d-graph}, \ref{fig:d=3-graph} and
\ref{fig:d=2-graph}.  In this paper we have restricted ourselves to
the classification of the simply-connected homogeneous spacetimes,
without paying very close attention to each of the geometries.  This
will be remedied in a follow-up paper \cite{Figueroa-OFarrill:2019sex}, where we
will revisit the classification and investigate the local geometry of
the spacetimes.

\section*{Acknowledgments}

This work started during a visit of SP to Edinburgh and continued
during a visit of JMF to Brussels. It is our pleasure to thank the
corresponding institutions for hospitality and support. Additional
work was done during our participation at the MITP Topical Workshop
``Applied Newton--Cartan Geometry'' (APPNC2018), held at the Mainz
Institute for Theoretical Physics, where we presented a preliminary
version of our results. We are grateful to the MITP for their support
and hospitality and for providing such a stimulating research
atmosphere. We are particularly grateful to Eric Bergshoeff and Niels
Obers for the invitation to participate. JMF would like to thank
Andrea Santi for the invitation to visit Bologna and the opportunity
to talk about this work. SP is grateful to Eric Bergshoeff, Daniel
Grumiller, Joaquim Gomis, Marc Henneaux, Axel Kleinschmidt, Victor
Lekeu, Javier Matulich, Arash Ranjbar, Jan Rosseel, Jakob Salzer,
Friedrich Schöller, and David Tong for useful discussions. The
research of JMF is partially supported by the grant ST/L000458/1
``Particle Theory at the Higgs Centre'' from the UK Science and
Technology Facilities Council. The research of SP is partially
supported by the ERC Advanced Grant ``High-Spin-Grav" and by
FNRS-Belgium (convention FRFC PDR T.1025.14 and convention IISN
4.4503.15).

JMF would like to dedicate this paper to the memory of his friend
Andrew Ranicki, the late Professor of Algebraic Surgery at the
University of Edinburgh.  Andrew was a great practitioner in the art
of turning topological/geometrical problems into algebra, which is the
philosophy we have tried to follow in arriving at the results
described in this paper.

\appendix

\section{Classification of aristotelian Lie algebras}
\label{sec:class-arist-lie}

In this appendix we present the classification of aristotelian Lie
algebras.  In complete analogy with the definition of a kinematical
Lie algebra, we have the following.

\begin{definition}
  A real Lie algebra $\a$ is said to be \textbf{aristotelian} (with
  $D$-dimensional space isotropy) if it satisfies two properties:
  \begin{enumerate}
  \item $\a$ contains a Lie subalgebra $\r \cong \so(D)$, and
  \item $\a$ decomposes as $\a = \r \oplus V \oplus S$ under $\r$,
  \end{enumerate}
  where now we only have one copy of the vector representation of $\so(D)$.
\end{definition}

We can choose a basis $(J_{ab}, P_a, H)$ for $\a$, relative to which the
Lie brackets include the following:
\begin{equation}
  \label{eq:aristotelian-LA}
  \begin{split}
    [J_{ab}, J_{cd}] &= \delta_{bc} J_{ad} - \delta_{ac} J_{bd} - \delta_{bd} J_{ac} + \delta_{ad} J_{bc}\\
    [J_{ab}, P_c] &= \delta_{bc} P_a - \delta_{ac} P_b\\
    [J_{ab}, H ] &= 0,
  \end{split}
\end{equation}
and any other Lie brackets are subject only to the Jacobi identity,
which implies, in particular, equivariance under $\r$.  Every
aristotelian Lie algebra $\a$ gives rise to a unique aristotelian
homogeneous spacetime with effective Lie pair $(\a,\r)$.  Therefore
classifying aristotelian Lie algebras up to isomorphism also
classifies the aristotelian spacetimes.  Aristotelian Lie algebras all
share the Lie brackets \eqref{eq:aristotelian-LA} and are thus
distinguished by the $[H,P_a]$ and $[P_a,P_b]$ brackets, which are
only constrained by the Jacobi identity.

Many aristotelian Lie algebras arise as quotients of kinematical Lie
algebras by a vectorial ideal.  Indeed, we have seen in
Section~\ref{sec:class-kinem-spac} that kinematical Lie algebras
giving rise to non-effective Lie pairs always reduce to an 
aristotelian Lie algebra after quotienting by the ideal generated by
the boosts.  However, as we shall see below, not all aristotelian Lie
algebras arise in this way.

We now proceed to classify aristotelian Lie algebras, starting with
those in $D>3$ and making our way down in dimension.

Let $D>3$.  Equivariance under $\r \cong \so(D)$ forces
\begin{equation}
  [H,P_a] = \alpha P_a \qquad\text{and}\qquad [P_a,P_b] = \beta J_{ab},
\end{equation}
for some $\alpha,\beta \in \RR$.  The Jacobi identity says that
$\alpha\beta = 0$, giving rise to four isomorphism classes of
aristotelian Lie algebras:
\begin{enumerate}[label=(A\arabic*),start=1]
\item the static aristotelian Lie algebra ($\alpha = \beta = 0$);
\item $[H,P_a] = P_a$ and $[P_a,P_b] = 0$ ($\alpha \neq 0$, $\beta = 0$); and
\item $[H,P_a] = 0$ and $[P_a,P_b] = \varepsilon J_{ab}$, with
  $\varepsilon = \pm 1$ ($\alpha = 0$, $\beta \neq 0$).
\end{enumerate}

In $D=3$ equivariance under $\r \cong \so(3)$ allows a further term
\begin{equation}
  [H,P_a] = \alpha P_a \qquad\text{and}\qquad [P_a,P_b] = \beta J_{ab}
  + \gamma \epsilon_{abc} P_c,
\end{equation}
for some $\alpha,\beta,\gamma \in \RR$.  The Jacobi identity now says
that $\alpha \beta = \alpha \gamma = 0$.  But in $D=3$ we can change
basis to $P_a \mapsto P'_a = P_a + \lambda \epsilon_{abc} J_{bc}$ for
some $\lambda \in \RR$, apart from an overall scale.  Choosing
$\lambda = \tfrac14 \gamma$, we can assume that $\gamma = 0$ without
loss of generality.  In terms of the new basis, we are back to the
case $D>3$ with the same results:
\begin{enumerate}[label=(A\arabic*),start=1]
\item the static aristotelian Lie algebra ($\alpha = 0$, $\beta =
  \tfrac14 \gamma^2$);
\item $[H,P_a] = P_a$ and $[P_a,P_b] = 0$ ($\alpha \neq 0$, $\beta =
  \gamma = 0$); and
\item $[H,P_a] = 0$ and $[P_a,P_b] = \varepsilon J_{ab}$, with
  $\varepsilon = \pm 1$ ($\alpha = 0$, $\beta \neq \tfrac14 \gamma^2$).
\end{enumerate}
It is only for $D=3$ that the aristotelian Lie algebra A3 arises by
reduction from a kinematical Lie algebra. Indeed, since $J_{ab}$ can
be written as a Lie bracket of translations and since boosts transform
nontrivially under rotations, boosts cannot commute with translations
in the kinematical Lie algebra.  Since the boosts define an ideal,
there would have to be a nonzero bracket $[\B,\P] = \B$, whose
existence requires a nontrivial vector product invariant under
rotations, and this only exists in $D=3$.

Let $D=2$.  Now $\r \cong \so(2)$ is abelian, so equivariance under
$\r$ implies
\begin{equation}
  [H,P_a] = \alpha P_a + \delta \epsilon_{ab} P_b
  \qquad\text{and}\qquad [P_a,P_b] = \epsilon_{ab}(\beta J + \gamma H),
\end{equation}
where we have defined $J$ via $J_{ab} = -\epsilon_{ab} J$.  But now we
can change basis to $H \mapsto H' = H - \lambda J$ for some $\lambda
\in \RR$, apart from an overall scale.  Choosing $\lambda = \delta$,
we can assume with no loss of generality that $\delta = 0$.  In
general, the Jacobi identity says that $\alpha\beta = \alpha \gamma =
0$.  There is now an additional aristotelian Lie algebra:
\begin{enumerate}[label=(A\arabic*),start=1]
\item the static aristotelian Lie algebra ($\alpha = \beta = \gamma =
  0$);
\item $[H,P_a] = P_a$ and $[P_a,P_b] = 0$ ($\alpha \neq 0$, $\beta =
  \gamma = 0$);
\item $[H,P_a] = 0$ and $[P_a,P_b] = \varepsilon J_{ab}$, with
  $\varepsilon = \pm 1$ ($\alpha = 0$, $\beta \neq \gamma\delta$); and
\item $[P_a, P_b] = \epsilon_{ab} H$ ($\alpha = 0$, $\beta =
  \gamma\delta$, $\gamma \neq 0$).
\end{enumerate}
Here, because of the possibility of redefining $H$ and $J$, the
aristotelian Lie algebra A3 can arise by reduction of a kinematical
Lie algebra.

Let $D=1$.  Here there are no rotations, so any
two-dimensional Lie algebra is aristotelian.  Up to isomorphism there
are precisely two such Lie algebras:
\begin{enumerate}[label=(A\arabic*),start=1]
\item the static aristotelian Lie algebra; and
\item $[H,P] = P$.
\end{enumerate}

Finally, in $D=0$, there is only the one-dimensional Lie algebra
spanned by $H$, which is the $D=0$ avatar of the static aristotelian
Lie algebra A1.

In summary, the isomorphism classes of aristotelian Lie algebras with
$D$-dimensional space isotropy are recorded in
Table~\ref{tab:ALAs}.

\begin{table}[h!]
  \centering
  \caption{Aristotelian Lie algebras}
  \label{tab:ALAs}
  \rowcolors{2}{blue!10}{white}
  \begin{tabular}{l|>{$}c<{$}|*{2}{>{$}l<{$}}|l}\toprule
    \multicolumn{1}{c|}{A\#} & D & \multicolumn{2}{c|}{Nonzero Lie brackets in addition to $[\J,\J] = \J $, $[\J,\P] = \P$}& \multicolumn{1}{c}{Comments}\\\midrule
    \hypertarget{ALA1}{1} & \geq 0 & & & static \\
    \hypertarget{ALA2}{2} & \geq 1 & [H,P_a] = P_a & &\\
    \hypertarget{ALA3}{3$_\varepsilon$} & \geq 2 & & [P_a,P_b] = \varepsilon J_{ab} & $\varepsilon = \pm 1$\\
    \hypertarget{ALA4}{4} & 2 & & [P_a,P_b] = \epsilon_{ab} H & \\\bottomrule
  \end{tabular}
\end{table}

\section{Infinitesimal description of homogeneous spaces}
\label{sec:infin-descr-homog}

For psychological reasons, in this appendix $\g$ will denote a
finite-dimensional real Lie algebra and not the galilean Lie algebra
as in the bulk of the paper.

In this appendix we will prove that the classification of
simply-connected homogeneous spaces (up to isomorphism) is equivalent
to the classification of isomorphism classes of geometrically
realisable, effective Lie pairs.  This statement is the analogue in
the homogeneous space setting of the well-known fact that associated
to every finite-dimensional real Lie algebra $\g$ there exists a
unique (up to isomorphism) simply-connected Lie group whose Lie
algebra is isomorphic to $\g$.  Indeed, a Lie group is a principally
homogeneous space over itself, so that the group/algebra statement is
a special case of the general statement about homogeneous spaces. The
crucial difference is that given a Lie pair the corresponding
simply-connected homogeneous space need not exist nor be unique,
unless we impose additional conditions on the Lie pair: effective (for
uniqueness) and geometrically realisable (for existence).  It is
surprisingly difficult to find this more general statement in the
literature, but it is certainly standard and one can piece it together
from results in \cite{MR1631937}.

\subsection{Transitive actions of Lie groups}
\label{sec:trans-acti-lie}

Let $M$ be a connected smooth manifold.  By an \textbf{action} of a
Lie group $\Ggr$ on $M$, we mean a smooth map
$\alpha : \Ggr \times M \to M$ satisfying axioms which are easier to
state after we introduce the following notation.  If $g \in \Ggr$ and
$m \in M$, we will write $\alpha(g,m)$ as $g \cdot m$.  Then $\alpha$ is
an action if, for $e \in \Ggr$ the identity element, $e \cdot m = m$
for all $m \in M$, and if $g_1\cdot (g_2 \cdot m) = (g_1 g_2) \cdot
m$ for all $g_1,g_2 \in \Ggr$ and $m\in M$.

Let $\Ggr$ act on $M$ and let $\g$ be the Lie algebra of $\Ggr$.  The
action induces a Lie algebra (anti)homomorphism $\xi: \g \to \eX(M)$
from $\g$ to the Lie algebra of vector fields on $M$ sending every $X
\in \g$ to the \textbf{fundamental vector field} $\xi_X$ on $M$.  If
$m \in M$, then $\xi_X(m)$ is the velocity of the curve $\exp(t X)
\cdot m$ at $t=0$.

Let $\Ggr$ act on $M$ and let
$\Ngr = \left\{ g \in \Ggr \, \middle| \,  g \cdot m = m,~\forall m \in
  M\right\}$ denote the kernel of the action.  The action of $\Ggr$ on
$M$ is said to be \textbf{effective} if $\Ngr = \{e\}$ and it is said
to be \textbf{locally effective} if $\Ngr$ is a discrete group.  This
is equivalent to the map $\xi : \g \to \eX(M)$ being injective.  In
any case, $\Ngr$ is a normal subgroup of $\Ggr$ and the $\Ggr$ action
on $M$ induces an effective action of $\Ggr/\Ngr$.  Nevertheless,
although assuming that the action is effective seems to represent no
loss of generality, we will allow for locally effective actions.

Let $\Ggr$ act on $M$.  If given any two points $m_1,m_2 \in M$, there is
some $g \in \Ggr$ such that $m_2 = g \cdot m_1$, we say that the
action is \textbf{transitive}.  If the action is both transitive and
locally effective, then $M$ is said to be a \textbf{homogeneous space}
of $\Ggr$.  We will assume that $\Ggr$ is connected.  This represents
no loss of generality because if $\Ggr$ acts transitively on a
connected manifold $M$, so does the connected component of the
identity of $\Ggr$.

\begin{definition}
  Let $M$ and $M'$ be homogeneous spaces of $\Ggr$ and $\Ggr'$,
  respectively.  We say that $M$ and $M'$ are \textbf{isomorphic} if
  there is a diffeomorphism $f: M \to M'$ and a Lie group isomorphism $\Phi:
  \Ggr \to \Ggr'$ such that $f( g \cdot m) = \Phi(g) \cdot f(m)$ for
  all $m \in M$ and $g \in \Ggr$.
\end{definition}

Let $\Ggr$ be a connected Lie group acting transitively on $M$ and let
$m \in M$.  The set $\Ggr_m = \left\{ g \in \Ggr \, \middle | \, g \cdot m =
  m \right\}$ is a closed Lie subgroup of $\Ggr$ called the
\textbf{stabiliser} of $m$.  It need not be a connected subgroup.
Pick an ``origin'' $o \in M$ and let $\Hgr = \Ggr_o$.  Then $M$ is
diffeomorphic to the space $\Ggr/\Hgr$ of right $\Hgr$-cosets.  The
diffeomorphism $\iota: M \to \Ggr/\Hgr$ is such that $\iota(o) =
e\Hgr$ and if $m = g \cdot o$ then $\iota(m) = g\Hgr$, which is
well-defined because $\Hgr$ is the stabiliser of $o$.  It follows that
$\iota$ is $\Ggr$-equivariant: $\iota(g \cdot m) = g \iota(m)$ for all
$g \in \Ggr$ and $m \in M$.  In the language of the previous
definition, the homogeneous spaces $M$ and $\Ggr/\Hgr$ of $\Ggr$ are
isomorphic.  We say that $\Ggr/\Hgr$ is a \textbf{coset model} for
$M$.

If we change the origin, we get a different (but isomorphic) coset
model.  Indeed, let $o' \in M$ have stabiliser $\Hgr'$.  Then if $o' =
g \cdot o$, $\Hgr' = g \Hgr g^{-1}$ and in the language of the above
definition, $\Phi : \Ggr \to \Ggr$ is the inner automorphism
corresponding to conjugation by $g$ and $f : \Ggr/\Hgr \to \Ggr/\Hgr'$
is such that $f(k \Hgr) = g k g^{-1} \Hgr'$.

\subsection{Lie pairs}
\label{sec:lie-pairs}

Every coset space $\Ggr/\Hgr$ has a corresponding \textbf{Lie pair}
$(\g,\h)$, where $\g$ is the Lie algebra of $\Ggr$
and $\h$ is the Lie algebra of $\Hgr$, and so there is a way to assign
a Lie pair to a homogeneous space $M$ of $\Ggr$ \emph{and} a choice of
origin.  A different choice of origin results in a different Lie pair,
but how are they related?  Let $o,o'\in M$ be two choices of origin
with stabilisers $\Hgr$ and $\Hgr' = g \Hgr g^{-1}$, where $g \cdot o
= o'$.  Then the resulting Lie pairs are $(\g,\h)$ and $(\g,\h')$,
where $\h'$ is the Lie algebra of $\Hgr'$.  Let $\Ad_g : \g \to \g$ be
the inner automorphism of $\g$ induced by conjugation by $g$ in
$\Ggr$.  Then  $\h' = \Ad_g \h$, so the Lie pairs are related by an
inner automorphism.  This motivates the following definition.

\begin{definition}\label{def:iso-lie-pair}
  Two Lie pairs $(\g_1,\h_1)$ and $(\g_2,\h_2)$ are said to be
  \textbf{isomorphic} if there is a Lie algebra isomorphism $\varphi:
  \g_1 \to \g_2$ with $\varphi(\h_1) = \h_2$.
\end{definition}

This notion of isomorphism is stronger than what is needed in order to
classify homogeneous spaces, but it is easier to implement
algebraically.  In this Appendix we will show that it corresponds to
classifying homogeneous spaces up to coverings.  Equivalently, we will
see that to each isomorphism class of (certain) Lie pairs there
corresponds a unique simply connected homogeneous space (up to
isomorphism).

\begin{lemma}
  Let $M$ and $M'$ be homogeneous spaces of $\Ggr$ and $\Ggr'$, respectively.  If
  $M$ and $M'$ are isomorphic, then so are any Lie pairs associated to
  $M$ and $M'$.
\end{lemma}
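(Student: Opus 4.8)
The plan is to unwind the definitions of isomorphism of homogeneous spaces (Definition in Section~\ref{sec:trans-acti-lie}) and of isomorphism of Lie pairs (Definition~\ref{def:iso-lie-pair}), and to observe that the data exhibiting the former already produces the data exhibiting the latter by differentiation. Concretely, suppose $M$ and $M'$ are isomorphic, so there is a diffeomorphism $f: M \to M'$ and a Lie group isomorphism $\Phi: \Ggr \to \Ggr'$ with $f(g\cdot m) = \Phi(g)\cdot f(m)$ for all $g\in\Ggr$, $m\in M$. Choose an origin $o\in M$ with stabiliser $\Hgr = \Ggr_o$, giving the Lie pair $(\g,\h)$. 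Set $o' = f(o)\in M'$; this is a valid choice of origin for $M'$, with stabiliser $\Hgr' = \Ggr'_{o'}$ and associated Lie pair $(\g',\h')$.

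First I would check that $\Phi(\Hgr) = \Hgr'$. If $g\in\Hgr$ then $g\cdot o = o$, so $f(o) = f(g\cdot o) = \Phi(g)\cdot f(o)$, i.e.\ $\Phi(g)\in\Ggr'_{o'} = \Hgr'$; thus $\Phi(\Hgr)\subseteq\Hgr'$. The reverse inclusion follows by the same argument applied to $\Phi^{-1}$ and $f^{-1}$ (which exhibits the inverse isomorphism of homogeneous spaces, since $f^{-1}(g'\cdot m') = \Phi^{-1}(g')\cdot f^{-1}(m')$), so $\Phi(\Hgr) = \Hgr'$. Next I would pass to Lie algebras: let $\varphi := d\Phi_e : \g \to \g'$ be the derivative of $\Phi$ at the identity. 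Since $\Phi$ is a Lie group isomorphism, $\varphi$ is a Lie algebra isomorphism. Because $\Phi$ restricts to a Lie group isomorphism $\Hgr \to \Hgr'$, its derivative $\varphi$ restricts to a linear isomorphism $\h \to \h'$; that is, $\varphi(\h) = \h'$. Hence $(\g,\h)$ and $(\g',\h')$ are isomorphic as Lie pairs.

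It remains to note that this conclusion is independent of the choices of origin, which is exactly the content of the paragraph preceding Definition~\ref{def:iso-lie-pair}: changing the origin of a homogeneous space changes its associated Lie pair only by an inner automorphism (via $\Ad_g$), so ``any Lie pairs associated to $M$ and $M'$'' are all mutually isomorphic, and composing with inner automorphisms on either side preserves the isomorphism just constructed. I do not expect any serious obstacle here: the statement is essentially a functoriality remark, and the only points requiring a line of justification are that $\Phi$ carries the stabiliser of $o$ to the stabiliser of $f(o)$ and that differentiation is functorial and respects subgroups. The slight subtlety worth making explicit is that one should fix compatible origins ($o' = f(o)$) to get the cleanest statement, and then invoke the change-of-origin discussion to handle arbitrary choices.
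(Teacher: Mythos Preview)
Your proposal is correct and follows essentially the same approach as the paper: choose compatible origins $o' = f(o)$, use equivariance to show $\Phi(\Hgr) = \Hgr'$, and differentiate to conclude $\varphi(\h) = \h'$, invoking the change-of-origin discussion to handle arbitrary Lie pairs. The only cosmetic difference is that the paper establishes the reverse inclusion $\Hgr' \subseteq \Phi(\Hgr)$ by pulling back along $\Phi$ and using injectivity of $f$, whereas you apply the forward argument to the inverse isomorphism $(\Phi^{-1}, f^{-1})$; these are equivalent.
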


\begin{proof}
  It is enough to show that any Lie pair associated to $M$ is
  isomorphic to at least one Lie pair associated to $M'$, since as we
  have seen above all Lie pairs associated to a homogeneous space are
  isomorphic (by an inner automorphism).  Since $M$ and $M'$ are
  isomorphic homogeneous spaces, we have an isomorphism of Lie groups
  $\Phi: \Ggr \to \Ggr'$ and a diffeomorphism $f: M \to M'$ obeying
  the equivariance property $f(g \cdot m) = \Phi(g) \cdot f(m)$ for
  all $m \in M$ and $g \in \Ggr$.  We will show that the Lie algebra isomorphism
  $\varphi: \g \to \g'$ induced by $\Phi$ is the desired isomorphism
  between the Lie pairs.

  So choose an origin $o \in M$ with stabiliser $\Hgr \subset \Ggr$,
  leading to the Lie pair $(\g,\h)$ and let $o' = f(o) \in M'$ have
  stabiliser $\Hgr'$, leading to the Lie pair $(\g',\h')$.  It follows
  from the equivariance property that if $g \in \Hgr$, then $\Phi(g)
  \in \Hgr'$:
  \begin{equation}
    \Phi(g) \cdot o' = \Phi(g) \cdot f(o) = f(g \cdot o) = f(o) = o'.
  \end{equation}
  But if $g' \in \Hgr'$, then the unique  $g \in
  \Ggr$ such that $g' = \Phi(g)$ lies in $\Hgr$:
  \begin{equation}
    f(o) = o' = g' \cdot o' = \Phi(g) \cdot f(o) = f(g \cdot o),
  \end{equation}
  but since $f$ is one-to-one, $g \cdot o = o$.  Therefore $\Phi(\Hgr)
  = \Hgr'$ and the Lie algebra isomorphism $\varphi: \g \to \g'$
  induced by $\Phi$ sends $\h$ isomorphically to $\h'$.
\end{proof}

It turns out that not all Lie pairs come from homogeneous spaces.

\begin{definition}
  A Lie pair $(\g,\h)$ is said to be \textbf{effective} if $\h$ does
  not contain a nonzero ideal of $\g$.
\end{definition}

It follows from this definition that if two Lie pairs are isomorphic
and one is effective, so is the other.  The following lemma justifies
the definition.

\begin{lemma}
  Let $M = \Ggr/\Hgr$ be a coset space with Lie pair $(\g,\h)$.  Then
  $(\g,\h)$ is effective if and only if the action of $\Ggr$ on $M$ is
  locally effective.
\end{lemma}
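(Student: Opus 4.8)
The plan is to unwind the definitions on both sides and show the two conditions coincide. Let $M = \Ggr/\Hgr$ with Lie pair $(\g,\h)$, and recall that the action of $\Ggr$ on $M$ is locally effective precisely when its kernel $\Ngr = \left\{ g \in \Ggr \, \middle| \, g \cdot m = m,~\forall m \in M\right\}$ is discrete, equivalently when the Lie algebra of $\Ngr$ is zero. So the first step is to identify $\operatorname{Lie}(\Ngr)$ algebraically: I claim it is the largest ideal of $\g$ contained in $\h$. Indeed, $\Ngr$ is a normal subgroup of $\Ggr$ contained in $\Hgr$ (it fixes the origin $o$, so $\Ngr \subset \Ggr_o = \Hgr$), and conversely any connected normal subgroup of $\Ggr$ contained in $\Hgr$ fixes every point $g\cdot o$ by normality, hence lies in $\Ngr$. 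Passing to Lie algebras, $\operatorname{Lie}(\Ngr)$ is then an ideal of $\g$ contained in $\h$, and it contains every ideal of $\g$ contained in $\h$ (such an ideal exponentiates, since $\Ggr$ is connected, to a connected normal subgroup of $\Ggr$ lying in $\Hgr$, hence in $\Ngr$). Therefore $\operatorname{Lie}(\Ngr)$ is the maximal such ideal.

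Granting that identification, the lemma follows immediately: the action is locally effective iff $\operatorname{Lie}(\Ngr) = 0$ iff the maximal ideal of $\g$ contained in $\h$ is zero iff $\h$ contains no nonzero ideal of $\g$, which is exactly the definition of $(\g,\h)$ being effective.

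The main point requiring care — and the step I expect to be the real obstacle — is the claim that an ideal $\n \subset \h$ of $\g$ exponentiates inside $\Ggr$ to a subgroup actually contained in $\Hgr$. The connected subgroup generated by $\n$ need not a priori be closed, but one does not need closedness here: the stabiliser $\Hgr$ is closed, and a connected subgroup whose Lie algebra sits inside $\operatorname{Lie}(\Hgr) = \h$ is automatically contained in $\Hgr$ (generated by one-parameter subgroups $\exp(tX)$ with $X \in \h$, each of which lies in $\Hgr$). Normality of this subgroup in $\Ggr$ follows because $\n$ is an ideal, i.e.\ $\operatorname{Ad}$-invariant under a generating set of $\Ggr$, and $\Ggr$ is connected. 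One then has to check that the fundamental vector fields $\xi_X$ for $X \in \n$ vanish identically on $M$: since $\xi_X(o) = 0$ for $X \in \h \supset \n$ and the flow of $\xi_X$ is by elements of a normal subgroup fixing $o$, it fixes every point, so $\xi_X \equiv 0$; this shows $\n \subset \operatorname{Lie}(\Ngr)$. The reverse containment — that $\operatorname{Lie}(\Ngr)$ is an ideal of $\g$ contained in $\h$ — is the easy direction: $\Ngr \subset \Hgr$ gives $\operatorname{Lie}(\Ngr) \subset \h$, and normality of $\Ngr$ in $\Ggr$ gives the ideal property after differentiating.

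Finally I would remark that this lemma, together with the preceding one (isomorphic homogeneous spaces have isomorphic Lie pairs) and the geometric realisability discussion to follow, assembles the equivalence advertised at the start of the appendix: simply-connected homogeneous spaces up to isomorphism correspond bijectively to isomorphism classes of geometrically realisable, effective Lie pairs.
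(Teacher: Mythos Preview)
Your proof is correct and takes essentially the same approach as the paper: both arguments hinge on the fact that an ideal $\n \subset \h$ exponentiates to a normal subgroup of $\Ggr$ contained in $\Hgr$ (hence acting trivially on $M$), and conversely that the kernel of the action has Lie algebra an ideal of $\g$ sitting inside $\h$. The only difference is organizational: the paper proves the two contrapositives separately, whereas you package both directions at once by identifying $\operatorname{Lie}(\Ngr)$ with the maximal ideal of $\g$ contained in $\h$ --- a slightly cleaner formulation, but mathematically identical.
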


\begin{proof}
  We start by proving that if $(\g,\h)$ is not effective, then $\Ggr$
  does not act locally effectively.  If $(\g,\h)$ is not effective,
  then there is a nonzero ideal $\n$ of $\g$ contained in $\h$.  Let
  $\Ngr$ be the unique connected subgroup of $\Ggr$ generated by $\n$.
  Since $\n$ is an ideal, $\Ngr$ is a normal subgroup.  We claim that
  $\Ngr$ stabilises every point on $M$.  Since $\Ngr \subset \Hgr$, it
  stabilises any point $o \in M$ with stabiliser $\Hgr$.  Let $m \in
  M$ be any other point and let $g \in \Ggr$ be such that $g \cdot o =
  m$.  Then the stabiliser of $m$ is $g \Hgr g^{-1}$, which contains
  $g \Ngr g^{-1} = \Ngr$.

  Conversely, suppose that $\Ggr$ does not act locally effectively, so
  that the Lie algebra (anti)homomorphism $\xi: \g \to \eX(M)$ has
  nonzero kernel $\n$, which is an ideal of $\g$.  Let $o \in M$ have
  stabiliser $\Hgr$.  Then $\h$ consists of those $X \in \g$ for which
  $\xi_X(o) = 0$.  But if $X \in \n$, $\xi_X(m) = 0$ for all
  $m \in M$, so in particular, $\xi_X(o) = 0$ and hence $X \in \h$.
  This means that there is an ideal of $\g$ contained in $\h$ and
  hence $(\g,\h)$ is not effective.
\end{proof}

In summary, to a homogeneous space of $\Ggr$ and a choice of
origin, we may assign an effective Lie pair and up to isomorphism the
choice of origin is immaterial.  We now wish to examine the inverse
problem: namely, does every effective Lie pair arise as the Lie pair of a
homogeneous space and a choice of origin?

\subsection{Geometric realisations}
\label{sec:geom-real-1}

It turns out that not every effective Lie pair arises from a
homogeneous space.  For example, consider $\g = \su(3)$, the simple
Lie algebra of $3\times 3$ traceless skewhermitian complex matrices,
and let $\h$ be the one-dimensional subalgebra spanned by the matrix
\begin{equation}
  X_\alpha =
  \begin{pmatrix}
    i & \zero & \zero\\
    \zero & \alpha i & \zero\\
    \zero & \zero & -(1+\alpha)i
  \end{pmatrix}
\end{equation}
for some irrational real number $\alpha$.  We claim that there is no
Lie group $\Ggr$ with Lie algebra isomorphic to $\g$ for which the
subgroup corresponding to $\h$ is closed.  Indeed, there are (up to
isomorphism) precisely two connected Lie groups with Lie algebra
isomorphic to $\su(3)$: $\SU(3)$ itself and the adjoint group
$\Ad\,SU(3) \cong \SU(3)/\ZZ_3$.  The one-parameter subgroup of either of
these groups generated by $X_\alpha$ is not closed.  It is enough to
see this for the simply-connected group $\SU(3)$, since if the
subgroup of $\SU(3)/\ZZ_3$ generated by $X_\alpha$ were closed, then
so would be its pre-image under the covering homomorphism
$\pi: \SU(3) \to \SU(3)/\ZZ_3$, which is the subgroup generated by
$X_\alpha$ in $\SU(3)$.  So let $H_\alpha$ denote the subgroup
generated by $X_\alpha$ in $\SU(3)$.  It is clearly contained in the
maximal torus of diagonal matrices in $\SU(3)$, which is a closed
subgroup.  So it defines a one-parameter subgroup of the torus with an
irrational slope and it's easy to see that the closure of this
subgroup is the whole torus.\footnote{Since the counterexample here is
  the irrational slope flow on a torus, one might have wondered why we
  didn't simply consider the abelian Lie algebra $\g = \u(1)\oplus\u(1)$
  and the subalgebra $\h$ spanned by $(i,\alpha i)$, with $\alpha$
  irrational.  Indeed, the subgroup of $U(1)\times U(1)$ generated
  by $\h$ is not closed, but the subgroup generated by $\h$ in
  the universal covering group $\RR^2$ is closed, so that the Lie pair
  $(\g,\h)$ is geometrically realisable.}

This suggests the following definition.

\begin{definition}
  A Lie pair $(\g,\h)$ is \textbf{geometrically realisable} if there
  is a connected Lie group $\Ggr$ with Lie algebra
  $\mathrm{Lie}(\Ggr)$ isomorphic to $\g$ and a \emph{closed} Lie
  subgroup $\Hgr$ with Lie algebra $\mathrm{Lie}(\Hgr)$ isomorphic to
  $\h$ (by restricting the isomorphism $\mathrm{Lie}(\Ggr) \cong \g$).
  The coset space $\Ggr/\Hgr$ is then a \textbf{geometric realisation}
  of $(\g,\h)$.
\end{definition}

It is clear from this definition that if two Lie pairs are isomorphic
and one pair admits a geometric realisation then so does the other
pair.

\subsection{Simply-connected homogeneous spaces}
\label{sec:simply-conn-homog}

Finally, we are ready to prove the main result of this Appendix.
Namely, we show that every geometrically realisable, effective Lie
pair admits a unique (up to isomorphism) simply-connected geometric
realisation.

Let $M := \Ggr/\Hgr$ be a geometric realisation of the Lie pair
$(\g,\h)$, where $\Ggr$ is connected.  Let
$\pi: \widetilde\Ggr \to \Ggr$ be the universal covering group of
$\Ggr$.  Since $\pi$ is surjective, $\widetilde\Ggr$ also acts
transitively on $M$ via $g \cdot m = \pi(g) \cdot m$, for
$g \in \widetilde\Ggr$ and $m \in M$.  If $o \in M$ denotes the
identity coset, then its stabiliser in $\widetilde\Ggr$ is
$\widetilde\Hgr = \pi^{-1}\Hgr = \left\{h \in \widetilde\Ggr \, \middle| \,
  \pi(g) \in \Hgr\right\}$.  Therefore
$M = \widetilde\Ggr/\widetilde\Hgr = \Ggr/\Hgr$.  Now let
$\widetilde\Hgr_1$ denote the connected component of the identity in
$\widetilde\Hgr$ and let
$\widetilde M := \widetilde\Ggr/\widetilde\Hgr_1$.

\begin{lemma}
  $\widetilde M$ is the universal cover of $M$ and the covering map
  $p: \widetilde M \to M$ is $\widetilde\Ggr$-equivariant.
  Furthermore the Lie pair associated to $\widetilde M$ is isomorphic
  to $(\g,\h)$.
\end{lemma}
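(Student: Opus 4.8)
The plan is to verify the three assertions in turn, all of which follow from standard facts about covering groups and quotients. First I would show that $p : \widetilde M \to M$ is a covering map. Writing $\widetilde M = \widetilde\Ggr/\widetilde\Hgr_1$ and $M = \widetilde\Ggr/\widetilde\Hgr$, the natural projection $p$ sends $g\widetilde\Hgr_1 \mapsto g\widetilde\Hgr$ and is manifestly $\widetilde\Ggr$-equivariant, since $p(g' \cdot g\widetilde\Hgr_1) = g'g\widetilde\Hgr = g' \cdot p(g\widetilde\Hgr_1)$. Its fibres are the orbits of $\widetilde\Hgr/\widetilde\Hgr_1$ acting on $\widetilde M$ on the right, and since $\widetilde\Hgr_1$ is the identity component of the closed (hence Lie) subgroup $\widetilde\Hgr$, the quotient $\widetilde\Hgr/\widetilde\Hgr_1$ is discrete. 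A smooth surjection of manifolds whose fibres are the orbits of a free, properly discontinuous action of a discrete group is a covering map; freeness and proper discontinuity here follow from $\widetilde\Hgr/\widetilde\Hgr_1$ being discrete and acting by (restrictions of) the free, proper $\widetilde\Ggr$-action on coset spaces. So $p$ is a covering.

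Next I would show $\widetilde M$ is simply connected, which combined with the previous step identifies it as \emph{the} universal cover of $M$. The key tool is the long exact homotopy sequence of the fibration $\widetilde\Hgr_1 \hookrightarrow \widetilde\Ggr \to \widetilde M$. Since $\widetilde\Ggr$ is simply connected (it is a universal covering group) we have $\pi_1(\widetilde\Ggr) = 0$, and the relevant segment reads $\pi_1(\widetilde\Ggr) \to \pi_1(\widetilde M) \to \pi_0(\widetilde\Hgr_1) \to \pi_0(\widetilde\Ggr)$. Because $\widetilde\Hgr_1$ is connected, $\pi_0(\widetilde\Hgr_1)$ is trivial, and since $\pi_1(\widetilde\Ggr)=0$ as well, exactness forces $\pi_1(\widetilde M) = 0$. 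Hence $\widetilde M$ is simply connected, and therefore it is the universal cover of $M$ with covering map $p$.

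Finally, for the Lie pair: the homogeneous space $\widetilde M = \widetilde\Ggr/\widetilde\Hgr_1$ has, at the identity coset, stabiliser $\widetilde\Hgr_1$, whose Lie algebra is $\mathrm{Lie}(\widetilde\Hgr_1) = \mathrm{Lie}(\widetilde\Hgr)$ since passing to the identity component does not change the Lie algebra. Now $\widetilde\Hgr = \pi^{-1}\Hgr$, and because the covering homomorphism $\pi : \widetilde\Ggr \to \Ggr$ is a local diffeomorphism, it induces an isomorphism $\mathrm{Lie}(\widetilde\Ggr) \xrightarrow{\sim} \mathrm{Lie}(\Ggr)$ carrying $\mathrm{Lie}(\widetilde\Hgr)$ onto $\mathrm{Lie}(\Hgr)$. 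Composing with the given isomorphisms $\mathrm{Lie}(\Ggr)\cong\g$ and $\mathrm{Lie}(\Hgr)\cong\h$ (the latter being the restriction of the former), we obtain an isomorphism of Lie pairs $(\mathrm{Lie}(\widetilde\Ggr),\mathrm{Lie}(\widetilde\Hgr_1)) \cong (\g,\h)$, as claimed.

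I expect the main obstacle to be the first step: being careful that $p$ is genuinely a covering map rather than merely a submersion with discrete fibres. One must invoke that $\widetilde\Hgr$ is closed (guaranteed here because $(\g,\h)$ is geometrically realisable, so $\Hgr$ is closed and $\pi^{-1}\Hgr$ is closed), so that $\widetilde\Hgr$ is an embedded Lie subgroup and $\widetilde\Hgr/\widetilde\Hgr_1$ is a genuine discrete group acting properly discontinuously; without closedness the quotient $\widetilde M$ might fail to be Hausdorff and $p$ would not be a covering. The homotopy-sequence argument and the Lie-algebra identification are then routine. One should also remark that everything respects the $\widetilde\Ggr$-action, so that $\widetilde M$ is indeed a homogeneous space of $\widetilde\Ggr$ and $p$ intertwines the two actions, which is immediate from the coset description.
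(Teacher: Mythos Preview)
Your proposal is correct and follows essentially the same approach as the paper: both use the homotopy long exact sequence of the fibration $\widetilde\Hgr_1 \hookrightarrow \widetilde\Ggr \to \widetilde M$ to establish simple connectivity, identify $p$ as a covering via the discrete quotient $\widetilde\Hgr/\widetilde\Hgr_1 \cong \pi_0(\widetilde\Hgr)$, and obtain the Lie pair isomorphism from the Lie map of the covering homomorphism $\pi$. The only differences are cosmetic: you present the covering argument before the homotopy computation (the paper reverses this order) and you are more explicit about the proper discontinuity and closedness issues, which the paper handles implicitly by phrasing $p$ as the projection of a principal bundle with discrete fibre.
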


\begin{proof}
  It is clear that $\widetilde M$ is a homogeneous space of
  $\widetilde\Ggr$ and hence it is the base of a principal
  $\widetilde\Hgr_1$-bundle
  \begin{equation}
    \begin{tikzcd}
      \widetilde\Hgr_1 \arrow[r] & \widetilde\Ggr \arrow[d] \\
      & \widetilde M
    \end{tikzcd}
  \end{equation}
  whose homotopy long exact sequence ends with
  \begin{equation}
    \begin{tikzcd}
     \pi_1(\widetilde\Ggr) \arrow[r] & \pi_1(\widetilde M) \arrow[r] & 
     \pi_0(\widetilde\Hgr_1) \arrow[r] & \pi_0(\widetilde \Ggr)
     \arrow[r] & 0,
    \end{tikzcd}
  \end{equation}
  where all maps are group homomorphisms.  Since $\widetilde\Ggr$ is
  connected and simply connected,
  $\pi_0(\widetilde\Ggr) = \pi_1(\widetilde\Ggr) = 0$ and since
  $\widetilde\Hgr_1$ is connected, $\pi_0(\widetilde\Hgr_1) = 0$,
  resulting in $\pi_1(\widetilde M) = 0$.  The map
  $p:\widetilde M = \widetilde\Ggr/\widetilde\Hgr_1\to M =
  \widetilde\Ggr/\widetilde\Hgr$, defined by $p(g \widetilde\Hgr_1)
  = g \widetilde\Hgr$, is manifestly
  $\widetilde\Ggr$-equivariant and moreover it is a covering since it
  is the projection of a principal bundle with base $M$ and discrete
  fibre $\pi_0(\widetilde\Hgr)$.  The Lie pair associated to
  $\widetilde M$ is
  $(\mathrm{Lie}(\widetilde\Ggr),\mathrm{Lie}(\widetilde\Hgr_1))$, but
  since $\pi: \widetilde\Ggr \to \Ggr$ is a covering homomorphism of
  Lie groups, the Lie map $\pi_*: \mathrm{Lie}(\widetilde\Ggr) \to \g$
  is a Lie algebra isomorphism which restricts to an isomorphism
  $\mathrm{Lie}(\widetilde\Hgr_1) = \mathrm{Lie}(\widetilde\Hgr) \to \h$.
  Hence the Lie pairs
  $(\mathrm{Lie}(\widetilde\Ggr),\mathrm{Lie}(\widetilde\Hgr_1))$ and
  $(\g,\h)$ are isomorphic.
\end{proof}

We have shown that every geometrically realisable Lie pair $(\g,\h)$
has a simply-connected geometric realisation $\widetilde M$ as
above. It turns out that this is unique up to isomorphism.

\begin{lemma}\label{lem:uniqueness}
  $\widetilde M$ is the unique (up to isomorphism) simply-connected
  geometric realisation of $(\g,\h)$.
\end{lemma}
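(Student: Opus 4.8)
The plan is to show that any two simply-connected geometric realisations of an effective, geometrically realisable Lie pair $(\g,\h)$ are isomorphic as homogeneous spaces. Let $\widetilde M = \widetilde\Ggr/\widetilde\Hgr_1$ be the realisation constructed in the previous lemma, built from the simply-connected group $\widetilde\Ggr$ with $\mathrm{Lie}(\widetilde\Ggr) \cong \g$ and the identity component $\widetilde\Hgr_1$ of the (closed) subgroup with Lie algebra $\h$. Suppose $N = \Ggr'/\Hgr'$ is another simply-connected geometric realisation: $\Ggr'$ is connected with $\mathrm{Lie}(\Ggr') \cong \g$, $\Hgr' \subset \Ggr'$ is closed with $\mathrm{Lie}(\Hgr') \cong \h$ compatibly, and $\Ggr'$ acts locally effectively on $N$ (this last point because, by the lemma characterising effectiveness, local effectiveness of the action is equivalent to effectiveness of the Lie pair, which we are assuming). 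The goal is to produce a diffeomorphism $f: \widetilde M \to N$ and a Lie group isomorphism $\Phi: \widetilde\Ggr \to \Ggr'$ (or onto a suitable cover) intertwining the actions, in the sense of Definition~\ref{def:iso-lie-pair} adapted to homogeneous spaces.

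First I would pass to universal covers on the $N$ side as well: let $p': \widehat\Ggr' \to \Ggr'$ be the universal covering homomorphism. Since $\mathrm{Lie}(\widehat\Ggr') \cong \mathrm{Lie}(\Ggr') \cong \g \cong \mathrm{Lie}(\widetilde\Ggr)$ and both $\widehat\Ggr'$ and $\widetilde\Ggr$ are connected and simply-connected, by the standard Lie group/Lie algebra correspondence there is a unique Lie group isomorphism $\Phi: \widetilde\Ggr \to \widehat\Ggr'$ integrating a chosen Lie algebra isomorphism $\varphi: \g \to \mathrm{Lie}(\widehat\Ggr')$ (which we may take so that $\varphi(\h) = \mathrm{Lie}(\widehat\Hgr')$, using that all Lie pairs associated to a fixed realisation are related by inner automorphisms, as noted in the text). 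Now $\widehat\Ggr'$ acts on $N$ through $p'$, and the stabiliser of the base point is $(p')^{-1}(\Hgr')$, a closed subgroup with Lie algebra $\mathrm{Lie}(\widehat\Hgr') = \varphi(\h)$; call its identity component $\widehat\Hgr'_1$. Running the argument of the previous lemma, the simply-connected cover of $N$ is $\widehat\Ggr'/\widehat\Hgr'_1$; but $N$ is already simply-connected, so $N \cong \widehat\Ggr'/\widehat\Hgr'_1$. Then $\Phi$ sends $\widetilde\Hgr_1$ — the identity component of the stabiliser in $\widetilde\Ggr$, which has Lie algebra $\h$ — onto the identity component of $\Phi(\text{stab})$, which is $\widehat\Hgr'_1$ since $\Phi$ carries $\h$ onto $\varphi(\h)$ and is an isomorphism. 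Hence $\Phi$ descends to a diffeomorphism $f: \widetilde M = \widetilde\Ggr/\widetilde\Hgr_1 \to \widehat\Ggr'/\widehat\Hgr'_1 = N$ with $f(g \cdot m) = \Phi(g) \cdot f(m)$, which is precisely an isomorphism of homogeneous spaces.

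The main obstacle — and the step requiring the most care — is the identification of the stabiliser subgroups under $\Phi$ at the level of connected components, i.e.\ showing $\Phi(\widetilde\Hgr_1) = \widehat\Hgr'_1$ rather than merely $\Phi(\mathfrak{stab}) \supseteq$ or matching Lie algebras. This rests on two observations: $\Phi$ is a diffeomorphism so it maps identity components to identity components, and the stabilisers on both sides are \emph{closed} subgroups (by geometric realisability) whose Lie algebras are $\h$ and $\varphi(\h)$ respectively, so their identity components are the unique connected immersed subgroups integrating those subalgebras, and $\Phi$ restricted to $\widetilde\Hgr_1$ integrates $\varphi|_\h$. One should also remark that passing through $\widehat\Ggr'$ rather than $\Ggr'$ directly is exactly what makes $\Phi$ available — there need be no isomorphism between $\widetilde\Ggr$ and $\Ggr'$ themselves — and that the effectiveness hypothesis is what guarantees $N$ genuinely has $(\g,\h)$ as its Lie pair (so that the construction of its simply-connected cover reproduces $\widetilde M$), closing the loop. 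Finally, combining this lemma with the earlier lemmas gives the stated equivalence: isomorphism classes of geometrically realisable, effective Lie pairs are in bijection with isomorphism classes of simply-connected homogeneous spaces.
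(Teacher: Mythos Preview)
Your proposal is correct and follows essentially the same route as the paper: pass to the simply-connected cover of the transitive group on the second realisation, use the Lie group/Lie algebra correspondence to obtain a group isomorphism $\Phi$ between the two simply-connected groups, identify the connected stabiliser subgroups under $\Phi$, and descend to the quotients. The paper's proof is more compressed---it notes directly, via the homotopy long exact sequence, that once $\Ggr'$ is taken simply-connected the stabiliser $\Hgr'$ is forced to be connected (hence equal to the unique connected subgroup integrating $\h'$), whereas you phrase the same step as re-applying the previous lemma to $N$; your extra care about why $\Phi(\widetilde\Hgr_1)=\widehat\Hgr'_1$ and your closing remark on effectiveness are welcome elaborations but not a different argument.
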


\begin{proof}
  Suppose that $\widetilde M'$ is another simply-connected geometric
  realisation of $(\g,\h)$.  This means that there is a connected Lie
  group $\Ggr'$ and a closed subgroup $\Hgr'$ such that
  $\widetilde M' = \Ggr'/\Hgr'$ and such that the Lie pairs
  $(\g',\h')$ and $(\g,\h)$ are isomorphic, where $\g'$ and $\h'$ are
  the Lie algebras of $\Ggr'$ and $\Hgr'$, respectively.  This also
  means that $(\g',\h')$ is isomorphic to the Lie pair
  $(\mathrm{Lie}(\widetilde\Ggr),\mathrm{Lie}(\widetilde\Hgr_1))$ of
  $\widetilde M$.  Let
  $\varphi: (\g',\h') \to
  (\mathrm{Lie}(\widetilde\Ggr),\mathrm{Lie}(\widetilde\Hgr_1))$
  denote this isomorphism.  Passing to the universal covering group
  (if necessary), we may assume without loss of generality that
  $\Ggr'$ is simply-connected and, since $\widetilde M' = \Ggr'/\Hgr'$
  is simply connected, that $\Hgr'$ is connected.  The isomorphism
  $\varphi: \g' \to \mathrm{Lie}(\widetilde\Ggr)$ lifts to a unique Lie
  group isomorphism $\Phi: \Ggr' \to \widetilde\Ggr$ which restricts
  to an isomorphism $\Hgr' \to \widetilde\Hgr_1$ and hence induces a
  unique isomorphism of homogeneous spaces $\phi: \widetilde M'=
  \Ggr'/\Hgr' \to \widetilde M = \widetilde\Ggr/\widetilde\Hgr_1$,
  sending $g \Hgr' \mapsto \Phi(g)\widetilde\Hgr_1$.
\end{proof}

As a corollary of the above lemma, we see that two homogeneous spaces
with isomorphic Lie pairs have isomorphic universal covers.  Indeed,
let $M = \Ggr/\Hgr$ and $M' = \Ggr'/\Hgr'$ have isomorphic Lie pairs
$(\g,\h)$ and $(\g',\h')$, respectively.  Then the universal cover
$\widetilde M$ of $M$ has a Lie pair which is isomorphic to $(\g,\h)$
and the universal cover $\widetilde M'$ of $M'$ has a Lie pair which
is isomorphic to $(\g',\h')$ and hence also to $(\g,\h)$.  Therefore
$\widetilde M$ and $\widetilde M'$ are simply-connected geometric
realisations of $(\g,\h)$ and by Lemma~\ref{lem:uniqueness} they are
isomorphic as homogeneous spaces.

In summary, we have proved the following.

\begin{theorem}
  Isomorphism classes of geometrically realisable, effective Lie pairs
  are in one-to-one correspondence with isomorphism classes of
  simply-connected homogeneous spaces.
\end{theorem}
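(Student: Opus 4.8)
The plan is to establish a bijection between the two sets by constructing maps in both directions and showing they are mutually inverse. The theorem's content is already distributed across the lemmas of this appendix, so the proof is mainly an assembly job: I would first recall that Section~\ref{sec:lie-pairs} shows that to every homogeneous space $M$ with a choice of origin one associates an effective Lie pair $(\g,\h)$, and that any two choices of origin give isomorphic Lie pairs (related by an inner automorphism via $\Ad_g$). This gives a well-defined map from isomorphism classes of homogeneous spaces to isomorphism classes of effective Lie pairs. The Lemma that isomorphic homogeneous spaces have isomorphic Lie pairs shows this map is well-defined on isomorphism classes, and restricting to simply-connected homogeneous spaces, the resulting Lie pair is automatically geometrically realisable since $M$ itself (written as a coset space $\Ggr/\Hgr$ with $\Hgr$ closed) is a geometric realisation.

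Next I would construct the inverse map. Given a geometrically realisable, effective Lie pair $(\g,\h)$, Section~\ref{sec:simply-conn-homog} produces a distinguished simply-connected geometric realisation $\widetilde M = \widetilde\Ggr/\widetilde\Hgr_1$, where $\widetilde\Ggr$ is the universal cover of a connected realising group $\Ggr$ and $\widetilde\Hgr_1$ is the identity component of the preimage of $\Hgr$. The first Lemma of that subsection shows $\widetilde M$ is simply-connected and has Lie pair isomorphic to $(\g,\h)$, so the assignment $(\g,\h) \mapsto [\widetilde M]$ lands in isomorphism classes of simply-connected homogeneous spaces; it is well-defined on isomorphism classes because isomorphic Lie pairs are both geometrically realisable (noted after the definition of geometric realisability) and Lemma~\ref{lem:uniqueness} guarantees uniqueness of the simply-connected realisation up to isomorphism.

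Finally I would check the two composites are the identity. Starting from a simply-connected homogeneous space $M$, passing to its Lie pair $(\g,\h)$ and then back to $\widetilde M$: since $M$ is already simply-connected, $M$ is itself a simply-connected geometric realisation of $(\g,\h)$, so by Lemma~\ref{lem:uniqueness} it is isomorphic to $\widetilde M$. Conversely, starting from a geometrically realisable effective Lie pair $(\g,\h)$, forming $\widetilde M$, and then taking its associated Lie pair returns a pair isomorphic to $(\g,\h)$ by the first Lemma of Section~\ref{sec:simply-conn-homog}. Hence the two maps are mutually inverse bijections, which is exactly the claimed one-to-one correspondence.

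I do not anticipate a genuine obstacle, since all the real work — existence of $\widetilde M$, its simple-connectedness, the identification of its Lie pair, and uniqueness — has been done in the preceding lemmas. The only point requiring a little care is the bookkeeping around \emph{choices}: one must be explicit that the map ``homogeneous space $\rightsquigarrow$ Lie pair'' is independent of the origin and of the chosen coset model, and that the map ``Lie pair $\rightsquigarrow$ simply-connected realisation'' is independent of which realising group $\Ggr$ one starts with (this independence is precisely the content of Lemma~\ref{lem:uniqueness} and its corollary about universal covers). Once those independences are invoked, the proof is a two-line verification that the composites are identities.
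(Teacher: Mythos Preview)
Your proposal is correct and follows essentially the same approach as the paper: the theorem is stated in the paper as a summary (``In summary, we have proved the following'') of the preceding lemmas, and your assembly of those lemmas into an explicit two-sided inverse is precisely the intended argument. Your discussion of the bookkeeping around choices (origin, coset model, realising group) is a helpful elaboration but does not diverge from the paper's route.
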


We may paraphrase this result as follows.  Introduce an equivalence
relation between homogeneous spaces by declaring two homogeneous
spaces to be equivalent if their universal covers are isomorphic as
homogeneous spaces.  The isomorphism classes of geometrically
realisable, effective Lie pairs are in one-to-one correspondence with
equivalence classes of homogeneous spaces.

If we wish to classify homogeneous spaces up to isomorphism and not
just up to covering, we would start from the classification of
simply-connected homogeneous spaces and then classify their
homogeneous quotients.  That, however, is beyond the scope of this
paper.


\begin{thebibliography}{10}

\bibitem{MR0238545}
H.~Bacry and J.-M. L\'evy-Leblond, ``Possible kinematics,'' {\em J. Math.
  Phys.} {\bf 9} (1968) 1605--1614.

\bibitem{MR857383}
H.~Bacry and J.~Nuyts, ``Classification of ten-dimensional kinematical groups
  with space isotropy,'' {\em J. Math. Phys.} {\bf 27} (1986), no.~10,
  2455--2457.

\bibitem{Figueroa-OFarrill:2017ycu}
J.~M. Figueroa-O'Farrill, ``{Kinematical Lie algebras via deformation
  theory},'' {\em {J. Math. Phys.}} {\bf 59} (2018), no.~6, 061701,
  \href{http://arxiv.org/abs/1711.06111}{{\tt arXiv:1711.06111 [hep-th]}}.

\bibitem{Figueroa-OFarrill:2017tcy}
J.~M. Figueroa-O'Farrill, ``{Higher-dimensional kinematical Lie algebras via
  deformation theory},'' {\em {J. Math. Phys.}} {\bf 59} (2018), no.~6, 061702,
  \href{http://arxiv.org/abs/1711.07363}{{\tt arXiv:1711.07363 [hep-th]}}.

\bibitem{Andrzejewski:2018gmz}
T.~Andrzejewski and J.~Figueroa-O'Farrill, ``{Kinematical Lie algebras in 2+1
  dimensions},'' {\em {J. Math. Phys.}} {\bf 59} (2018), no.~6, 061703,
  \href{http://arxiv.org/abs/1802.04048}{{\tt arXiv:1802.04048 [hep-th]}}.

\bibitem{Maldacena:1997re}
J.~M. Maldacena, ``{The Large N limit of superconformal field theories and
  supergravity},'' {\em Adv.Theor.Math.Phys.} {\bf 2} (1998) 231--252,
  \href{http://arxiv.org/abs/hep-th/9711200}{{\tt arXiv:hep-th/9711200
  [hep-th]}}.

\bibitem{Son:2008ye}
D.~Son, ``{Toward an AdS/cold atoms correspondence: A Geometric realization of
  the Schrodinger symmetry},'' {\em Phys.Rev.} {\bf D78} (2008) 046003,
  \href{http://arxiv.org/abs/0804.3972}{{\tt arXiv:0804.3972 [hep-th]}}.

\bibitem{Balasubramanian:2008dm}
K.~Balasubramanian and J.~McGreevy, ``{Gravity duals for non-relativistic
  CFTs},'' {\em Phys.Rev.Lett.} {\bf 101} (2008) 061601,
  \href{http://arxiv.org/abs/0804.4053}{{\tt arXiv:0804.4053 [hep-th]}}.

\bibitem{Kachru:2008yh}
S.~Kachru, X.~Liu, and M.~Mulligan, ``{Gravity Duals of Lifshitz-like Fixed
  Points},'' {\em Phys.Rev.} {\bf D78} (2008) 106005,
  \href{http://arxiv.org/abs/0808.1725}{{\tt arXiv:0808.1725 [hep-th]}}.

\bibitem{Christensen:2013lma}
M.~H. Christensen, J.~Hartong, N.~A. Obers, and B.~Rollier, ``{Torsional
  Newton-Cartan Geometry and Lifshitz Holography},'' {\em Phys. Rev.} {\bf D89}
  (2014) 061901, \href{http://arxiv.org/abs/1311.4794}{{\tt arXiv:1311.4794
  [hep-th]}}.

\bibitem{Christensen:2013rfa}
M.~H. Christensen, J.~Hartong, N.~A. Obers, and B.~Rollier, ``{Boundary
  Stress-Energy Tensor and Newton-Cartan Geometry in Lifshitz Holography},''
  {\em JHEP} {\bf 01} (2014) 057, \href{http://arxiv.org/abs/1311.6471}{{\tt
  arXiv:1311.6471 [hep-th]}}.

\bibitem{Taylor:2015glc}
M.~Taylor, ``{Lifshitz holography},'' {\em Class. Quant. Grav.} {\bf 33}
  (2016), no.~3, 033001, \href{http://arxiv.org/abs/1512.03554}{{\tt
  arXiv:1512.03554 [hep-th]}}.

\bibitem{Hartnoll:2016apf}
S.~A. Hartnoll, A.~Lucas, and S.~Sachdev, ``{Holographic quantum matter},''
  \href{http://arxiv.org/abs/1612.07324}{{\tt arXiv:1612.07324 [hep-th]}}.

\bibitem{SchaferNameki:2009xr}
S.~Schafer-Nameki, M.~Yamazaki, and K.~Yoshida, ``{Coset Construction for Duals
  of Non-relativistic CFTs},'' {\em JHEP} {\bf 05} (2009) 038,
  \href{http://arxiv.org/abs/0903.4245}{{\tt arXiv:0903.4245 [hep-th]}}.

\bibitem{Jottar:2010vp}
J.~I. Jottar, R.~G. Leigh, D.~Minic, and L.~A. Pando~Zayas, ``{Aging and
  Holography},'' {\em JHEP} {\bf 11} (2010) 034,
  \href{http://arxiv.org/abs/1004.3752}{{\tt arXiv:1004.3752 [hep-th]}}.

\bibitem{Bagchi:2010xw}
A.~Bagchi and A.~Kundu, ``{Metrics with Galilean Conformal Isometry},'' {\em
  Phys. Rev.} {\bf D83} (2011) 066018,
  \href{http://arxiv.org/abs/1011.4999}{{\tt arXiv:1011.4999 [hep-th]}}.

\bibitem{Duval:2012qr}
C.~Duval and S.~Lazzarini, ``{Schr\"odinger Manifolds},'' {\em J. Phys.} {\bf
  A45} (2012) 395203, \href{http://arxiv.org/abs/1201.0683}{{\tt
  arXiv:1201.0683 [math-ph]}}.

\bibitem{Grosvenor:2017dfs}
K.~T. Grosvenor, J.~Hartong, C.~Keeler, and N.~A. Obers, ``{Homogeneous
  Nonrelativistic Geometries as Coset Spaces},'' {\em Class. Quant. Grav.} {\bf
  35} (2018), no.~17, 175007, \href{http://arxiv.org/abs/1712.03980}{{\tt
  arXiv:1712.03980 [hep-th]}}.

\bibitem{Son:2013rqa}
D.~T. Son, ``{Newton-Cartan Geometry and the Quantum Hall Effect},''
  \href{http://arxiv.org/abs/1306.0638}{{\tt arXiv:1306.0638
  [cond-mat.mes-hall]}}.

\bibitem{Geracie:2014nka}
M.~Geracie, D.~T. Son, C.~Wu, and S.-F. Wu, ``{Spacetime Symmetries of the
  Quantum Hall Effect},'' {\em Phys. Rev.} {\bf D91} (2015) 045030,
  \href{http://arxiv.org/abs/1407.1252}{{\tt arXiv:1407.1252
  [cond-mat.mes-hall]}}.

\bibitem{Jensen:2014aia}
K.~Jensen, ``{On the coupling of Galilean-invariant field theories to curved
  spacetime},'' {\em SciPost Phys.} {\bf 5} (2018), no.~1, 011,
  \href{http://arxiv.org/abs/1408.6855}{{\tt arXiv:1408.6855 [hep-th]}}.

\bibitem{Brauner:2014jaa}
T.~Brauner, S.~Endlich, A.~Monin, and R.~Penco, ``{General coordinate
  invariance in quantum many-body systems},'' {\em Phys. Rev.} {\bf D90}
  (2014), no.~10, 105016, \href{http://arxiv.org/abs/1407.7730}{{\tt
  arXiv:1407.7730 [hep-th]}}.

\bibitem{Aldrovandi:1998im}
R.~Aldrovandi, A.~L. Barbosa, L.~C.~B. Crispino, and J.~G. Pereira,
  ``{Non-Relativistic spacetimes with cosmological constant},'' {\em Class.
  Quant. Grav.} {\bf 16} (1999) 495--506,
  \href{http://arxiv.org/abs/gr-qc/9801100}{{\tt arXiv:gr-qc/9801100 [gr-qc]}}.

\bibitem{Henneaux:1979vn}
M.~Henneaux, ``{Geometry of Zero Signature Space-times},'' {\em Bull. Soc.
  Math. Belg.} {\bf 31} (1979) 47--63.

\bibitem{Bunster:2012hm}
C.~Bunster and M.~Henneaux, ``{Duality Invariance Implies Poincaré
  Invariance},'' {\em Phys. Rev. Lett.} {\bf 110} (2013), no.~1, 011603,
  \href{http://arxiv.org/abs/1208.6302}{{\tt arXiv:1208.6302 [hep-th]}}.

\bibitem{Duval:2014uva}
C.~Duval, G.~W. Gibbons, and P.~A. Horvathy, ``{Conformal Carroll groups and
  BMS symmetry},'' {\em Class. Quant. Grav.} {\bf 31} (2014) 092001,
  \href{http://arxiv.org/abs/1402.5894}{{\tt arXiv:1402.5894 [gr-qc]}}.

\bibitem{Andringa:2010it}
R.~Andringa, E.~Bergshoeff, S.~Panda, and M.~de~Roo, ``{Newtonian Gravity and
  the Bargmann Algebra},'' {\em Class. Quant. Grav.} {\bf 28} (2011) 105011,
  \href{http://arxiv.org/abs/1011.1145}{{\tt arXiv:1011.1145 [hep-th]}}.

\bibitem{Hartong:2015xda}
J.~Hartong, ``{Gauging the Carroll Algebra and Ultra-Relativistic Gravity},''
  {\em JHEP} {\bf 08} (2015) 069, \href{http://arxiv.org/abs/1505.05011}{{\tt
  arXiv:1505.05011 [hep-th]}}.

\bibitem{Bergshoeff:2017btm}
E.~Bergshoeff, J.~Gomis, B.~Rollier, J.~Rosseel, and T.~ter Veldhuis,
  ``{Carroll versus Galilei Gravity},'' {\em JHEP} {\bf 03} (2017) 165,
  \href{http://arxiv.org/abs/1701.06156}{{\tt arXiv:1701.06156 [hep-th]}}.

\bibitem{Bekaert:2014bwa}
X.~Bekaert and K.~Morand, ``{Connections and dynamical trajectories in
  generalised Newton-Cartan gravity I. An intrinsic view},'' {\em J. Math.
  Phys.} {\bf 57} (2016), no.~2, 022507,
  \href{http://arxiv.org/abs/1412.8212}{{\tt arXiv:1412.8212 [hep-th]}}.

\bibitem{Bekaert:2015xua}
X.~Bekaert and K.~Morand, ``{Connections and dynamical trajectories in
  generalised Newton-Cartan gravity II. An ambient perspective},'' {\em J.
  Math. Phys.} {\bf 59} (2018), no.~7, 072503,
  \href{http://arxiv.org/abs/1505.03739}{{\tt arXiv:1505.03739 [hep-th]}}.

\bibitem{Papageorgiou:2009zc}
G.~Papageorgiou and B.~J. Schroers, ``{A Chern-Simons approach to Galilean
  quantum gravity in 2+1 dimensions},'' {\em JHEP} {\bf 11} (2009) 009,
  \href{http://arxiv.org/abs/0907.2880}{{\tt arXiv:0907.2880 [hep-th]}}.

\bibitem{Papageorgiou:2010ud}
G.~Papageorgiou and B.~J. Schroers, ``{Galilean quantum gravity with
  cosmological constant and the extended $q$-Heisenberg algebra},'' {\em JHEP}
  {\bf 11} (2010) 020, \href{http://arxiv.org/abs/1008.0279}{{\tt
  arXiv:1008.0279 [hep-th]}}.

\bibitem{Bergshoeff:2016lwr}
E.~A. Bergshoeff and J.~Rosseel, ``{Three-Dimensional Extended Bargmann
  Supergravity},'' {\em Phys. Rev. Lett.} {\bf 116} (2016), no.~25, 251601,
  \href{http://arxiv.org/abs/1604.08042}{{\tt arXiv:1604.08042 [hep-th]}}.

\bibitem{Hartong:2016yrf}
J.~Hartong, Y.~Lei, and N.~A. Obers, ``{Nonrelativistic Chern-Simons theories
  and three-dimensional Hořava-Lifshitz gravity},'' {\em Phys. Rev.} {\bf D94}
  (2016), no.~6, 065027, \href{http://arxiv.org/abs/1604.08054}{{\tt
  arXiv:1604.08054 [hep-th]}}.

\bibitem{Bergshoeff:2016soe}
E.~Bergshoeff, D.~Grumiller, S.~Prohazka, and J.~Rosseel, ``{Three-dimensional
  Spin-3 Theories Based on General Kinematical Algebras},'' {\em JHEP} {\bf 01}
  (2017) 114, \href{http://arxiv.org/abs/1612.02277}{{\tt arXiv:1612.02277
  [hep-th]}}.

\bibitem{Joung:2018frr}
E.~Joung and W.~Li, ``{Nonrelativistic limits of colored gravity in three
  dimensions},'' {\em Phys. Rev.} {\bf D97} (2018), no.~10, 105020,
  \href{http://arxiv.org/abs/1801.10143}{{\tt arXiv:1801.10143 [hep-th]}}.

\bibitem{LeBellac1973}
M.~Le~Bellac and J.~M. L{\'e}vy-Leblond, ``Galilean electromagnetism,'' {\em Il
  Nuovo Cimento B (1971-1996)} {\bf 14} (Apr, 1973) 217--234.

\bibitem{Duval:2014uoa}
C.~Duval, G.~W. Gibbons, P.~A. Horvathy, and P.~M. Zhang, ``{Carroll versus
  Newton and Galilei: two dual non-Einsteinian concepts of time},'' {\em Class.
  Quant. Grav.} {\bf 31} (2014) 085016,
  \href{http://arxiv.org/abs/1402.0657}{{\tt arXiv:1402.0657 [gr-qc]}}.

\bibitem{Bagchi:2014ysa}
A.~Bagchi, R.~Basu, and A.~Mehra, ``{Galilean Conformal Electrodynamics},''
  {\em JHEP} {\bf 11} (2014) 061, \href{http://arxiv.org/abs/1408.0810}{{\tt
  arXiv:1408.0810 [hep-th]}}.

\bibitem{Festuccia:2016caf}
G.~Festuccia, D.~Hansen, J.~Hartong, and N.~A. Obers, ``{Symmetries and
  Couplings of Non-Relativistic Electrodynamics},'' {\em JHEP} {\bf 11} (2016)
  037, \href{http://arxiv.org/abs/1607.01753}{{\tt arXiv:1607.01753 [hep-th]}}.

\bibitem{Basu:2018dub}
R.~Basu and U.~N. Chowdhury, ``{Dynamical structure of Carrollian
  Electrodynamics},'' {\em JHEP} {\bf 04} (2018) 111,
  \href{http://arxiv.org/abs/1802.09366}{{\tt arXiv:1802.09366 [hep-th]}}.

\bibitem{Hansen:2019pkl}
D.~Hansen, J.~Hartong, and N.~A. Obers, ``{Action Principle for Newtonian
  Gravity},'' {\em Phys. Rev. Lett.} {\bf 122} (2019), no.~6, 061106,
  \href{http://arxiv.org/abs/1807.04765}{{\tt arXiv:1807.04765 [hep-th]}}.

\bibitem{MR0192900}
J.-M. L\'evy-Leblond, ``Une nouvelle limite non-relativiste du groupe de
  {P}oincar\'e,'' {\em Ann. Inst. H. Poincar\'e Sect. A (N.S.)} {\bf 3} (1965)
  1--12.

\bibitem{Derome1972}
J.~R. Derome and J.~G. Dubois, ``Hooke's symmetries and nonrelativistic
  cosmological kinematics. --- i,'' {\em Il Nuovo Cimento B (1971-1996)} {\bf
  9} (Jun, 1972) 351--376.

\bibitem{Carinena:1981nq}
J.~F. Cariñena, M.~A. Del~Olmo, and M.~Santander, ``{Kinematic Groups and
  Dimensional Analysis},'' {\em J. Phys.} {\bf A14} (1981) 1--14.

\bibitem{Gibbons:2003rv}
G.~W. Gibbons and C.~E. Patricot, ``{Newton-Hooke space-times, Hpp waves and
  the cosmological constant},'' {\em Class. Quant. Grav.} {\bf 20} (2003) 5225,
  \href{http://arxiv.org/abs/hep-th/0308200}{{\tt arXiv:hep-th/0308200
  [hep-th]}}.

\bibitem{Bianchi}
L.~Bianchi, ``Sugli spazi a tre dimensioni che ammettono un gruppo continuo di
  movimenti,'' {\em Memorie di Matematica e di Fisica della Societa Italiana
  delle Scienze, Serie Terza,} {\bf Tomo XI} (1898) 267--352.

\bibitem{MR1900159}
L.~Bianchi, ``On the three-dimensional spaces which admit a continuous group of
  motions,'' {\em Gen. Relativity Gravitation} {\bf 33} (2001), no.~12,
  2171--2253. Translated from the Italian by R. Jantzen.

\bibitem{JMFGalilean}
J.~M. Figueroa-O'Farrill, ``Deformations of the {G}alilean algebra,'' {\em J.
  Math. Phys.} {\bf 30} (1989), no.~12, 2735--2739.

\bibitem{MR559927}
N.~Jacobson, {\em Lie algebras}.
\newblock Dover Publications, Inc., New York, 1979.
\newblock Republication of the 1962 original.

\bibitem{MR1889121}
W.~Rossmann, {\em Lie groups}, vol.~5 of {\em Oxford Graduate Texts in
  Mathematics}.
\newblock Oxford University Press, Oxford, 2002.
\newblock An introduction through linear groups.

\bibitem{MR1123589}
E.~Weimar-Woods, ``The three-dimensional real {L}ie algebras and their
  contractions,'' {\em J. Math. Phys.} {\bf 32} (1991), no.~8, 2028--2033.

\bibitem{Figueroa-OFarrill:2019sex}
J.~Figueroa-O'Farrill, R.~Grassie, and S.~Prohazka, ``{Geometry and BMS Lie
  algebras of spatially isotropic homogeneous spacetimes},'' {\em JHEP} {\bf
  08} (2019) 119, \href{http://arxiv.org/abs/1905.00034}{{\tt arXiv:1905.00034
  [hep-th]}}.

\bibitem{MR0059050}
K.~Nomizu, ``Invariant affine connections on homogeneous spaces,'' {\em Amer.
  J. Math.} {\bf 76} (1954) 33--65.

\bibitem{MR1393941}
S.~Kobayashi and K.~Nomizu, {\em Foundations of differential geometry. {V}ol.
  {II}}.
\newblock Wiley Classics Library. John Wiley \& Sons, Inc., New York, 1996.
\newblock Reprint of the 1969 original, A Wiley-Interscience Publication.

\bibitem{MR1631937}
V.~V. Gorbatsevich, A.~L. Onishchik, and E.~B. Vinberg, {\em Foundations of
  {L}ie theory and {L}ie transformation groups}.
\newblock Springer-Verlag, Berlin, 1997.
\newblock Translated from the Russian by A. Kozlowski.

\end{thebibliography}

\providecommand{\href}[2]{#2}\begingroup\raggedright\endgroup

\end{document}